\definecolor{bubbles}{rgb}{0.91, 1.0, 1.0}
\definecolor{maroon}{cmyk}{0,0.87,0.68,0.32}
\newtheorem{obsv}{Observation}
\newtheorem{fact}{Fact}
\newtheorem{defn}{Definition}
\newtheorem{lema}{Lemma}
\newtheorem{theo}{Theorem}
\newtheorem{prop}{Property}
\newtheorem{clam}{Claim}
\newmdenv[backgroundcolor=magenta!5,
topline=false,
bottomline=false,
rightline=false,
skipabove=\topsep,
skipbelow=\topsep
]{siderules}
\newcommand{\colb}[1]{{\color{blue}{{\it #1}}}}
\newcommand{\rbac}{RBAC}
\newcommand{\grbac}{GRBAC}
\newcommand{\rbrac}{RBRAC}
\newcommand{\rbcac}{RBCAC}
\newcommand{\grbrac}{GRBRAC}
\newcommand{\grbcac}{GRBCAC}
\newcommand{\cla}{${\cal A}$}
\newcommand{\clp}{${\cal P}$}
\title{Asymmetric Separation Problem for Bichromatic Point Set\footnote{A preliminary version of this paper appeared in the International Joint Conference on Theoretical Computer Science – Frontier of Algorithmic Wisdom (IJTCS-FAW) 2023.}}
\author{Sukanya Maji$^1$~\thanks{corresponding author} \and Supantha Pandit$^2$ \and Sanjib Sadhu$^1$}
\date{$^1$National Institute of Technology Durgapur,
India
\\ $^2$ Dhirubhai Ambani Institute of Information and Communication Technology, Gandhinagar, Gujarat, India}
\begin{document}
\maketitle
\begin{abstract}
We study the Generalized Red-Blue Annulus Cover problem for two sets of points, red ($R$) and blue ($B$), where each point $p \in R\cup B$ is associated with a positive penalty \clp$(p)$. The red points have non-covering penalties, and the blue points have covering penalties. The objective is to compute an annulus (either a rectangular or a circular) \cla~such that the value of the function \clp$({R}^{out}) +$\clp$({ B}^{in})$ is minimum, where ${R}^{out} \subseteq {R}$ is the set of red points not covered by \cla~and  ${B}^{in} \subseteq {B}$ is the set of blue points covered by \cla. We study the problem for various types of axis-parallel rectangular annulus and circular annulus in one and two dimensions. We also study a restricted version of the rectangular annulus cover problem, where the center of the annulus is constrained to lie on a given horizontal line $L$. We design a polynomial-time algorithm for each type of annulus.
\end{abstract}

\noindent \textbf{Keywords.} Annulus cover, Bichromatic point-set, Rectangular annulus, Circular annulus, Polynomial time algorithms.

\renewcommand*{\thefootnote}{\fnsymbol{footnote}}

%
%
%
\section{Introduction}
 In this paper, we study the asymmetric separation problem~\cite{armaselu2024} for a set of bichromatic points $P=R\cup B$, where $R$ and $B$ are the set of $n$ red points and $m$ blue points, respectively. In this problem, we compute a geometric annulus of  rectangular or circular shaped as the asymmetric separator for $R\cup B$ so that the annulus covers all the red points in $R$ and the minimum number of blue points in $B$. An annulus is a closed region bounded by two geometric objects of the same type, such as axis-parallel rectangles \cite{GluchshenkoHT09}, circles \cite{RoyZ92}, 
convex polygons \cite{BarequetG14}, and many more. The minimum width annulus problem is related to the problem of finding a shape that fits in a given set of points in the plane \cite{Abellanas03}.
Further, the largest empty annulus of different shapes has many potential applications \cite{books/sp/PreparataS85,BergCKO08}.
A point $p$ is said to be covered by an annulus ${\cal A}$ if $p$ lies within the closed region bounded by the two boundaries of ${\cal A}$.

In this paper, we study the Generalized Red-Blue Annulus Cover problem to compute an
annulus for a bichromatic point-set where each point has a positive penalty. In this problem, we take a bichromatic point-set as a set of red points and a set of blue points. A positive penalty, also called the non-covering (resp. covering) penalty is associated with each red (resp. blue) point, i.e., if a red point is not covered, then its penalty is counted, whereas if a blue point is covered, then its penalty is counted. The objective is to find an annulus $\cal A$ with minimum weight, where the weight of $\cal A$ is the total penalty of the red points not covered plus the total penalty of the blue points covered by $\cal A$. We define this problem formally as follows.
\begin{siderules}
    \textbf{\color{red!70!blue}{Generalized Red-Blue Annulus Cover (\grbac) problem. }} Given a set 
    of $n$ red points ${R}=\{q_1,q_2,\ldots,q_n\}$ and a set 
    of $m$ blue points ${B}=\{p_1,p_2,\ldots,p_m\}$, and a penalty function ${\cal P}: {R}\cup {B} \rightarrow \mathbb{R}^+$. The objective is to find an annulus (either rectangular or circular) $\cal A$ such that the quantity $\lambda = \sum_{q \in {R}^{out}} {\cal P}(q) + \sum_{p \in {B}^{in}} {\cal P}(p)$ is minimum, where ${R}^{out} \subseteq {R}$ is the set of points not covered by $\cal A$ and  ${B}^{in} \subseteq {B}$ is the set of points covered by $\cal A$. Note that $\lambda$ gives the measure of penalty of ${\cal A}$.
\end{siderules}

If the penalty of each red point is {\boldmath $\infty$} and that of each blue point is a {\bf unit}, then the \grbac~problem reduces to a problem of computing an annulus that covers all the red points and a minimum number of blue points. We refer to this problem as the \colb{Red-Blue Annulus Cover (\rbac)} problem.

We focus our study on computing either a rectangular or a circular shape annulus in both the \rbac~and \grbac~problems. We refer to the problem of computing a rectangular annulus in the \rbac~problem (resp. \grbac~problem) as the \rbrac~problem (resp. \grbrac~problem), whereas computing a circular annulus in the \rbac~problem (resp. \grbac~problem) is referred to as the \rbcac~problem (resp. \grbcac~problem).
There exist multiple annuli covering the same set of red and blue points that minimize $\lambda$.
 Throughout this paper, we report the one among these annuli, whose boundaries pass through some red points as an optimal solution. 
\newline\newline
{\bf Motivation:}
The oncologists remove the cancer cells while avoiding the healthy cells as much as possible during the surgery or radiation therapy of the cancer patients. We distinguish these two types of cells with two different colors. The excision of the tumor leads us to various geometric optimization problems, e.g., to determine the smallest circle enclosing the red points or that separates the red points from the blue points. 
Instead of computing the smallest enclosing circle, the computation of an annulus which covers all the tumor cells and a minimum number of healthy cells, leads us to less wastage of the healthy cells while performing surgery.
This motivates us to formulate a problem of covering the bichromatic point-set by an annulus that covers all the given red points while minimizing the number of blue points covered by it. Depending on the stage of the cancer cells and the importance of the healthy cells, we can assign different weights called penalties to each cancer cell and healthy cell so that the penalty incurred due to the elimination of healthy cells and keeping the cancer cells within the body during the surgery, is minimized. This motivates us to study the annulus cover problem by assigning penalties to each point in $R\cup B$.  

\section{Related Works}
\label{reldwrk}
Armaselu~\cite{armaselu2024} introduced the asymmetric separation problem, where the separator partitions the bichromatic point set into two subsets, one containing all the red points with the minimum number of blue points, and the other containing the remaining blue points.
They computed an axis-parallel maximum-area separating rectangle in $O(m \log m + n)$ time and $O(m + n)$ space, where $m$ and $n$ are the numbers of blue and red points, respectively. In addition, they considered the planar arbitrary orientation version and presented an algorithm that takes $O(m^3 + n \log n)$ time and $O(m + n)$ space.

The problem of computing the largest empty annulus~\cite{books/sp/PreparataS85,BergCKO08} is a well-studied problem, where a set $S$ of $n$ points is partitioned into two subsets of points - one lying completely outside the outer boundary and the other one lying completely inside the inner boundary of the annulus, i.e., no point in $S$ lies in the interior of the annulus. This problem is related to the maximum facility location problem for $n$ points, where the facility is a circumference. Díaz-Báñez et al.~\cite{Diaz-BanezHMRS02} computed the largest empty circular annulus in $O(n^3\log n)$-time and $O(n)$-space.
Bae et al.~\cite{BaeBM21} studied the maximum-width empty annulus problem, which computes square and rectangular annulus in $O(n^3$) and $O(n^2\log n)$ time, respectively. Both of these algorithms require $O(n)$ space. Also, Paul et al.~\cite{PaulS020} presented a combinatorial technique using two balanced binary trees that computes the maximum empty axis-parallel rectangular annulus that runs in $O(n \log n)$ time.
Abellanas et al. \cite{Abellanas03} presented a linear time algorithm
for the rectangular annulus problem such that the annulus covers a given set of points. They considered several variations of this problem. Gluchshenko et al. \cite{GluchshenkoHT09} presented an $O(n \log n)$-time optimal algorithm for the planar rectilinear square annulus. Also, it is known that a minimum width annulus in arbitrary orientations can be computed in $O(n^3 \log n)$ time for the square annulus \cite{Bae18} and in $O(n^2 \log n)$ time for the rectangular annulus \cite{MukherjeeMKD13}.

Barbay et al.~\cite{ipl/BarbayCNP14} presented a quadratic time algorithm for the {\it Maximum-Weight Box} problem that computes an axis-parallel box $D$ with maximum weight $W(D)$ for a given set $P$ of $n$ weighted points (the weight of each point $p\in P$ is either positive or negative) where $W(D)$ is given by the sum of the weights of the points lying inside $D$. Dobkin et al.~\cite{jcss/DobkinGM96} solved the {\it Maximum Bichromatic Discrepancy Box} problem, where a box that maximizes the absolute difference between the number of Red and Blue points it contains is computed, in $O(n^2\log n)$ time, $n$ being the total number of red and blue points. Eckstein et al.~\cite{coap/EcksteinHLNS02} has proved that the {\it Maximum Box} problem which computes a box containing the maximum number of blue points and no red points, is NP-hard if $d$, the dimension of the points is part of the input.  This problem in two dimensions was solved in $O((m+n)^2\log (m+n))$ time by Liu and Nediak~\cite{cccg/LiuN03}, and later it was improved by Backer et 
al.~\cite{latin/BackerK10} which needs $O((m+n)\log^3 (m+n))$ time. 
Bereg et al.~\cite{iccsa/BeregDZR15} studied the problem of computing a circle of the smallest radius such that the total weight of the points covered by the circle is maximized. This needs $O(m^2(m+n)\log (m+n))$ time and $O(m+n)$ space. Edelsbrunner and Preparata~\cite{EDELSBRUNNER1988218} computed a convex polygon with the fewest edges that separates two sets of $n$ points in the plane if it exists in $O(n \log n)$ time.\\
Abidha and Ashok~\cite{Abidha2020} studied the geometric separability problem for a bichromatic point-set $P = R \cup  B$ ($|P|=n$) of red ($R$) and blue ($B$) points. It computes a separator as an object of a certain type that separates $R$ and $B$. They computed different types of separators, e.g., (i) a non-uniform and uniform rectangular annulus of fixed orientation in $O(n)$ and $O(n \log n)$ time, (ii) a non-uniform rectangular annulus of arbitrary orientation in $O(n^2 \log n)$ time, (iii) a square annulus of fixed orientation in $O(n \log n)$ time, (iv) an orthogonal convex polygon in $O(n \log n)$ time. Bitner et al.~\cite{bitner2010minimum} computed the largest separating circle and the minimum separating circle for bichromatic point-set in $O(m (n + m) log (n + m))$ time and $O(mn)$ + $O^{*}(m^{1.5})$ time, respectively, where the $O^{*}()$ notation ignores some polylogarithmic factor.

The geometric version of a red-blue set cover problem is NP-Hard~\cite{carr1999red}, where for a given set of bichromatic (red and blue) points and a set of objects, a subset of objects is chosen that cover all the blue points and the minimum number of red points. Chan and Hu~\cite{CHAN2015380} designed PTAS for a weighted geometric set cover problem where the objects are 2D unit squares. Madireddy et al.~\cite{madireddy2021geometric} provided the APX-Hardness results of various special red-blue set cover problems. Shanjani~\cite{shanjani2020hardness} also showed that the Red-Blue Geometric Set Cover is APX-hard when the objects are axis-aligned rectangles. 
Abidha and Ashok~\cite{abidha2022red} studied the parameterized complexity of the generalized red-blue set cover problem.
Bereg et al.~\cite{bereg2012class} studied the class cover problem for axis-parallel rectangles and presented a $O(1)$-approximation algorithm.
\\

{\bf Outline of the paper}: We introduce the essential concepts, terminologies, and notations followed by our contributions in Section~\ref{prelim}. In this paper, we first study the one-dimensional versions of the $RBRAC$ and $GRBRAC$ problems in Sections~\ref{joco1} and~\ref{joco2} before proceeding to the corresponding problems in the two-dimensional versions, which are discussed in Sections~\ref{joco3} and~\ref{joco4}. In a one-dimensional problem, the red and blue points lie on the real line $L$. Section~\ref {joco5} describes the restricted version of the $GRBRAC$ problem in two dimensions. Then we discuss the circular annulus cover problem in Section~\ref{joco6}, and its two variations, the $RBCAC$ problem and $GRBCAC$ problems in Sections~\ref{joco7} and~\ref{joco8}.

\section{Preliminaries and Notations}
\label{prelim}
Throughout this paper, we study the axis-parallel rectangular annulus and circular annulus for the bichromatic point-set, and we use rectangular annulus to imply axis-parallel rectangular annulus unless otherwise stated. For any point $p$, we denote its $x$ and $y$-coordinates by $x(p)$ and $y(p)$, respectively. 

{\bf \it Notations used for rectangular annulus:} A rectangular annulus $\cal A$ is the closed region bounded by two parallel rectangles, one lying completely inside the other, and we denote these inner and outer rectangles of $\cal A$ by ${\cal R}_{in}$ and ${\cal R}_{out}$, respectively. 
We use $left({\cal R})$, $right({\cal R})$, $top({\cal R})$ and $bot({\cal R})$ to denote 
the left, right, top, and bottom sides of a rectangle $\cal R$, respectively. 
\begin{figure}[h]
    \centering
    \includegraphics[width=0.7\textwidth]{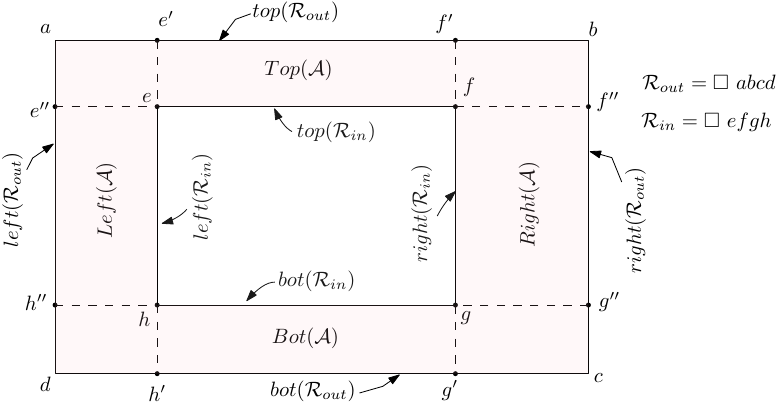}
    \caption{Annulus $\cal A$ with its different attributes.}
    \label{Attributes}
\end{figure}
We use $\Box abcd$ to represent an axis parallel rectangle with its four vertices at $a$, $b$, $c$, and $d$. A rectangular annulus ${\cal A}$ has four regions viz., $Left(\cal A)$, $Right(\cal A)$, $Top(\cal A)$ and $Bot(\cal A)$. The left sided region $Left(\cal A)$ is the closed rectangular region bounded by $left({\cal R}_{out})$, $top({\cal R}_{out})$, $bot({\cal R}_{out})$ and the line passing through $left({\cal R}_{in})$, e.g., $\Box ae'h'd$ in the Figure~\ref{Attributes}. Similarly, the other regions $Right(\cal A)$, $Top(\cal A)$ and $Bot(\cal A)$ of the annulus $\cal A$ are defined which are shown as $\Box f'bcg'$, $\Box abf''e''$ and $\Box h''g''cd$, respectively in the Figure~\ref{Attributes}.

The  two rectangles ${\cal R}_{out}$ and ${\cal R}_{in}$ of a rectangular annulus $\cal A$ define its four widths: $w_\ell$ (left width), $w_r$ (right width), $w_t$ (top width), and $w_b$ (bottom width). The horizontal distance between the left (resp. right) boundaries of ${\cal R}_{in}$ and ${\cal R}_{out}$ defines the left width $w_\ell$ (resp. right width $w_r$). Similarly, the vertical distance between the top (resp. bottom) sides of ${\cal R}_{in}$ and ${\cal R}_{out}$ defines the top width $w_t$ (resp. bottom width $w_b$). 
We say a rectangular annulus $\cal A$ to be \colb{uniform} if all of its four widths are the same i.e., $w_\ell = w_r = w_t$ = $w_b$ (see Figure~\ref{fig:uni}). Otherwise, we say that $\cal A$ is \colb{non-uniform} (see Figure~\ref{fig:non-uniform-non-con},~\ref{fig:non-uni}). It is observed that in a uniform annulus $\cal A$, its two rectangles ${\cal R}_{in}$ and ${\cal R}_{out}$ must be concentric. Further, in a \colb{non-uniform concentric annulus},  $w_\ell = w_r$  and $w_t = w_b$ (see Figure~\ref{fig:non-uni}), whereas all the widths are different in a \colb{non-uniform non-concentric annulus}. In the {\color{blue} non-uniform concentric annulus} $\cal A$, it's left (resp. top) width $w_\ell$ (resp. $w_t$) and right (resp. bottom) width $w_r$ (resp. $w_b$) are the same, and we refer to this width as the {\color{blue} horizontal} (resp. {\color{blue} vertical}) {\color{blue} width}, which is denoted by $w_h$ (resp. $w_v$).\\
\begin{figure}[t] 
 \captionsetup{belowskip=-5pt}
\begin{center}
\subfigure[ ]{\includegraphics[width=.3\textwidth]{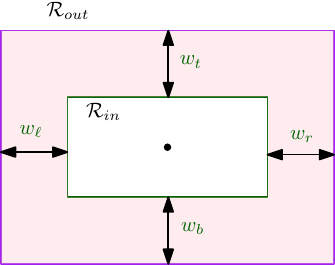}
\label{fig:uni}
}
\hfill
\subfigure[ ]{\includegraphics[width=.3\textwidth]{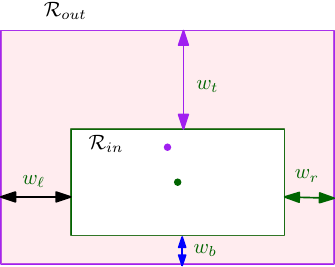}
\label{fig:non-uniform-non-con}
}
\hfill
\subfigure[ ]{\includegraphics[width=.3\textwidth]{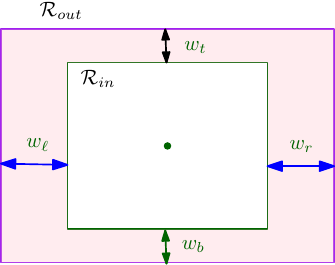}
\label{fig:non-uni}
}
\end{center}
\vspace{-0.15in}
\caption{
(a) Uniform annulus, (b) Non-uniform annulus, and (c) Non-uniform concentric annulus. Points lying inside the ${\cal R}_{in}$ depict the centers of the rectangles.}
\label{fig:types-of-annuli}
\end{figure}

{\bf \it Notations used for circular annulus:} 
A circular annulus is made up of two concentric circles of different radii.
The width of a circular annulus $\cal A$ is the difference between the radii of its outer and inner circles, which we denote by ${\cal C}_{out}(\cal A)$ and ${\cal C}_{in}(\cal A)$, respectively.
We use $VD(R)$ and $FVD(R)$ to denote the Voronoi diagram and the farthest-point Voronoi diagram of the point-set $R$, respectively. 
We use $VD_i$ and $VD_{i,j}$ to denote the  Voronoi diagrams for the point-set $R\cup\{p_i\}$ and $R\cup\{p_i,p_j\}$ where $p_i,p_j\in B$, respectively. So, $VD_{i}=VD(R\cup\{p_i\})$ and $VD_{i,j}=VD(R\cup\{p_i,p_j\})$. 
Similarly, $FVD_i$ and $FVD_{i,j}$ are used to denote the farthest-point Voronoi diagrams for the point-set $R\cup\{p_i\}$ and $R\cup\{p_i,p_j\}$ where $p_i,p_j\in B$, respectively i.e., $FVD_{i}=FVD(R\cup\{p_i\})$ and $FVD_{i,j}=FVD(R\cup\{p_i,p_j\})$.\\

{\bf Our Contributions:} To the best of our knowledge, there exists no works that are exactly similar to the problems studied in this paper.  Table \ref{chap4tab1} shows our contributions. We need to mention that, each solution needs $O((m+n)\log(m+n))$ preprocessing time and $O(m+n)$ space, where $m$ and $n$ are the numbers of blue and red points, respectively.\\

\begingroup
\renewcommand{\arraystretch}{1.2} 
\begin{table}[ht]
\centering 
\begin{tabular}{|l|c|c|l|l|}
\hline
\rowcolor{bubbles}
{\bf Separator} & \multicolumn{1}{l|}{\bf Input} & {\bf Type} & {\bf Annulus} & {\bf Time complexity} \\ \hline
\hline
\multirow{10}{*}{Rectangular}        & \multirow{4}{*}{$(R\cup B)$ on ${\mathbb R}^1$}   & \multirow{2}{*}{$RBAC$}  & Non-uniform   & $O(m+n)$ \\ \cline{4-5} 
 \multirow{10}{*}{Annulus}      &                       &                    & Uniform       & $O(m+n)$     \\ \cline{3-5} 
      &              & \multirow{2}{*}{$GRBAC$}    & Non-uniform  & $O(m+n)$       \\ \cline{4-5} 
       &              &                     & Uniform  & $O(n^2(m+n))$        \\
\cline{2-5} 
  & & \multirow{5}{*}{$RBAC$}    & Non-uniform  & $O(n(m+n))$  \\ 
 &                        &          & Non-Concentric  & \\  \cline{4-5} 

      &         &                   & Non-uniform   & $O(n(m+n)^2)$    \\ 
 &                        &    &Concentric         & \\ \cline{4-5} 

     & \multirow{2}{*}{$(R\cup B)$ on ${\mathbb R}^2$}              &               & Uniform   & $O(n(m+n)^2)$        \\  \cline{3-5} 
 &         &       & Non-uniform   & $O(n^4(m+n)^2)$ \\ 
 &                        & \multirow{3}{*}{$GRBAC$}         &  Non-Concentric  & \\  \cline{4-5} 

     &      &                   & Non-uniform  & $O(n^4(m+n)^2)$    \\
 &                        &            & Concentric   & \\  \cline{4-5} 

       &               &          & Uniform   & $O(n^2(m+n)^3)$   \\ \cline{1-3} \cline{4-5} 
\multirow{3}{*}{Rectangular}
         &                                &        &  Non-uniform  & $O(n^5(m+n))$        \\ 
 &                        &     \multirow{3}{*}{$GRBAC$}    &  Non-Concentric  & \\  \cline{4-5} 

\multirow{2}{*}{Annulus centered}        &   \multirow{1}{*}{$(R\cup B)$ on ${\mathbb R}^2$}                             &          &  Non-uniform  & $O(n^3(m+n)^2)$        \\  
 &                        &         &  Concentric  & \\  \cline{4-5} 

\multirow{1}{*}{on a given line}         &                                &        & Uniform   & $O(n^2(m+n)^2)$        \\  \cline{1-3} \cline{3-5}
\multirow{1}{*}{Circular}     &   \multirow{2}{*}{$(R\cup B)$ on ${\mathbb R}^2$}          &     $RBAC$                       & Uniform  &     $O(m^2n(m+n))$        \\ 
 \cline{3-5}
\multirow{1}{*}{Annulus}                &   & $GRBAC$   & Uniform   &$O(n^2(m+n)^3)$\\\hline
\end{tabular}
\vspace{.2cm}
\caption{Results obtained for asymmetric separation of $R\cup B$ with annulus.}
\label{chap4tab1}
\end{table}
\endgroup

\subsection{Remarks}
\label{remarks}
There are studies related to the $GRBAC$ problem in the literature, e.g., {\it Maximum-Weight Box} problem~\cite{ipl/BarbayCNP14}, and the problem of computing a circle with maximum weight~\cite{iccsa/BeregDZR15} (see Section~\ref{reldwrk}). However, those studies are on rectangles and circles instead of annuli. The result of the problem on the rectangle (resp. circle) cannot imply the solution to our $GRBRAC$ (resp. $GRBCAC$) problem.
We can reduce the $GRBAC$ problem to Maximum Weight Annulus problem as follows, although it helps to solve only the \colb{non-uniform} \colb{non-concentric} rectangular annulus in $GRBRAC$ problem.\\
\noindent {\bf Reduction:}  The penalty $\lambda$ of an annulus $\cal A$ is given by  

$$\lambda = 
    \displaystyle\sum_{q \in {R}^{out}} {\cal P}(q)
     + \displaystyle\sum_{p \in {B}^{in}} {\cal P}(p)={\cal P}(R)-\displaystyle\sum_{q \in {R}^{in}} {\cal P}(q)
     + \displaystyle\sum_{p \in {B}^{in}} {\cal P}(p)$$

\noindent where $R^{in}$ (resp. $B^{in}$) is the set of red (resp. blue) points lying inside the annulus $\cal A$. If we assign ${\cal P}(q)$ as the weight $w(q)$ of a red point $q$ and (-${\cal P}(p))$ as the weight $w(p)$ of a blue point $p$, then the above equation reduces to $\lambda = {\cal P}(R)-(\displaystyle\sum_{q \in {R}^{in}} {w}(q)
     + \displaystyle\sum_{p \in {B}^{in}} {w}(p))$. Since ${\cal P}(R)$ is constant, this equation implies that to minimize $\lambda$, we need to find an annulus that maximizes the total weight of the points inside it. In other words, $GRBAC$ problem reduces to a Maximum-Weight Annulus problem.

 However, to the best of our knowledge, there exists no prior work on the Maximum-Weight Annulus problem. 
For a given set of weighted points, we can compute a rectangle with maximum weight~\cite{ipl/BarbayCNP14} and a circle with maximum weight~\cite{iccsa/BeregDZR15}. However, these studies are based on rectangle and circle instead of the annulus, and the result of these problems cannot be applied to obtain the solution of our $GRBRAC$ or $GRBCAC$ problem, except for only one version (generalized version of the \colb{non-uniform} \colb{non-concentric} rectangular annulus) of $GRBRAC$ problem which is explained in the following paragraph.

In {\bf generalized version} of the {\bf \colb{non-uniform} \colb{non-concentric} rectangular annulus}, if we fix the outer rectangle $R_{out}$ (which passes through four red points), then we can compute an inner rectangle $R_{in}$ with maximum weights~\cite{iccsa/BeregDZR15} for the points lying inside $R_{out}$ in $O((m+n)^2)$ time so that the penalty of the annulus defined by $R_{out}$ and $R_{in}$ is minimized. Now we can iterate this procedure over all possible $R_{out}$ to obtain the optimal annulus (minimum penalty) in $O(n^4(m+n)^2)$ time, since the outer rectangle $R_{out}$ is defined by at most $4$ blue points. However, this procedure does not work for the other variations of the $GRBRAC$ problem, where the two rectangles $R_{out}$ and $R_{in}$ of the annulus $\cal A$ are concentric.

In the generalized version of {\bf circular annulus} ($GRBCAC$ problem), if we fix the outer circle $C_{out}$, then we can compute an inner circle $C_{in}$ with maximum weight~\cite{iccsa/BeregDZR15} for the points inside $C_{out}$, so that the penalty of the region bounded by $C_{out}$ and $C_{in}$ is minimized. However, it is not necessary that $C_{in}$ is concentric with $C_{out}$, and in this case, $C_{out}$ and $C_{in}$ do not form a circular annulus (as per the definition of circular annulus). Thus, this procedure also cannot generate the optimal solution for the generalized version of {\bf circular annulus} ($GRBCAC$ problem). 

So, in our study, we use the aforesaid reduction only for the computation of the \colb{non-uniform} \colb{non-concentric} rectangular annulus in the $GRBRAC$ problem, where there are no restrictions for the positions of the centers of ${\cal R}_{out}$ and ${\cal R}_{in}$.


\section{Rectangular Annulus}
\label{joco0}
\subsection{One-Dimensional Red-Blue Annulus Cover Problem}
\label{joco1}
We consider the rectangular annulus cover problem in one dimension where the red-blue point-set lies on a given straight line $L$ (assuming $L$ is horizontal). We also assume that the centers of the two rectangles of the annulus ${\cal A}$ lie on the line $L$. It is to be noted that, in one dimension, the rectangular annulus $\cal A$ defines two non-overlapping intervals, say the left interval ${\cal I}_L = [L_o,L_i]$ and the right interval ${\cal I}_R= [R_i,R_o]$, where $L_o$ (resp. $R_o$) and $L_i$ (resp. $R_i$) are the left (resp. right) outer and the left (resp. right) inner endpoints (see Figure~\ref{fig:interval}). We use $p\in {\cal I}$ to denote that the point $p$ is covered by the interval ${\cal I}$. Length of an interval ${\cal I}$ is denoted by $|{\cal I}|$. The distance between any two points $a$ and $b$ is $|ab|$. 
  
\begin{figure}[ht]
 \captionsetup{belowskip=-5pt}
    \centering
    \includegraphics[scale=0.9]{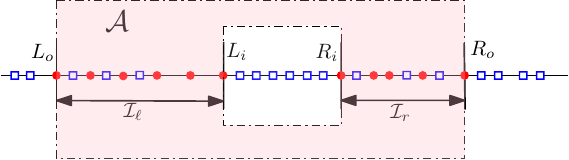}
    \caption{Rectangular annulus $\cal A$ defines two non-overlapping intervals ${\cal I}_L$ and ${\cal I}_R$.}
    \label{fig:interval}
\end{figure}
In the rest of this section, a rectangular annulus ${\cal A}$ indicates a pair of non-overlapping intervals ${\cal I}_L$ and ${\cal I}_R$. We say a pair of intervals to be {\bf uniform} if they are equal in length; otherwise, they are {\bf non-uniform}.
The ${\cal I}_L$ and ${\cal I}_R$ are said to be optimal if their union (${\cal I}_L \cup {\cal I}_R$) covers all the red points and a minimum number of blue points in the $RBRAC$ problem or have a minimum penalty in $GRBRAC$ problem, among all possible pairs of intervals. 
We discuss four different variations of this problem based on the choice of penalty for the red and blue points and the length of the intervals.
As a pre-processing task, we sort and store all the blue points ($p_1, p_2, \ldots, p_{m} \in B$) as well as all the red points ($q_1, q_2, \ldots, q_{n} \in R$) in two separate arrays. 
\begin{obsv}
 \label{each}
If the intervals are {\bf non-uniform}, then each endpoint of the optimal pair (${\cal I}_L$, ${\cal I}_R$) must be on some red point in ${R}$, provided each interval (${\cal I}_L$ and ${\cal I}_R$) covers at least two red points. As a case of degeneracy, only one red point, say $q_i$, may exist in an interval, say ${\cal I}_R$, of very small width $\epsilon$ ($>0$) whose one endpoint coincides with $q_i$.
\end{obsv}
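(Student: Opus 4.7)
The plan is to prove the observation by a standard shrinking/exchange argument, treating each of the four endpoints $L_o$, $L_i$, $R_i$, $R_o$ of the pair $({\cal I}_L,{\cal I}_R)$ in turn. The key point is that because the intervals are non-uniform, there is no length constraint relating ${\cal I}_L$ and ${\cal I}_R$, so each endpoint may be moved individually without forcing the opposite endpoint of the same (or the other) interval to compensate.

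First I would handle the outer endpoints $L_o$ and $R_o$. Suppose $L_o$ does not coincide with any red point. Let $q_\ell\in R$ be the leftmost red point covered by ${\cal I}_L$; since ${\cal I}_L$ covers at least two red points, $q_\ell$ exists and $L_o\le x(q_\ell)$, and because $L_o$ lies on no red point the inequality is strict. Slide $L_o$ rightward to $x(q_\ell)$, producing an interval ${\cal I}_L'\subsetneq{\cal I}_L$. No red point in ${\cal I}_L$ leaves (everything from $q_\ell$ rightward is still covered), and the blue points covered by ${\cal I}_L'$ form a subset of those covered by ${\cal I}_L$. Hence the penalty $\lambda$ of the new pair is no larger than that of the original, and its left outer endpoint now sits on a red point. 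An entirely analogous sweep moves $R_o$ leftward to coincide with the rightmost red point of ${\cal I}_R$.

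Next I would treat the inner endpoints $L_i$ and $R_i$ symmetrically. Slide $L_i$ leftward to coincide with the rightmost red point inside ${\cal I}_L$; this point is distinct from $q_\ell$ precisely because ${\cal I}_L$ contains at least two red points, so the two endpoint adjustments on ${\cal I}_L$ do not cross. Slide $R_i$ rightward in the same way. As before, each adjustment preserves the red-point coverage, weakly shrinks the blue-point coverage, and leaves the endpoint on a red point. Applying the four local adjustments yields an optimal pair whose every endpoint lies on some point of $R$, establishing the main claim.

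Finally I would address the degeneracy. If some interval, say ${\cal I}_R$, contains only a single red point $q_i$, the inner-endpoint argument breaks because moving either endpoint past $q_i$ uncovers it and incurs ${\cal P}(q_i)$. Instead, I would collapse ${\cal I}_R$ to an interval of width $\epsilon>0$ with one endpoint at $x(q_i)$, choosing $\epsilon$ small enough that the open segment of length $\epsilon$ adjacent to $q_i$ contains no blue points (such an $\epsilon$ exists since $B$ is finite). This shrunken interval still covers $q_i$, covers no blue point that was not already covered by the original ${\cal I}_R$, and matches the degenerate form stated in the observation. The main subtlety of the whole argument is this degenerate case together with verifying that in the generic case the two endpoint slides on a single interval do not collide, both of which are controlled by the ``at least two red points'' hypothesis.
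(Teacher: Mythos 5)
Your proposal is correct. The paper actually states this observation without any accompanying justification (unlike its Observation 2, for which it gives a shifting/anchoring argument), so there is no written proof to compare against; your four-endpoint shrinking argument is exactly the standard exchange argument the paper implicitly relies on, and it is the same style of reasoning the authors use to justify their Observation 2. You correctly isolate the two points that need care --- that the ``at least two red points'' hypothesis keeps the outer and inner slides on a single interval from colliding, and that the one-red-point case must instead be collapsed to an $\epsilon$-width interval avoiding blue points --- and both are handled soundly, including in the penalty (GRBRAC) setting where shrinking still weakly decreases the objective because no covered red point is lost and the covered blue set only shrinks.
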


\begin{obsv}
\label{blue_bound}
In case of {\bf uniform} intervals ${\cal I}_L$ and ${\cal I}_R$, the endpoints of one of them must be any two red points in $R$.
\end{obsv}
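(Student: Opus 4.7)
The plan is to show that in any optimal uniform pair $({\cal I}_L, {\cal I}_R)$, one can always be chosen so that one of the two intervals has both endpoints coinciding with red points.

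First I would fix the partition of red points induced by an optimal pair: since the two intervals are non-overlapping and ${\cal I}_L$ lies to the left of ${\cal I}_R$ on $L$, the red points covered by ${\cal I}_L$ form a prefix $q_1, \ldots, q_k$ and those covered by ${\cal I}_R$ form the suffix $q_{k+1}, \ldots, q_n$, for some $k \in \{0, 1, \ldots, n\}$. For this partition, ${\cal I}_L$ must contain the red span $[q_1, q_k]$ of length $q_k - q_1$, and ${\cal I}_R$ must contain $[q_{k+1}, q_n]$ of length $q_n - q_{k+1}$. Since both intervals share the same length $w$ in the uniform case, necessarily $w \geq w_k^{*} := \max(q_k - q_1,\, q_n - q_{k+1})$.

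The key step is a monotonicity argument showing that $w = w_k^{*}$ is always an optimal choice. For any $w > w_k^{*}$, every length-$w$ interval containing $[q_1, q_k]$ is a superset of $[q_1, q_k]$ and therefore covers at least as many blue points as $[q_1, q_k]$ itself; similarly, every length-$w$ admissible interval around $[q_{k+1}, q_n]$ contains a length-$w_k^{*}$ admissible subinterval (obtained by shrinking inward from a suitably chosen endpoint), and so covers at least as many blue points as some length-$w_k^{*}$ choice of ${\cal I}_R$. Hence there is an optimum with $w = w_k^{*}$, and exactly the same argument transfers to the penalty-weighted sum used in the $GRBRAC$ variant.

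Having fixed $w = w_k^{*}$, I would conclude by case analysis. Without loss of generality suppose $q_k - q_1 \geq q_n - q_{k+1}$, so that $w_k^{*} = q_k - q_1$; then ${\cal I}_L$ has length exactly $q_k - q_1$ while containing $[q_1, q_k]$, which forces ${\cal I}_L = [q_1, q_k]$, both of whose endpoints are red points, proving the observation. The symmetric case $q_n - q_{k+1} > q_k - q_1$ forces ${\cal I}_R = [q_{k+1}, q_n]$ instead.

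The main obstacle I anticipate is handling the boundary partitions $k \in \{0, n\}$, in which one interval contains no red point and can slide freely; there one needs to verify that the non-empty red-covering interval is still forced to equal $[q_1, q_n]$ and hence has red endpoints. A secondary subtlety is making the inward-shrinking step of the monotonicity argument fully explicit, i.e., showing that for every length-$w$ admissible ${\cal I}_R$ with $w > w_k^{*}$ one can exhibit an admissible length-$w_k^{*}$ subinterval, but this reduces to a routine endpoint-selection based on which side has positive slack.
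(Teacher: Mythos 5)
Your proof is correct. The underlying mechanism is the same as the paper's justification --- shrinking the common width of the two intervals never increases the number of covered blue points (or the penalty), so one can shrink until red points pin one interval --- but you organize the argument quite differently. The paper works by contradiction with a local, anchored perturbation: it fixes the red endpoint $L_o$ of ${\cal I}_L$ and one endpoint of ${\cal I}_R$, shrinks both intervals by equal amounts until a red point meets $L_i$ or $R_o$, then re-anchors and shrinks again. You instead fix the prefix/suffix partition of $R$ induced by an optimal pair, bound the common width below by $w_k^{*}=\max(q_k-q_1,\;q_n-q_{k+1})$, prove a monotonicity step (every admissible length-$w$ interval contains an admissible length-$w_k^{*}$ subinterval, hence covers at least as many blue points), and conclude that at width $w_k^{*}$ the interval realizing the maximum is forced to coincide with its red span. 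Your route buys three things: it identifies \emph{which} interval ends up with red endpoints (the one with the larger red span) rather than merely asserting existence; it sidesteps the paper's slightly delicate bookkeeping about which endpoint is anchored at each stage and whether that endpoint is itself red; and the monotonicity step transfers verbatim to the penalty version used in the GRBRACU-1D section, where the paper invokes this observation without further comment. The paper's route is shorter and needs no named partition. The two caveats you flag --- the boundary partitions $k\in\{0,n\}$ and spelling out the inward-shrinking choice of subinterval --- are genuine but routine, and the residual degeneracy (a single red point forcing a zero-width span) is the same one the paper handles elsewhere with $\epsilon$-width intervals.
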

{\bf Justification of Observation 2:}
Suppose for the sake of contradiction, only one endpoint, say $L_o$ of the interval ${\cal I}_L$ of the optimal pair (${\cal I}_L$ and ${\cal I}_R$) is a red point in $R$. If the endpoints of ${\cal I}_R$ do not coincide with any point in $R \cup B$, then we shift ${\cal I}_R$ so that its one endpoint, say ${R}_i$ lies on a blue point without changing the penalty of ${\cal I}_R$. 
Now anchoring the ${\cal I}_L$ and ${\cal I}_R$ at their endpoint $L_o$ and $R_i$, respectively, we decrease both the ${\cal I}_L$ and ${\cal I}_R$ by equal length until a red point coincides with either $L_i$ or $R_o$. Shortening of ${\cal I}_L$ and ${\cal I}_R$ may cause a reduction in their penalty if a blue point is eliminated from ${\cal I}_L$ or ${\cal I}_R$.  If $L_i$ coincides with a red point, then we are done; otherwise, anchoring ${\cal I}_R$ at $R_o$, we further decrease ${\cal I}_L$ and ${\cal I}_R$ until a red point coincides with the endpoint $R_i$ or $L_i$, which contradicts our assumption, and hence the statement is proved. 

\subsubsection{Non-uniform Annulus (RBRACN-1D)}
\label{non_uniform}
This problem computes two non-overlapping non-uniform intervals ${\cal I}_L$ and ${\cal I}_R$ whose union covers all the red points in $R$ ($|R|$=$n$) and a minimum number of blue points in $B$ ($|B|$=$m$). 

\begin{obsv}
\label{conred}    
The two endpoints $L_i$ and $R_i$ must be any two consecutive red points $q_i$ and $q_{i+1}$, $2\leq i \leq {n-2}$.
\end{obsv}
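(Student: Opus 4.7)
The plan is to derive this observation as a direct consequence of Observation~\ref{each} together with the two structural constraints defining a feasible solution: the intervals ${\cal I}_L$ and ${\cal I}_R$ are non-overlapping, and their union must cover every red point in $R$.

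First I would argue that there exists a unique index $i$ at which the partition of $R$ between ${\cal I}_L$ and ${\cal I}_R$ occurs. Since the red points are sorted as $q_1 < q_2 < \dots < q_n$ on $L$, and since ${\cal I}_L$ lies entirely to the left of ${\cal I}_R$ with a non-empty gap $(L_i, R_i)$ between them, the red points covered by ${\cal I}_L$ form a prefix $q_1,\dots,q_i$ and those covered by ${\cal I}_R$ form the complementary suffix $q_{i+1},\dots,q_n$. The gap $(L_i, R_i)$ therefore contains no red point, so $q_i$ and $q_{i+1}$ are consecutive reds straddling it.

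Next I would pin down $L_i$ and $R_i$ to these two specific points. By Observation~\ref{each}, in the non-uniform setting every endpoint of the optimal pair must coincide with a red point, provided the relevant interval contains at least two reds. Since $L_i$ is the right endpoint of ${\cal I}_L$ and the rightmost red in ${\cal I}_L$ is $q_i$, while the next red $q_{i+1}$ lies strictly to the right of $L_i$, the only red point $L_i$ can equal is $q_i$; any value of $L_i$ strictly between $q_i$ and $q_{i+1}$ would contradict Observation~\ref{each}. The symmetric argument forces $R_i = q_{i+1}$.

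Finally I would justify the index range $2 \leq i \leq n-2$. The premise of Observation~\ref{each} excluding the degenerate case requires each of ${\cal I}_L$ and ${\cal I}_R$ to contain at least two red points, which translates to $i \geq 2$ for ${\cal I}_L = \{q_1,\dots,q_i\}$ and $n-i \geq 2$ for ${\cal I}_R = \{q_{i+1},\dots,q_n\}$, i.e., $i \leq n-2$. The main (and only mild) subtlety is to be explicit that the degenerate cases $i \in \{1, n-1\}$ are being set aside here precisely because they fall outside the hypothesis of Observation~\ref{each}; in those boundary cases, one of the intervals collapses to a near-point interval at a single red, and must be handled separately in the algorithm rather than via this structural observation.
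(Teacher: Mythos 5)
Your proof is correct: the paper states this observation without any accompanying justification, and your derivation---prefix/suffix partition of the reds forced by non-overlap and full coverage, then Observation~\ref{each} pinning $L_i$ and $R_i$ to the consecutive reds $q_i$ and $q_{i+1}$ straddling the gap, with the index range $2\leq i\leq n-2$ coming from the two-reds-per-interval non-degeneracy hypothesis---is exactly the reasoning the authors implicitly rely on. Your explicit remark that $i\in\{1,n-1\}$ falls outside the hypothesis of Observation~\ref{each} also matches the paper's separate handling of the degeneracy case in the algorithm.
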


{\bf Algorithm:}\\[0.1in]
We take the left endpoint $L_o$ of ${\cal I}_L$ at $q_1$ and the right endpoint $R_o$ of ${\cal I}_R$ at $q_{n}$ so that none of the red points lie to the left of ${\cal I}_L$ and right of ${\cal I}_R$ (see Observation~\ref{each}). Now we need to choose the other two endpoints $L_i$ and $R_i$ of ${\cal I}_L$ and ${\cal I}_R$ at some red points, respectively, so that ${\cal I}_L\cup{\cal I}_R$ covers all the red points in $R$ and minimum number of blue points in $B$. In a linear scan, we compute the number of blue points lying inside each pair of consecutive red points. We select such a pair ($q_i,q_{i+1}$) that contains a maximum number of blue points and choose these red points $q_i$ and $q_{i+1}$ as the right endpoint $L_i$ of ${\cal I}_L$ and left endpoint $R_i$ of ${\cal I}_R$, respectively (see Observation~\ref{conred}). We report these two intervals ${\cal I}_L$ and ${\cal I}_R$ as a solution. \\

{\bf Degeneracy Case:} Only one red point lies inside ${\cal I}_L$ or $ {\cal I}_R$. \\

We handle the degeneracy case by counting the number of blue points lying inside the pairs ($q_1$, $q_2$) and ($q_{n-1}$, $q_n$). If one of such pairs, say ($q_1$, $q_2$), contains the maximum number of blue points among all possible consecutive pairs $(q_i,q_{i+1})$, for $1 \leq i \leq (n-1)$, then we report ($q_1$, $q_1+\epsilon$) and ($q_2$, $q_n$) as the intervals ${\cal I}_L$ and ${\cal I}_R$, respectively.
If the other pair $(q_{n-1},q_n)$ contains the maximum number of blue points, then we report ($q_1$, $q_{n-1}$) and ($q_n$, $q_n+\epsilon'$) as the two intervals. Note that we choose $\epsilon$ (resp. $\epsilon'$) such that the pair $(q_1,q_1+\epsilon)$ (resp. $(q_n,q_n+\epsilon')$) does not contain any blue points $\in B$.\\

\textbf{Proof of correctness:}  The correctness proof of our algorithm is based on the following
two claims.
\begin{clam}
    Our solution covers all the red points $\in R$.
\end{clam}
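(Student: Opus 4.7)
The plan is to argue directly from the construction of the two intervals. Since the red points have been sorted in the preprocessing step, I can write $R=\{q_1,q_2,\ldots,q_n\}$ with $x(q_1)\le x(q_2)\le \cdots \le x(q_n)$. By construction, the algorithm anchors the outer endpoints at the two extreme red points, namely $L_o=q_1$ and $R_o=q_n$, and picks the inner endpoints $L_i=q_i$ and $R_i=q_{i+1}$ for some $1\le i\le n-1$. Hence ${\cal I}_L=[q_1,q_i]$ and ${\cal I}_R=[q_{i+1},q_n]$.

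Given this, the first step is to observe that ${\cal I}_L$ contains every red point $q_j$ with $1\le j\le i$ (since $x(q_1)\le x(q_j)\le x(q_i)$), and ${\cal I}_R$ contains every red point $q_j$ with $i+1\le j\le n$ (since $x(q_{i+1})\le x(q_j)\le x(q_n)$). Consequently, ${\cal I}_L\cup {\cal I}_R$ covers $\{q_1,q_2,\ldots,q_n\}=R$, which is exactly the claim.

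The second step handles the degenerate cases separately. In the case where the algorithm reports ${\cal I}_L=[q_1,q_1+\epsilon]$ and ${\cal I}_R=[q_2,q_n]$, the interval ${\cal I}_L$ covers $q_1$ and ${\cal I}_R$ covers $q_2,\ldots,q_n$ by the same sorting argument; the symmetric case ${\cal I}_L=[q_1,q_{n-1}]$, ${\cal I}_R=[q_n,q_n+\epsilon']$ is handled identically. In both situations, every red point in $R$ lies in ${\cal I}_L\cup {\cal I}_R$.

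I do not expect any real obstacle here: the whole claim is essentially a bookkeeping statement about the sorted order, and the only subtlety is to make sure the degenerate endpoints $q_1+\epsilon$ and $q_n+\epsilon'$ are chosen so that they do not alter the set of covered red points, which is immediate because $\epsilon,\epsilon'>0$ are taken small enough that no new red point lies in $(q_1,q_1+\epsilon]$ or $(q_n,q_n+\epsilon']$ (indeed $q_1$ and $q_n$ are the extremal red points).
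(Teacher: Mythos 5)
Your proof is correct and follows essentially the same route as the paper's: both argue directly from the construction that the outer endpoints sit at $q_1$ and $q_n$ and the inner endpoints at consecutive red points $q_i$ and $q_{i+1}$, so the sorted order forces every red point into ${\cal I}_L\cup{\cal I}_R$. You merely spell out the sorting argument and the degenerate cases more explicitly than the paper does.
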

\begin{proof}
    The left endpoint of ${\cal I}_L$ and the right endpoint of ${\cal I}_R$ are the first and last red points $q_1$ and $q_n$, respectively. The right endpoint of ${\cal I}_L$ and the left endpoint of ${\cal I}_R$ are the two consecutive red points, say $q_i$ and $q_{i+1}$. This proves the result. 
\end{proof}

\newpage
\begin{clam}
    Our solution covers the minimum number of blue points $\in B$.
\end{clam}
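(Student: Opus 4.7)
The plan is to show that every feasible pair of non-overlapping intervals $({\cal I}_L,{\cal I}_R)$ covering all of $R$ contains at least as many blue points as the pair returned by the algorithm. To this end, let $N_B$ denote the number of blue points lying in the closed segment $[q_1,q_n]$; this quantity depends only on the input, not on the choice of intervals, so the strategy is to rewrite the covered-blue count in terms of $N_B$ minus a maximizable gap-count, and then verify that the algorithm maximizes the gap-count.

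First I would reduce to a canonical form for the outer endpoints. In any feasible pair, $L_o\le q_1$ and $R_o\ge q_n$, since otherwise $q_1$ or $q_n$ would be uncovered. Shifting $L_o$ rightward to $q_1$ (resp. $R_o$ leftward to $q_n$) can only weakly decrease the covered-blue count, so we may assume $L_o=q_1$ and $R_o=q_n$ at optimum. By Observation~\ref{conred}, the two remaining endpoints satisfy $L_i=q_i$ and $R_i=q_{i+1}$ for some consecutive red pair. For such a configuration, the covered blue points are exactly those in $[q_1,q_i]\cup[q_{i+1},q_n]$, and their count equals $N_B$ minus the number of blue points strictly inside the open gap $(q_i,q_{i+1})$.

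The remaining step is a one-line optimization: minimizing the covered-blue count over $i\in\{1,\dots,n-1\}$ is equivalent to maximizing the number of blue points strictly inside $(q_i,q_{i+1})$, which is precisely what the algorithm does in its linear scan. Hence the index $i$ it selects yields a feasible pair with the minimum number of covered blue points, establishing the claim. The main obstacle is the degeneracy in which the maximizing gap is the first or the last one, so that one of the two intervals must contain only a single red point; I would handle this by verifying that the replacement intervals $(q_1,q_1+\epsilon)$ or $(q_n,q_n+\epsilon')$ constructed in the algorithm, with $\epsilon,\epsilon'>0$ chosen smaller than the distance to the nearest blue point, yield a valid pair of non-overlapping intervals whose covered-blue count is identical to the count in the non-degenerate analysis, and therefore inherit optimality.
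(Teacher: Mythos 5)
Your proposal is correct and follows essentially the same reasoning as the paper's (much terser) proof: fix the outer endpoints at $q_1$ and $q_n$ to discard all blue points outside $[q_1,q_n]$, observe that the covered count is then the total in $[q_1,q_n]$ minus the count in the single excluded gap $(q_i,q_{i+1})$, and conclude that maximizing that gap count (which the algorithm's linear scan does) minimizes coverage. Your explicit treatment of the degenerate first/last-gap case is a welcome elaboration of what the paper leaves implicit, but it is not a different approach.
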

\begin{proof}
 It discards all the blue points that lie before (resp. after) the leftmost (resp. rightmost) red point. It also discards the maximum number of blue points lying between the two consecutive red points. 
\end{proof}
 Hence, our solution is optimal. Since each point is read at most once and our algorithm needs only linear space to store all the points, we obtain the following result.

\begin{theo}
The RBRACN-1D problem can be solved optimally in $O(m+n)$ time and $O(m+n)$ space, after $O((m+n)\log(m+n))$ preprocessing time.
\end{theo}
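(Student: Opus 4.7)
The plan is to verify the theorem by separately accounting for correctness, preprocessing time, running time of the main procedure, and space usage, since the algorithm has already been described and its correctness essentially follows from the two claims stated immediately above.

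First I would argue correctness by combining the two claims with Observations~\ref{each} and~\ref{conred}. By Observation~\ref{each}, in the non-uniform case the four endpoints of an optimal pair $({\cal I}_L, {\cal I}_R)$ must lie on red points (ignoring degeneracy). Since no red point may lie outside ${\cal I}_L \cup {\cal I}_R$, the outer endpoints $L_o$ and $R_o$ must be $q_1$ and $q_n$ respectively; otherwise some $q_j$ would be uncovered. Similarly, by Observation~\ref{conred}, the inner endpoints must be a pair of consecutive red points $(q_i, q_{i+1})$ with $1\leq i \leq n-1$, since any other choice would either uncover some red point between $L_i$ and $R_i$ or coincide with the non-degenerate case. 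The total penalty is the number of blue points in $[q_1, L_i]\cup[R_i, q_n]$, which equals $|B| - (\text{blue points strictly between } q_i \text{ and } q_{i+1})$. Minimizing this is equivalent to maximizing the blue-point count in the chosen consecutive red-point gap, which is exactly what the algorithm does. The degeneracy case is handled by explicitly considering the two boundary pairs $(q_1,q_2)$ and $(q_{n-1},q_n)$ as described in the algorithm.

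Next I would analyze the running time. Preprocessing consists of sorting the red and blue arrays, taking $O((m+n)\log(m+n))$ time. For the main phase, both arrays are already sorted, so computing the number of blue points strictly between each pair $(q_i, q_{i+1})$ can be done with a single merge-style linear sweep over the two sorted arrays: we walk a pointer through $B$ while advancing through $R$, incrementing a counter associated with the current red gap whenever the blue pointer passes a blue value. This fills an auxiliary array of $n-1$ gap-counts in $O(m+n)$ time. A subsequent linear scan identifies the gap with the maximum count and returns the corresponding $(q_i, q_{i+1})$, still within $O(m+n)$ time. Setting the outer endpoints and handling the degeneracy case are $O(1)$. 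Thus the total post-preprocessing time is $O(m+n)$.

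Finally, the space bound of $O(m+n)$ follows from storing the two sorted arrays and the auxiliary gap-count array of length $n-1$; no other data structures are needed. I do not anticipate any substantial obstacle here, since the argument reduces to a standard sorted-array merge; the only subtle point is being careful that the degeneracy case (an interval containing a single red point) is consistent with the claim that the maximum-gap choice is always optimal, which is why the algorithm explicitly checks the two extreme gaps. Combining the three bounds yields the stated $O(m+n)$ time and space after $O((m+n)\log(m+n))$ preprocessing.
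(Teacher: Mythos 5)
Your proposal is correct and follows essentially the same route as the paper: fix $L_o=q_1$ and $R_o=q_n$, reduce the problem to maximizing the number of blue points in a single gap between consecutive red points, and compute all gap counts in one merge-style linear pass over the two sorted arrays, with the degeneracy handled by the two extreme gaps. The only nit is that the covered blue count equals the number of blue points in $[q_1,q_n]$ minus the chosen gap count, not $|B|$ minus it; since the discrepancy is a constant independent of $i$, the argmax and hence the argument are unaffected.
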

 
\subsubsection{Uniform Annulus (RBRACU-1D)}
\label{uniform}
We compute two uniform intervals ${\cal I}_L$ and ${\cal I}_R$ that cover all red points and a minimum number of blue points. The Observation~\ref{blue_bound} holds for this problem. We consider the following two cases depending on the color of the endpoints of each interval.\\[0.1in]
{\bf Algorithm:}\\[0.1in]
\noindent {\bf Case (i): The endpoints of ${\cal I}_L$ are red.}    
We choose a red point $q_i$ such that $|q_1q_i|>|q_{i+1}q_n|$ and  $|q_1q_{i-1}|<|q_iq_n|$. Take the interval $[q_1,q_i]$ as ${\cal I}_L$ and compute
its corresponding interval ${\cal I}_R$ so that $|{\cal I}_R|=|{\cal I}_L|$, ${\cal I}_L\cap {\cal I}_R=\phi$, ${\cal I}_R$ covers the red points that are not covered by ${\cal I}_L$, and ${\cal I}_R$ covers the minimum number of blue points. We determine this ${\cal I}_R$ in the following way.\\
We sequentially search for a blue point, say $p_j$, that lie between $q_i$ and $q_{i+1}$, so that $|p_jq_n|$ is as large as possible and $|p_jq_n|\leq |{\cal I}_L|$. We take ${\cal I}_R$ with its right endpoint on $q_n$ and compute its left endpoint (which lies on or to the left of $p_j$) so that $|{\cal I}_R|=|{\cal I}_L|$. We count the number of blue points covered by ${\cal I}_L\cup {\cal I}_R$. Next, keeping the length of ${\cal I}_R$ constant, we shift it rightward so that one of its endpoints (left or right) coincides with a blue point, which occurs immediately next to the right of the corresponding previous endpoint (left or right), depending on whichever occurs earlier.
 Note that if one endpoint of such ${\cal I}_R$ is a blue point, then we shift ${\cal I}_R$ by a very small distance $\epsilon >0$ toward left or right to discard that blue point and reduce one of the blue points covered by ${\cal I}_R$. 
We update the number of blue points covered by the two intervals. The above process continues until the left endpoint of ${\cal I}_R$ crosses the red point $q_{i+1}$. For this ${\cal I}_L=[q_1,q_i]$, we choose that position of ${\cal I}_R$ where the number of blue points covered by the ${\cal I}_R$ is minimized.
       
Now, in the next iteration, we increase the length of ${\cal I}_L$ by keeping its right endpoint on the next red point $q_{i+1}$ and repeat the above steps to choose its corresponding ${\cal I}_R$ while minimizing the number of blue points covered. This process is repeated until the right endpoint of ${\cal I}_L$ coincides with $q_{n-1}$. 

\begin{lema}
\label{linear_time}
The overall running time (amortized) to compute the ${\cal I}_R$ for all such ${\cal I}_L$ is $O(m+n)$.
\end{lema}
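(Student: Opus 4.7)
The plan is to bound the total cost of the sweeps over all iterations of $i$ by an amortized (two-pointer) argument that charges each blue point only $O(1)$ work, so that the inner sweeps contribute $O(m)$ in aggregate while the outer iteration contributes $O(n)$.

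The key observation is that at the end of iteration $i$---just as the left endpoint of ${\cal I}_R$ crosses $q_{i+1}$---the two endpoints of ${\cal I}_R$ sit at $q_{i+1}$ and $q_{i+1}+(q_i-q_1)$, both of which are monotone increasing in $i$. Consequently, instead of literally restarting ${\cal I}_R$ with its right endpoint at $q_n$ at the start of each iteration, one implements the algorithm by maintaining two persistent pointers into the sorted blue-point array (the leftmost blue point currently inside ${\cal I}_R$, and the rightmost), together with a running count of blue points inside ${\cal I}_R$. When $i$ increases by one, the length of ${\cal I}_R$ grows by $q_{i+1}-q_i$ and the valid window of its positions shifts rightward; the update is effected by advancing the pointers rightward in $O(1)$ per step while updating the count. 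Since each pointer only moves rightward, each blue point is crossed by each pointer at most once over the whole execution, giving $O(m)$ pointer-moves in total. The sequential searches for the initial $p_j$ inside the disjoint gaps $(q_i, q_{i+1})$ similarly examine each blue point at most once and therefore sum to $O(m)$, and the outer loop over $i$ takes $O(n)$ time.

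The main obstacle is to justify the incremental pointer reformulation despite the algorithm's description, which on the surface restarts ${\cal I}_R$ at $q_n$ for each $i$. The justification rests on showing that the valid window of ${\cal I}_R$-positions in iteration $i+1$ extends the window of iteration $i$ compatibly on the right, so that the current blue-point count and pointer positions from the previous iteration can be reused without re-sweeping already processed blue points. A clean way to formalize this is via a potential function equal to the cumulative rightward displacement of the two pointers: each constant-time real operation is paid for either by a unit increase of the potential (amortized against the $O(m)$ future capacity of the pointers) or by a $O(1)$ charge to the current value of $i$, giving the claimed $O(m+n)$ total amortized time.
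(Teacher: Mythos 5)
Your proof is correct and takes essentially the same route as the paper's, which disposes of the lemma in one sentence by asserting that each endpoint of ${\cal I}_R$ only ever moves rightward through the points of $R\cup B$ and hence no point is reprocessed twice by the same endpoint. Your extra step of justifying the persistent two-pointer implementation --- noting that the terminal endpoint positions $q_{i+1}$ and $q_{i+1}+(q_i-q_1)$ are monotone in $i$, so the nominal restart of ${\cal I}_R$ at $q_n$ in each iteration can be elided without re-sweeping blue points --- makes explicit a detail that the paper's one-line argument leaves implicit, but it is the same amortization.
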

\begin{proof}
At each step, either the left endpoint or the right endpoint of ${\cal I}_R$ moves rightward through the blue or red points in $R \cup B$, and no such point is reprocessed twice by the same endpoint of ${\cal I}_R$, and this proves the result. 
\end{proof}
{\bf Case (ii): The endpoints of ${\cal I}_R$ are red.}  
We can deal with this case similarly to Case (i). In this case, we take the right interval ${\cal I}_R=[q_i,q_n]$ with the red point $q_i$ satisfying $|q_1q_{i-1}|<|q_iq_n|$ and  $|q_1q_i|>|q_{i+1}q_n|$, and  compute its corresponding left interval ${\cal I}_L$ which minimizes the number of blue points covered. Then we increase the length of ${\cal I}_R$ toward left only by placing its left endpoint on a red point $q_{i-1}$ that lies before $q_i$  at each iteration and compute its corresponding ${\cal I}_L$. Such increase of ${\cal I}_R$ and computation of the corresponding ${\cal I}_L$, is repeated until ${\cal I}_R$ covers $\{q_n,q_{n-1},\ldots,q_2\}$. 

Finally, among all the pairs of the intervals generated in the above two cases, we report the one that covers the minimum number of blue points.\\

\noindent{\bf Proof of correctness:}  The correctness proof of our algorithm is based on the following claim.
\begin{clam}
\label{equal_proof}
    Our algorithm generates all possible pairs of uniform intervals that cover all the red points in $R$.
\end{clam}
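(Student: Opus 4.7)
The plan is to prove the claim by leveraging Observation~\ref{blue_bound}, which restricts any optimal uniform pair $({\cal I}_L, {\cal I}_R)$ to have at least one interval whose endpoints are both red points in $R$. This dichotomy mirrors Case~(i) and Case~(ii) of the algorithm, so it suffices to verify that each case enumerates every candidate pair satisfying its assumption.

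For Case~(i), suppose the optimal ${\cal I}_L$ has both endpoints in $R$. Since ${\cal I}_L$ lies strictly to the left of ${\cal I}_R$, its left endpoint must coincide with $q_1$ (otherwise $q_1$ would remain uncovered, as it cannot lie in ${\cal I}_R$). Hence ${\cal I}_L=[q_1,q_i]$ for some red $q_i$, and feasibility requires $|{\cal I}_L| \ge |q_{i+1}q_n|$ so that an equal-length ${\cal I}_R$ positioned to the right of $q_i$ can cover the remaining red points $q_{i+1},\ldots,q_n$. The algorithm begins at the smallest index $i$ satisfying this condition (precisely the one characterized by $|q_1q_i|>|q_{i+1}q_n|$ together with $|q_1q_{i-1}|<|q_iq_n|$) and iterates through all admissible right endpoints $q_i, q_{i+1}, \ldots, q_{n-1}$, so every candidate ${\cal I}_L$ under this case is examined.

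For each fixed ${\cal I}_L$, I next argue that the sliding procedure for ${\cal I}_R$ considers every position that can be optimal. The key observation is that as ${\cal I}_R$ slides rightward with fixed length $|{\cal I}_L|$, its blue-coverage count changes only when a blue point leaves through its left endpoint or enters through its right endpoint. Thus, advancing ${\cal I}_R$ in discrete steps where each step moves one of the two endpoints to the next blue point (with the $\epsilon$-shift to exclude the boundary blue point when needed) captures every distinct coverage count. The procedure starts from the leftmost admissible position (right endpoint at $q_n$, left endpoint at or just past the furthest blue point $p_j\in(q_i,q_{i+1})$ with $|p_jq_n|\le|{\cal I}_L|$, so that ${\cal I}_R$ still covers $q_n$ and contains no red point other than those in $[q_{i+1},q_n]$) and terminates when the left endpoint crosses $q_{i+1}$ (beyond which $q_{i+1}$ would be uncovered). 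Case~(ii) is handled symmetrically with the roles of ${\cal I}_L$ and ${\cal I}_R$ swapped.

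The main obstacle will be carefully handling the boundary positions of ${\cal I}_R$, namely (a) certifying that the chosen starting interval is indeed the leftmost feasible one and (b) verifying that no optimal candidate is skipped when a blue point coincides with an endpoint, which is precisely the role of the $\epsilon$-perturbation. Once these boundary effects are settled, Cases~(i) and~(ii) together enumerate a finite superset of all potentially optimal uniform pairs, and taking the minimum over the generated pairs therefore yields an optimal solution, establishing the claim.
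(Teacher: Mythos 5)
Your proposal is correct and follows essentially the same route as the paper's own proof: both rest on Observation~2 to guarantee that one interval of any optimal pair has two red endpoints (hence is enumerated by Case~(i) or Case~(ii)), and both argue that the companion interval need only be considered at discrete event positions, since sliding it between events does not change the set of red and blue points covered. Your version merely spells out two steps the paper leaves implicit — why the red-red interval must start at $q_1$, and why the endpoint-by-endpoint sweep of ${\cal I}_R$ realizes every distinct coverage set.
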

\begin{proof}
    For the sake of contradiction, we assume that the optimal pair of disks (${\cal I}_L^{opt}$, ${\cal I}_R^{opt}$) is not generated by our algorithm. By the Observation~\ref{blue_bound}, one of these intervals must have two red points at its endpoints, and without loss of generality, we take that interval as ${\cal I}_L^{opt}$. Since our algorithm chooses all pairs of red points as one of the intervals ${\cal I}_L$ or ${\cal I}_{R}$, this ${\cal I}_L^{opt}$ must be chosen in an iteration. If none of the endpoints of ${\cal I}_R^{opt}$ is a point in $R\cup B$, then we can shift ${\cal I}_R^{opt}$ until one of its endpoints coincides with a red point or blue point, which our algorithm must have generated, and hence it contradicts the assumption. Note that the position of ${\cal I}_R$ in our solution may differ from that of ${\cal I}_R^{opt}$; however, they must cover exactly the same set of red and blue points. 
\end{proof}
Our algorithm generates all possible uniform annuli covering all the red points and the minimum number of blue points following the Observation~\ref{blue_bound}. This algorithm reports the pair covering the minimum number of blue points among all such $feasible$ solutions; thus, the solution is optimal.

The Lemma~\ref{linear_time} leads to the following theorem. 

 \begin{theo}
The  RBRACU-1D problem can be solved optimally in $O(m+n)$ time and $O(m+n)$ space, with $O((m+n)\log(m+n))$ preprocessing time.
 \end{theo}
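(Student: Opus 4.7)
The plan is to combine the correctness statement already established by Claim~\ref{equal_proof} with a careful accounting of the running time of the two symmetric sweeps described in Cases (i) and (ii). After the preprocessing step (sorting $R$ and $B$ along $L$ and storing them in two arrays, which costs $O((m+n)\log(m+n))$ time and $O(m+n)$ space), I would argue that the entire remaining computation only consists of two linear passes.

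First I would handle Case (i) as follows. The algorithm iterates over candidates for ${\cal I}_L=[q_1,q_i]$ by letting $i$ grow from some starting red index up to $n-1$, and for each such ${\cal I}_L$ it slides the uniform ${\cal I}_R$ rightward while keeping $|{\cal I}_R|=|{\cal I}_L|$. The key invariant is that as $i$ grows, the right endpoint of ${\cal I}_L$ only moves rightward, and the sliding scheme for ${\cal I}_R$ advances either its left or its right endpoint to the next event point in $R \cup B$, never backtracking. Thus both the ${\cal I}_L$-cursor and the two ${\cal I}_R$-cursors are monotone, so the total number of event-point visits across all iterations of the outer loop is $O(m+n)$. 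Maintaining the current count of blue points inside ${\cal I}_L\cup{\cal I}_R$ is an $O(1)$ update per event (increment when an endpoint crosses a blue point entering, decrement when one leaves, ignore red transitions). This is exactly the amortized bound of Lemma~\ref{linear_time}, and I would cite it directly. Case (ii) is symmetric with the roles of ${\cal I}_L$ and ${\cal I}_R$ interchanged, and enjoys the same $O(m+n)$ amortized bound by the identical monotonicity argument.

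Next I would combine the two cases: the algorithm returns the pair achieving the smallest recorded blue-count, which takes $O(1)$ extra memory. Correctness of returning an optimum among \emph{all} feasible uniform pairs follows from Observation~\ref{blue_bound}, which guarantees that in some optimum at least one interval has both endpoints at red points, together with Claim~\ref{equal_proof}, which shows that every such pair is enumerated (up to a shift of the free interval that preserves the covered set of points). Adding the preprocessing cost of $O((m+n)\log(m+n))$, the main cost of $O(m+n)$ from the two sweeps, and the $O(m+n)$ space for the sorted arrays and constant auxiliary counters, yields the claimed bounds.

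The only delicate point, and the one I would be most careful about, is verifying that the sliding of ${\cal I}_R$ in Case (i) really is monotone across successive outer iterations, not just within one iteration. When $i$ advances to $i+1$ the length $|{\cal I}_L|$ may increase, which could in principle force ${\cal I}_R$ to move \emph{leftward}. I would argue that even if this happens, the total leftward displacement of ${\cal I}_R$ across all outer iterations is bounded by the total rightward displacement, because each blue-point crossing can be charged to the corresponding rightward advance of ${\cal I}_L$'s right endpoint; hence the amortized $O(m+n)$ bound survives. Once this charging argument is in place, the theorem follows.
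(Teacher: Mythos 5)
Your overall route is the same as the paper's: correctness via Observation~\ref{blue_bound} together with Claim~\ref{equal_proof}, and the running time via the amortized sweep bound of Lemma~\ref{linear_time}. The paper's own proof of the theorem is exactly this combination and nothing more, so as a reconstruction of the intended argument your proposal matches.

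The place where you go beyond the paper is also where your argument has a gap. You correctly single out the delicate point: when ${\cal I}_L$ grows from $[q_1,q_i]$ to $[q_1,q_{i+1}]$, the algorithm restarts ${\cal I}_R$ with its right endpoint at $q_n$, so its left endpoint jumps back to $q_n-|{\cal I}_L|$, strictly to the left of where the previous iteration began. Your proposed fix --- that the total leftward displacement is bounded by the total rightward displacement, charged to the advance of the right endpoint of ${\cal I}_L$ --- does not yield the claimed bound, because displacement is measured in distance rather than in event counts. The range swept by the left endpoint of ${\cal I}_R$ in iteration $i+1$ strictly contains the range swept in iteration $i$, so every blue point in the overlap is revisited in every subsequent iteration, and a single advance of ${\cal I}_L$ triggers a full re-sweep that can cross $\Theta(m)$ blue points; summed over $\Theta(n)$ iterations this is $\Theta(nm)$ event visits, and no telescoping of displacements rescues it. The property the paper actually invokes in Lemma~\ref{linear_time} is that no point is ever reprocessed by the same endpoint of ${\cal I}_R$, i.e.\ that each endpoint's trajectory is globally monotone across all choices of ${\cal I}_L$; that is the statement you would need to establish (or else restructure the sweep, e.g.\ by evaluating each candidate position of ${\cal I}_R$ in $O(1)$ time from prefix counts of blue points and bounding the total number of distinct candidate positions). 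As written, your charging argument is not a valid substitute for that monotonicity claim.
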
    
\subsection{One-Dimensional Generalized Red-Blue Annulus Cover Problem}
\label{joco2}
\subsubsection{Non-uniform Annulus (GRBRACN-1D)}
\label{non_uniform_penalty}
The two intervals ${\cal I}_L$ and ${\cal I}_R$ are non-uniform, and hence the Observation~\ref{each} also holds for this problem. In this case, each point in $R \cup B$ is associated with a positive penalty.\\[0.1in]
{\bf Algorithm:}\\[0.1in]
We sequentially process all the red points rightwards, starting from $q_1 \in R$. Consider a red point $q_i$. Let ${\cal U}_1$ (resp. ${\cal U}_2$) be the set of red points $\in R$ (resp. blue points $\in B$) that lie on or to the left of $x(q_i)$.
We compute a non-uniform, non-overlapping optimal pair of intervals (${\cal I}_L$ and ${\cal I}_R$) up to point $q_i$ such that the following function is minimized.

$$\lambda = {\cal P}({\cal I}_L \cup {\cal I}_R) =  \sum_{p \in {B}^{in}} {\cal P}(p)+\sum_{q \in {R}^{out}} {\cal P}(q)$$ \\where ${B}^{in} \subseteq {\cal U}_2$ is the set of blue points covered by ${\cal I}_L \cup {\cal I}_R$ and  ${R}^{out} \subseteq {\cal U}_1$ is the set of red points not covered by ${\cal I}_L \cup {\cal I}_R$. \\Note that $\sum_{q \in {R}^{out}} {\cal P}(q) =  {\cal P}(R) -\sum_{q'\in {R'}}{{\cal P}(q')}$, where ${\cal P}(R)$ is the sum of penalties of the red points $\in {\cal U}_1$ and $R'$ is the set of red points covered by ${\cal I}_L \cup {\cal I}_R$.
Now, we compute this function up to $q_n$.
We use an incremental approach to process each red point in ${R}$ sequentially, increasing the order of their $x$-coordinate.  We maintain the four intervals ${\cal I}_1$, ${\cal I}_L$, ${\cal I}_R$ and  ${\cal I}_{q_i}$ up to the red point~$q_i$ where the endpoints of each such interval are on some red points (see Observation~\ref{each}). While processing the $(i+1)^{th}$ red point $q_{i+1}$, we update these intervals if necessary. 
The significance of these intervals is as follows.

\begin{itemize}
\setlength\itemsep{0.5em}
    \item[(i)] A single interval, say ${\cal I}_1=[u,v]$, of minimum penalty among all possible intervals up to the point $q_i$, where $u, v \in { R}$ are the two endpoints of ${\cal I}_1$.
    \item[(ii)] A pair of intervals ${\cal I}_L= [a, b]$ and ${\cal I}_R= [c, d]$ so that the penalty of their union is minimum among all the pair of intervals up to the point $q_i$, i.e., an optimal pair (${\cal I}_L,~{\cal I}_R$) up to $q_i$. 
    \item[(iii)] An interval ${\cal I}_{q_i}$ having minimum penalty with its right endpoint constrained to coincide with $q_i$. 
\end{itemize} 

 The algorithm executes the following while processing the next red point $q_{i+1}$.\\
 We first determine ${\cal I}_{q_{i+1}}$. The optimal pair of intervals ${\cal I}_L$ and ${\cal I}_R$ up to $q_i$ either remains optimal or needs to be updated.  We compute the penalties of the following four pairs of intervals and return the optimal pair with a minimum penalty up to $q_{i+1}$.
 
 ( ${\cal I}_1$, ${\cal I}_{q_{i+1}}$), (${\cal I}_L$, ${\cal I}_{q_{i+1}}$), (${\cal I}_R$, ${\cal I}_{q_{i+1}}$), and (${\cal I}_L$, ${\cal I}=[c,q_{i+1}]$).
 
We must update the pair of intervals (${\cal I}_L$, ${\cal I}_R$) with the reported pair. We also update the single interval ${\cal I}_1$ to be used in the next iteration. Note that the width of one of the intervals can be $\epsilon$ (if the interval contains a single red point with a very large penalty), which corresponds to the degeneracy case.\\
  
 {\bf Proof of correctness:} The correctness proof of our algorithm is based on the following facts and observations.
 \begin{fact}
      If the optimal pair (${\cal I}_L$, ${\cal I}_R$) up to point $q_i$, needs to be updated while processing the point $q_{i+1}$, then the right endpoint of ${\cal I}_R$ (after update) must be the red point $q_{i+1}$.
 \end{fact}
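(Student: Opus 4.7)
The plan is to prove the fact by contradiction: assume the optimal pair gets updated at step $i+1$, but the right endpoint of the new ${\cal I}_R$ is some red point $q_k$ with $k \le i$, and derive a contradiction with the optimality of the pair maintained at step $i$.

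First I would set up notation. Let $P_{old} = ({\cal I}_L^{old}, {\cal I}_R^{old})$ denote the optimal pair up to $q_i$, and let $P_{new} = ({\cal I}_L^{new}, {\cal I}_R^{new})$ denote the pair that strictly improves upon $P_{old}$ when $q_{i+1}$ is processed; write $\lambda(P, k)$ for the penalty of a pair $P$ computed against the sets ${\cal U}_1, {\cal U}_2$ restricted to points lying on or to the left of $q_k$. By Observation~\ref{each}, the endpoints of ${\cal I}_R^{old}$ are red, so the right endpoint of ${\cal I}_R^{old}$ is some $q_j$ with $j \le i$. Assume for contradiction that the right endpoint of ${\cal I}_R^{new}$ is some $q_k$ with $k \le i$ as well.

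The key observation is that neither $P_{old}$ nor $P_{new}$ covers any point strictly to the right of $q_i$, because each of their endpoints lies at a red point in $\{q_1, \ldots, q_i\}$. Therefore, when passing from step $i$ to step $i+1$, the only change in either penalty is the addition of the uncovered red point $q_{i+1}$: no blue points in the open interval $(q_i, q_{i+1})$ are covered by either pair, and $q_{i+1}$ itself is not covered by either pair. I would formalize this as
\[
\lambda(P_{old}, i+1) = \lambda(P_{old}, i) + {\cal P}(q_{i+1}), \qquad \lambda(P_{new}, i+1) = \lambda(P_{new}, i) + {\cal P}(q_{i+1}).
\]
Subtracting, $\lambda(P_{old}, i+1) - \lambda(P_{new}, i+1) = \lambda(P_{old}, i) - \lambda(P_{new}, i)$. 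Since $P_{old}$ was optimal up to $q_i$, the right-hand side is $\le 0$; but the hypothesis that $P_{new}$ strictly improves $P_{old}$ at step $i+1$ forces the left-hand side to be $>0$, which is a contradiction. Hence, $q_k = q_{i+1}$, i.e., the right endpoint of ${\cal I}_R^{new}$ must be $q_{i+1}$.

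The main obstacle, and the step that deserves care, is justifying the two equalities above: I must rule out that either pair could secretly cover a blue point lying in $(q_i, q_{i+1})$. This follows because an interval with both endpoints at red points in $\{q_1,\ldots,q_i\}$ is entirely contained in $[q_1, q_i]$, so no point with $x$-coordinate strictly greater than $q_i$ can lie in it; the degeneracy case (an interval of width $\epsilon$ around a single red point) is consistent with this because such an interval is still anchored at a red point $q_s$ with $s \le i$ and chosen small enough to avoid neighboring blue points. With this justification in place, the contradiction argument closes cleanly.
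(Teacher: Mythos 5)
Your argument is correct: the paper states this as a bare Fact with no accompanying proof, and your contradiction argument supplies exactly the justification that is implicit in the paper's correctness discussion. The key step — that a candidate pair whose intervals all end at red points in $\{q_1,\ldots,q_i\}$ covers nothing in $(q_i, q_{i+1}]$, so both the old and any such new pair incur the identical additional penalty ${\cal P}(q_{i+1})$ and their relative order is inherited from step $i$ — is sound, and your handling of the degenerate $\epsilon$-width interval is consistent with the paper's conventions.
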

 \begin{obsv}
 \label{grb_proof_1}
   If the left endpoint of ${\cal I}_{q_{i+1}}$ does not overlap with ${\cal I}_1$, ${\cal I}_L$ and ${\cal I}_R$ in the previous iteration (i.e., up to $q_i$), then one of these intervals becomes the left interval ${\cal I}_L$ in the current iteration (i.e., up to $q_{i+1}$), and the ${\cal I}_{q_{i+1}}$ becomes the right interval ${\cal I}_R$ up to point $q_{i+1}$.  
 \end{obsv}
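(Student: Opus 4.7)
The plan is to establish a lower bound on the penalty of any new optimal pair at iteration $i+1$ that is attained by the pair $({\cal I}_1, {\cal I}_{q_{i+1}})$, thereby certifying this pair as optimal among all feasible pairs.

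First I would decompose the pair-penalty. For any pair of disjoint intervals $({\cal I},{\cal J})$ with endpoints at red points (justified by Observation~\ref{each}), one has $\lambda({\cal I},{\cal J}) = {\cal P}(R^{(i+1)}) + \phi({\cal I}) + \phi({\cal J})$, where $\phi({\cal I}) = {\cal P}(B \cap {\cal I}) - {\cal P}(R \cap {\cal I})$ and $R^{(i+1)}$ denotes the set of red points up to $q_{i+1}$. This identity follows because covered-blue contributions are additive over disjoint intervals and the non-covered reds split cleanly. The same formula shows that the penalty of a single interval ${\cal I}$ equals ${\cal P}(R^{(i)}) + \phi({\cal I})$, so by their defining minimality ${\cal I}_1$ is the $\phi$-minimizer over all single intervals up to $q_i$ and ${\cal I}_{q_{i+1}}$ is the $\phi$-minimizer among intervals with right endpoint $q_{i+1}$.

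Next I would derive the lower bound. Let $({\cal I}_L^{*}, {\cal I}_R^{*})$ be an optimal pair up to $q_{i+1}$, assumed to differ from $({\cal I}_L, {\cal I}_R)$. By Fact~1 the right endpoint of ${\cal I}_R^{*}$ is $q_{i+1}$, so $\phi({\cal I}_R^{*}) \geq \phi({\cal I}_{q_{i+1}})$ by minimality. Since ${\cal I}_L^{*}$ must lie within $[x(q_1), x(q_i)]$ to stay disjoint from ${\cal I}_R^{*}$, it is a single interval up to $q_i$, whence $\phi({\cal I}_L^{*}) \geq \phi({\cal I}_1)$. Summing yields $\lambda({\cal I}_L^{*}, {\cal I}_R^{*}) \geq {\cal P}(R^{(i+1)}) + \phi({\cal I}_1) + \phi({\cal I}_{q_{i+1}}) = \lambda({\cal I}_1, {\cal I}_{q_{i+1}})$. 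The non-overlap hypothesis makes $({\cal I}_1, {\cal I}_{q_{i+1}})$ a feasible disjoint pair, so equality is attained. Thus the new optimal pair may be taken to be $({\cal I}_1, {\cal I}_{q_{i+1}})$: the new left interval is one of $\{{\cal I}_1, {\cal I}_L, {\cal I}_R\}$ (specifically ${\cal I}_1$) and the new right interval is ${\cal I}_{q_{i+1}}$, which is exactly what the observation asserts.

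The main obstacle I anticipate is a careful verification of the decomposition when intervals share endpoints at red points, and handling the degeneracy where ${\cal I}_{q_{i+1}}$ has width $\epsilon$ and represents a single red point of large penalty: strict disjointness then becomes an $\epsilon$-offset, and one must check that no blue point is accidentally double-counted or missed. A secondary concern is confirming the inductive invariant that ${\cal I}_1$ is genuinely the $\phi$-minimizing single interval up to $q_i$ after the previous iteration's update; the lower-bound step $\phi({\cal I}_L^{*}) \geq \phi({\cal I}_1)$ relies crucially on this invariant.
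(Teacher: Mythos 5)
Your argument is correct, and it is worth noting that the paper offers no proof of this observation at all --- it is simply asserted as part of the correctness discussion for the GRBRACN-1D algorithm, so your write-up genuinely fills a gap rather than paralleling an existing argument. The additive potential $\phi({\cal I})={\cal P}(B\cap{\cal I})-{\cal P}(R\cap{\cal I})$ is exactly the right device: it decouples the penalty of a non-overlapping pair into a constant plus two independent single-interval terms, so pair-optimality reduces to the separate minimality of ${\cal I}_1$ (over all intervals up to $q_i$) and of ${\cal I}_{q_{i+1}}$ (over intervals anchored at $q_{i+1}$), with Fact~1 supplying the anchor for the updated right interval and the non-overlap hypothesis supplying feasibility of the combined pair. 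In fact you prove slightly more than the observation states: in this branch the single candidate $({\cal I}_1,{\cal I}_{q_{i+1}})$ already attains the optimum, so the algorithm's additional comparisons against $({\cal I}_L,{\cal I}_{q_{i+1}})$ and $({\cal I}_R,{\cal I}_{q_{i+1}})$ are redundant when no overlap occurs; they matter only for the overlapping situations treated in Observations~\ref{grb_proof_2} and~\ref{iq1}. The caveats you flag are real but harmless: the decomposition needs the two intervals to share no point (guaranteed since endpoints are red and the pair is non-overlapping), the degenerate $\epsilon$-width intervals of Observation~\ref{each} must be admitted in the candidate sets over which ${\cal I}_1$ and ${\cal I}_{q_{i+1}}$ are minimized (the paper's definitions do admit them), and your feasibility step implicitly reads the hypothesis ``the left endpoint of ${\cal I}_{q_{i+1}}$ does not overlap with ${\cal I}_1$, ${\cal I}_L$, ${\cal I}_R$'' as ``${\cal I}_{q_{i+1}}$ lies entirely to the right of these intervals,'' which is the intended (and the only sensible) interpretation.
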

 The Observation~\ref{grb_proof_1} implies that we must compare  ${\cal I}_{q_{i+1}}$ with each one of the ${\cal I}_1$, ${\cal I}_L$ and ${\cal I}_R$.

\begin{obsv}
\label{grb_proof_2}
    If the left endpoint of ${\cal I}_{q_{i+1}}$ overlaps, with  ${\cal I}_R=[c,d]$, then the left endpoint of ${\cal I}_{q_{i+1}}$ will be $c$. 
\end{obsv}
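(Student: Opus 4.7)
The plan is to prove Observation~\ref{grb_proof_2} by a local exchange argument comparing two candidate left endpoints for ${\cal I}_{q_{i+1}}$. Suppose the left endpoint of ${\cal I}_{q_{i+1}}$ is some red point $\ell$ (it must be red by Observation~\ref{each}) with $c \le \ell \le d$. I will show that replacing $[\ell, q_{i+1}]$ by $[c, q_{i+1}]$ does not increase the penalty; since ${\cal I}_{q_{i+1}}$ is defined to minimize the penalty among intervals ending at $q_{i+1}$, one may then take its left endpoint to be $c$, which is what the observation asserts.

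The first step is to write down the penalty difference between the two candidates. Extending $[\ell, q_{i+1}]$ leftward to $[c, q_{i+1}]$ newly covers exactly the points lying in $[c, \ell)$: red points there cease to contribute to the penalty, while blue points there begin to contribute. Hence
$${\cal P}([c, q_{i+1}]) - {\cal P}([\ell, q_{i+1}]) = \sum_{p \in B \cap [c, \ell)} {\cal P}(p) \;-\; \sum_{q \in R \cap [c, \ell)} {\cal P}(q).$$
It therefore suffices to show that the right-hand side is non-positive.

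The key step is to exploit the optimality of the pair $({\cal I}_L, {\cal I}_R=[c,d])$ up to $q_i$. I would compare this pair with the alternative $({\cal I}_L, [\ell, d])$: the latter is non-overlapping (since $[\ell,d]\subseteq[c,d]$ and ${\cal I}_L$ is already disjoint from $[c,d]$) and its endpoints are red, so it is a legitimate competitor. Because ${\cal I}_L$ contains no point of $[c,\ell)$, shrinking ${\cal I}_R$ from $[c,d]$ to $[\ell,d]$ modifies the pair's penalty by exactly $\sum_{q \in R \cap [c,\ell)} {\cal P}(q) - \sum_{p \in B \cap [c,\ell)} {\cal P}(p)$. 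Optimality of $({\cal I}_L,{\cal I}_R)$ forces this quantity to be non-negative, which is precisely the inequality needed to conclude ${\cal P}([c,q_{i+1}]) \le {\cal P}([\ell,q_{i+1}])$, completing the proof.

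The main obstacle is identifying the correct exchange: one must perturb only the right interval of the optimal pair, and one must shrink it (rather than shift it, or tamper with ${\cal I}_L$), so that the resulting pair remains a valid non-overlapping configuration with red endpoints and, crucially, so that its penalty change is exactly the same quantity that governs the comparison between $[c,q_{i+1}]$ and $[\ell,q_{i+1}]$. Once that correspondence is set up, the remainder is routine bookkeeping of which red and blue points flip their ``covered'' status under each perturbation.
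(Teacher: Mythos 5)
Your argument is correct. The paper itself states Observation~5 with no justification at all, so there is no ``paper proof'' to match; your exchange argument supplies the missing reasoning, and it is the natural one. The bookkeeping checks out: extending $[\ell,q_{i+1}]$ to $[c,q_{i+1}]$ changes the penalty by $\sum_{p \in B \cap [c,\ell)} {\cal P}(p) - \sum_{q \in R \cap [c,\ell)} {\cal P}(q)$, and since ${\cal I}_L$ lies entirely to the left of $c$, shrinking the optimal pair's right interval from $[c,d]$ to $[\ell,d]$ changes the pair's penalty by exactly the negative of that quantity; optimality of $({\cal I}_L,[c,d])$ then forces the needed sign. Two small points worth making explicit: first, you need $\ell$ to range only over red points in $[c,d]$, which is justified by Observation~1 (you do cite it); second, what you actually prove is that $[c,q_{i+1}]$ is at least as good as any $[\ell,q_{i+1}]$ with $\ell \in [c,d]$, i.e., the left endpoint \emph{can be taken} to be $c$ rather than \emph{must be} $c$ (ties are possible). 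That weaker statement is all the algorithm's correctness argument requires, since the observation is only used to justify including the candidate pair $({\cal I}_L,[c,q_{i+1}])$ in the comparison, but you might state it that way to be precise.
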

 
  The Observation~\ref{grb_proof_2} says that, we must compare ${\cal I}=[c,q_{i+1}]$ with ${\cal I}_L$.

  \begin{obsv}
  \label{iq1}
       The interval ${\cal I}_{q_{i+1}}$ cannot overlap with ${\cal I}_L$ otherwise ${\cal I}_L$ and ${\cal I}_R$ would not have been optimal pair up to point $q_i$.
  \end{obsv}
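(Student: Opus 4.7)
The plan is to argue by contradiction: assume that ${\cal I}_{q_{i+1}} = [e, q_{i+1}]$ overlaps ${\cal I}_L = [a, b]$ (so $e \leq b$), and then exhibit an alternative pair of non-overlapping intervals up to $q_i$ whose penalty is strictly smaller than that of $({\cal I}_L, {\cal I}_R)$, contradicting its assumed optimality. Since ${\cal I}_R = [c, d]$ lies strictly to the right of ${\cal I}_L$ with $d \leq q_i < q_{i+1}$, the interval ${\cal I}_{q_{i+1}}$ in fact contains all of ${\cal I}_R$ as well. By Observation~\ref{each} applied to a single interval ending at $q_{i+1}$, the left endpoint $e$ coincides with some red point $q_j$ where $a \leq q_j \leq b$.

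The swap I would use replaces ${\cal I}_L = [a,b]$ by ${\cal I}_L' = [a, q_{j-1}]$ (shrunk on the right to just before $q_j$) and ${\cal I}_R = [c,d]$ by ${\cal I}_R' = [q_j, d]$ (extended leftward all the way to $q_j$). The new pair is valid since $q_{j-1} < q_j$, and the symmetric difference of the covered regions consists only of the former gap $(b,c)$ (newly covered) and the range $(q_{j-1}, q_j)$ (newly uncovered). Because no red point lies strictly between the consecutive reds $q_{j-1}$ and $q_j$, the net change in penalty is
$$\Delta \;=\; \bigl(\mathrm{Blue}((b,c)) - \mathrm{Red}((b,c))\bigr) \;-\; \mathrm{Blue}((q_{j-1},q_j)),$$
where I use the shorthand $\mathrm{Blue}(\cdot)$ and $\mathrm{Red}(\cdot)$ for the total penalty of blue (resp. red) points in a given range.

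The crux is then to derive $\Delta < 0$ from the local optimality of $e = q_j$ as the left endpoint of ${\cal I}_{q_{i+1}}$. Comparing the penalty of $[q_j, q_{i+1}]$ against $[q_{j-1}, q_{i+1}]$ yields $\mathrm{Blue}((q_{j-1}, q_j)) \geq {\cal P}(q_{j-1})$, while comparing against $[c, q_{i+1}]$ gives $\mathrm{Blue}([q_j, c)) \leq \mathrm{Red}([q_j, c))$. Combined with the optimality of ${\cal I}_R$'s left endpoint $c$ in the pair $({\cal I}_L, {\cal I}_R)$ (which forces $\mathrm{Blue}((b,c)) \geq \mathrm{Red}((b,c))$, as otherwise sliding $c$ leftward toward $b$ would strictly improve the pair), these inequalities together pin down $\Delta < 0$, producing the required contradiction.

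The main obstacle I anticipate is converting the chain of inequalities into a \emph{strict} one, since each individual local optimality condition is only a weak inequality; the strictness is expected to come from the genuine overlap $e \leq b$ together with the non-emptiness of the gap between ${\cal I}_L$ and ${\cal I}_R$. A secondary issue is the degenerate setting in which ${\cal I}_L$, ${\cal I}_R$, or ${\cal I}_{q_{i+1}}$ consists of a single red point of very large penalty: by Observation~\ref{each} these follow the same pattern with an $\epsilon$-width interval, but they need to be checked individually to confirm the swap remains well defined.
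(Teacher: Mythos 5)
The paper states this observation without any proof---the justification is folded into the wording of the statement itself---so your attempt has to stand on its own, and it does not close. The decisive gap is the final step. The three local-optimality conditions you extract do not combine to give $\Delta<0$, nor even $\Delta\le 0$: pair-optimality of $({\cal I}_L,{\cal I}_R)$ yields $\mathrm{Blue}((b,c))\ge\mathrm{Red}((b,c))$, which is a \emph{lower} bound on the first term of $\Delta$, whereas you need an \emph{upper} bound, strictly below $\mathrm{Blue}((q_{j-1},q_j))$. The comparison of $[q_j,q_{i+1}]$ with $[c,q_{i+1}]$ only gives $\mathrm{Blue}((b,c))-\mathrm{Red}((b,c))\le \mathrm{Red}([q_j,b])-\mathrm{Blue}([q_j,b])$, and the right-hand side can be large and positive (a heavy red point at $q_j$ suffices), so all three inequalities are simultaneously consistent with $\Delta>0$, i.e., with your swap making the pair strictly worse. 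The issue is therefore not merely the weak-versus-strict distinction you flag at the end; the inequality chain points the wrong way.

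The degenerate case you defer is also not a side issue. If $q_j=a$ (or $q_j<a$), the replacement interval $[a,q_{j-1}]$ is empty or undefined, and this is precisely the configuration that arises in the simplest instances: take $R=\{0,10,20\}$ with penalty $100$ each and $B=\{5,15\}$ with penalty $1$ each. Up to $q_2$ the optimal pair is the two $\epsilon$-width intervals at $0$ and $10$ (penalty $0$), while the minimum-penalty interval ending at $q_3=20$ is $[0,20]$ (penalty $2$, versus $101$ for $[10,20]$ and $200$ for $[20,20+\epsilon]$), which overlaps ${\cal I}_L$ even though the pair was genuinely optimal. Hence no exchange argument of this shape can establish the observation as literally stated; it holds only if ${\cal I}_{q_{i+1}}$ is by convention minimized over left endpoints lying to the right of ${\cal I}_L$ (which is what the remark following the observation suggests the authors intend), but then it is a definition rather than a consequence of the optimality of $({\cal I}_L,{\cal I}_R)$. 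Any repair would need both a corrected inequality in the main case and a separate treatment (or an added hypothesis) for the case $q_j\le a$.
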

  The Observation~\ref{iq1} says that we do not need to consider any point covered by ${\cal I}_L$ in the previous iteration (up to the point $q_i$) to compute the ${\cal I}_{q_{i+1}}$.\\
  Thus, our algorithm produces the correct result up to $q_{i+1}$, and hence, the optimal solution after processing $q_n$.
  
  To update the four intervals at each point $q_i\in R$, we need to consider the penalties of red point $q_i$ and the blue points lying between $q_{i-1}$ and $q_i$. Hence, the update operations require $O(k_i+1)$ time, where $k_i$ is the number of blue points lying between $q_{i-1}$ and $q_i$. Thus we obtain the following result.
  
 \begin{theo}
 We can compute the non-uniform annulus of minimum penalty in the GRBRACN-1D problem in $O(m+n)$ time using $O(m+n)$ space.
 \end{theo}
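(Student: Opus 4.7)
The plan is to prove correctness by induction on the number $i$ of red points processed, with the time bound following from an amortized analysis of the main loop. The inductive invariant I maintain at the end of iteration $i$ is that the stored intervals ${\cal I}_1$, $({\cal I}_L,{\cal I}_R)$, and ${\cal I}_{q_i}$ are, respectively, a minimum-penalty single interval, a minimum-penalty non-overlapping pair of intervals, and a minimum-penalty single interval whose right endpoint is $q_i$, all evaluated over the sub-instance consisting of $\{q_1,\ldots,q_i\}$ together with the blue points lying at or to the left of $x(q_i)$. The base case (iterations $i=1,2$) is immediate, since the optimal intervals degenerate to small segments around $q_1$ or the pair drawn from $\{q_1,q_2\}$.

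For the inductive step, I would argue in two phases. First, I verify that ${\cal I}_{q_{i+1}}$ is correctly computed: by Observation~\ref{iq1}, the left endpoint of ${\cal I}_{q_{i+1}}$ cannot lie inside the old ${\cal I}_L$, so the algorithm's candidate set (built from extending ${\cal I}_{q_i}$ to $q_{i+1}$, together with candidates rooted at the new blue points swept in between $q_i$ and $q_{i+1}$) is complete. Second, I verify that the pair update is correct. The Fact restricts the right endpoint of an updated ${\cal I}_R$ to be $q_{i+1}$; Observation~\ref{grb_proof_1} then asserts that when ${\cal I}_{q_{i+1}}$ does not overlap any previously stored interval, the left interval of the new optimal pair coincides with one of ${\cal I}_1$, old ${\cal I}_L$, or old ${\cal I}_R$, yielding three of the four algorithmic candidates when paired with ${\cal I}_{q_{i+1}}$; Observation~\ref{grb_proof_2} handles the remaining overlap case by forcing the updated right interval to start at the old left endpoint $c$ of ${\cal I}_R$ paired with the old ${\cal I}_L$, giving the fourth candidate. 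Taking the minimum-penalty candidate therefore preserves the invariant at step $i+1$, and updating ${\cal I}_1$ to the better of the old ${\cal I}_1$ and ${\cal I}_{q_{i+1}}$ likewise preserves optimality of the maintained single interval.

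For complexity, the preprocessing sort takes $O((m+n)\log(m+n))$. The main loop performs $n$ iterations, and iteration $i+1$ costs $O(k_{i+1}+1)$, where $k_{i+1}$ is the number of blue points lying strictly between $q_i$ and $q_{i+1}$: this cost covers scanning those blue points to update cumulative penalties and the left endpoint of ${\cal I}_{q_{i+1}}$, together with a constant number of penalty comparisons among the four candidate pairs. Summing across iterations, $\sum_i (k_{i+1}+1) \le m+n$, so the loop runs in $O(m+n)$ time. The space is $O(m+n)$ for the two sorted arrays and $O(1)$ for the maintained intervals and running penalty totals.

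The main obstacle I anticipate is the careful case analysis needed to establish that the four candidate pairs truly exhaust all possibilities for the new optimal pair at each step; this requires a synthesis of the Fact with Observations~\ref{grb_proof_1}, \ref{grb_proof_2}, and \ref{iq1} to rule out any structurally different optimal pair on the extended instance. A second subtlety is the degeneracy in which an interval shrinks to an $\epsilon$-wide sliver around a single red point with very high penalty: one must verify that the representation of ${\cal I}_{q_{i+1}}$ accommodates this case so that a pair such as $({\cal I}_L, \text{sliver at } q_{i+1})$ is legitimately evaluated by the algorithm whenever leaving $q_{i+1}$ effectively isolated is optimal. Once these two points are nailed down, the theorem follows directly from the invariant together with the amortized bound.
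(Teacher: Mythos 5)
Your proposal is correct and follows essentially the same route as the paper: the same incremental maintenance of ${\cal I}_1$, $({\cal I}_L,{\cal I}_R)$, and ${\cal I}_{q_i}$, the same four candidate pairs justified by the Fact and Observations~\ref{grb_proof_1}, \ref{grb_proof_2}, and \ref{iq1}, and the same amortized $O(k_i+1)$ per-iteration charging argument. Your explicit inductive invariant and your flagging of the $\epsilon$-sliver degeneracy are merely a cleaner packaging of what the paper argues informally.
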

  
\subsubsection{Uniform Annulus (GRBRACU-1D)}
\label{uni}

We compute an optimal pair (i.e., of minimum penalty) of uniform intervals ${\cal I}_L=[L_o,L_i]$ and ${\cal I}_R=[R_i,R_o]$. 
In this case, $|{\cal I}_L|=|{\cal I}_R|$. \\

\noindent{\bf Algorithm:}\\[0.1in]
Without loss of generality, we assume that the colors of both the endpoints of ${\cal I}_L$ are red (see Observation~\ref{blue_bound}). We do the following tasks for each pair of red points ($q_i,q_j$) in $R$.\\ 
We consider an interval ${\cal I}_L$ with its two endpoints at ($q_i,q_j$).
For this ${\cal I}_L$, we find an interval ${\cal I}_R$ lying to the right of ${\cal I}_L$  so that penalty of ${\cal I}_R$ is minimized. For this, first, we consider an interval ${\cal I}_L$ and then take another interval ${\cal I}$ of length $|{\cal I}_L|$ whose left endpoint coincides with a blue or red point that is not covered by ${\cal I}_L$. Then we shift this interval ${\cal I}$  rightward sequentially, either with its left or right endpoint coinciding with a blue or red point that lies immediately next to the left endpoint or right endpoint of $\cal I$, depending on whichever occurs first. If one endpoint of such ${\cal I}$ is a blue point, then we shift it by a very small distance $\epsilon >0$ toward the left or right to discard that blue point, thereby reducing the penalty. In this way, we compute the penalties of the intervals ${\cal I}$ with one of their endpoints being at each different red or blue point (which are not covered by ${\cal I}_L$). Among all such ${\cal I}$, we choose the one with minimum penalty as ${\cal I}_R$ for the given ${\cal I}_L$.\\
Similarly, we can repeat the above tasks to search for a ${\cal I}_L$ by taking both the endpoints of ${\cal I}_R$ as red points. Finally, we choose the pair with minimum penalty.\\

\noindent{\bf Proof of correctness:}  The correctness proof of our algorithm is based on the following
claim.
\begin{clam}
\label{equal_proof_penalty}
    Our algorithm generates all possible pairs of uniform intervals that cover all the red points in $R$, and all these annuli satisfy the Observation~\ref{blue_bound}.
\end{clam}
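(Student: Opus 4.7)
The plan is to adapt the contradiction argument used for Claim~\ref{equal_proof} to this weighted setting. Suppose, toward a contradiction, that some optimal pair $({\cal I}_L^{opt}, {\cal I}_R^{opt})$ of uniform intervals is not produced by our algorithm. By Observation~\ref{blue_bound}, at least one of the two intervals has both endpoints on red points; without loss of generality, assume ${\cal I}_L^{opt}=[q_i,q_j]$ for some $q_i,q_j\in R$ (the complementary case, in which ${\cal I}_R^{opt}$ has both red endpoints, is handled by the symmetric second phase of the algorithm, where all pairs of red points are used as the endpoints of ${\cal I}_R$ instead).

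Since the outer loop of our algorithm enumerates every ordered pair of red points $(q_i,q_j)$ and considers the interval $[q_i,q_j]$ as ${\cal I}_L$, the particular interval ${\cal I}_L^{opt}$ is processed in some iteration. For this fixed ${\cal I}_L^{opt}$, I would argue that the sliding subroutine returns a right interval whose penalty equals that of ${\cal I}_R^{opt}$. The argument is that if we slide an interval ${\cal I}$ of length $|{\cal I}_L^{opt}|$ continuously rightwards through the region to the right of ${\cal I}_L^{opt}$, then its penalty, viewed as a function of its position, is piecewise constant and changes only when one of its two endpoints crosses a point of $R\cup B$. Consequently, a penalty-minimizing position is attained (possibly after an arbitrarily small $\epsilon$-shift that expels a blue point sitting on an endpoint) at a configuration where an endpoint coincides with some point of $R\cup B$. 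The subroutine visits exactly these breakpoint configurations and applies the $\epsilon$-shift whenever a blue endpoint can be discarded, so among the tested positions there is one whose covered red/blue sets (and hence penalty) match those of ${\cal I}_R^{opt}$, contradicting the assumption.

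The hard part will be the careful handling of blue-endpoint and degeneracy cases. One must verify that at every breakpoint where a blue point sits on the boundary of ${\cal I}$, both the ``blue point covered'' and ``blue point uncovered'' alternatives are generated by the $\epsilon$-shift (in the appropriate direction), so no penalty-minimizing configuration is missed. One must also check the single-red-point degeneracy permitted by Observation~\ref{each}, where one interval has width $\epsilon$ around a single heavily penalized red point; this is covered by allowing the outer enumeration to take $q_i=q_j$ (or an analogous boundary pair). Once these boundary subtleties are verified, the contradiction is immediate and the claim follows; combined with the fact that our algorithm selects the pair of minimum penalty among all pairs it generates, this yields optimality.
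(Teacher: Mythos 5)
Your proposal is correct and follows essentially the same route as the paper: the paper's proof of this claim simply defers to the proof of Claim~\ref{equal_proof}, which is exactly the contradiction argument you give (one interval must have red endpoints by Observation~\ref{blue_bound}, the algorithm enumerates all such intervals, and the companion interval can be slid until an endpoint hits a point of $R\cup B$ without changing the covered sets). Your additional discussion of the piecewise-constant penalty function and the $\epsilon$-shift breakpoints is a more careful elaboration of the same idea, not a different approach.
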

\begin{proof}
The proof is similar to that of the Claim~\ref{equal_proof}.
\end{proof}

Our algorithm reports the pair with the minimum penalty, thus providing the optimal solution.
The above procedure needs $O(m+n)$ time. Since there are $O(n^2)$ distinct positions of ${\cal I}_L$, we obtain the following result
 \begin{theo}
 We can compute the uniform annulus with a minimum penalty in the GRBRACU-1D problem in $O(n^2(m+n))$ time with $O(m+n)$ space.
 \end{theo}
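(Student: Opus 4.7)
The plan is to exploit Observation~\ref{blue_bound}, which guarantees that some optimal uniform pair $({\cal I}_L,{\cal I}_R)$ has both endpoints of one of its intervals coinciding with red points of $R$. First I would verify that the Observation's justification carries over to the weighted setting: anchoring one endpoint of each interval and shrinking both by equal amounts can only decrease $\lambda$, because until a red point first reaches an inner endpoint the only covered-set changes are blue points leaving the interior, each of which subtracts a nonnegative amount from $\lambda$. Granted this, we may restrict attention to configurations in which either ${\cal I}_L$ or ${\cal I}_R$ has its endpoints at an ordered pair $(q_i,q_j)$ of red points; there are $O(n^2)$ such choices, and the algorithm enumerates each pair in both roles.

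For a fixed red-endpoint interval $I^\star=[q_i,q_j]$ of length $\ell=x(q_j)-x(q_i)$, the task reduces to finding a length-$\ell$ companion interval $I'$ disjoint from $I^\star$ that minimizes its penalty contribution $\sum_{p \in B \cap I'} {\cal P}(p) - \sum_{q \in R \cap I'} {\cal P}(q)$, where the second sum reflects red points removed from $R^{out}$. Parametrizing $I'$ by the position $t$ of its left endpoint, this contribution is piecewise constant in $t$, with breakpoints only when $t$ or $t+\ell$ crosses a point of $R\cup B$. Hence the minimum over all admissible $t$ is attained at one of $O(m+n)$ event positions, which I would enumerate by a left-to-right sweep through the sorted merged list of points lying outside $I^\star$: each event triggers an $O(1)$ update of a running cumulative penalty (add/subtract the point's penalty when it enters/leaves $I'$), and the $\epsilon$-shift convention used in the text ensures that points lying exactly on a boundary are treated as excluded from $I'$. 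The best $I'$ encountered in the sweep, paired with $I^\star$, gives the optimal uniform pair in which $I^\star$ plays the red-endpoints role.

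The inner sweep touches each of the $O(m+n)$ events $O(1)$ times, so it costs $O(m+n)$ per fixed pair $(q_i,q_j)$; multiplying by the $O(n^2)$ enumerated pairs (and the constant factor of two for placing $I^\star$ on the left or right) gives the claimed $O(n^2(m+n))$ time. The only persistent data are the two sorted arrays of input points and a handful of scalar counters, so the space is $O(m+n)$. The part I expect to need the most care is the degenerate case of Observation~\ref{blue_bound} in the weighted regime: an optimal interval may collapse to an $\epsilon$-width window around a single red point whose penalty is so large that leaving it uncovered dominates every other term. Such ``pairs'' $(q_i,q_i)$ must be added to the enumeration and the companion sweep must remain correct for $\ell=\epsilon$, but both adjustments fit within the same asymptotic budget.
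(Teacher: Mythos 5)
Your proposal matches the paper's own argument: both reduce to Observation~2 (one interval has two red endpoints), enumerate the $O(n^2)$ red pairs as that interval, and find the equal-length companion by a left-to-right sweep over $O(m+n)$ event positions with constant-time penalty updates, giving $O(n^2(m+n))$ time and $O(m+n)$ space. The points you flag for extra care (re-justifying Observation~2 under penalties, and the $\epsilon$-width degenerate interval) are handled only implicitly in the paper, so your treatment is, if anything, slightly more explicit.
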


\subsection{Two Dimensional Red-Blue Annulus Cover Problem}
\label{joco3}
We compute an annulus $\cal A$ for a given set of bichromatic points lying on $\mathbb{R}^2$, which covers all the red points and the minimum number of blue points. 
We denote the left-most (resp. right-most) red point in ${R}$ by $q_\ell$ (resp. $q_r$), and the bottom-most (resp. top-most) red point in ${R}$ by $q_b$ (resp. $q_t$). An annulus $\cal A$ is said to be {\color{blue} $feasible$} if it covers all the red points in $R$. Among all $feasible$ annuli, the one that covers the minimum number of blue points is called {\color{blue} $minimum$-${\cal A}$} and is denoted by ${\cal A}_{\min}$.
Four points are sufficient to identify a rectangle uniquely. Its two opposite corner points can also define it.
The two rectangles ${\cal R}_{out}$ and ${\cal R}_{in}$ of the annulus ${\cal A}$ may or may not be concentric. Depending on the concentricity and widths of the annulus, we study non-uniform, non-concentric, non-uniform concentric, and uniform annulus cover problems. The properties of all such annuli are stated below.
\newline
The four widths of a {\color{blue} non-uniform non-concentric annulus} are different (see Section~\ref{prelim}), and  we obtain the following result.
\begin{lema}
\label{non_con}
Eight red points are sufficient to uniquely define the boundaries of a non-uniform non-concentric rectangular annulus that covers the minimum number of blue points.
\end{lema}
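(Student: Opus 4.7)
The plan is to exploit the fact that in a non-uniform non-concentric rectangular annulus the four sides of ${\cal R}_{out}$ and the four sides of ${\cal R}_{in}$ are unconstrained with respect to one another (unequal widths, non-concentric centers) and can therefore be slid independently. I will lock each of the eight sides onto a red point by a local perturbation argument; since each side is axis-parallel, a single point on it fixes the supporting line, and eight lines fix the annulus.

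First I would handle ${\cal R}_{out}$. Starting from any feasible $minimum$-${\cal A}$, consider $left({\cal R}_{out})$; since every red point must lie inside ${\cal R}_{out}$, this side is weakly to the left of $q_\ell$. If it lies strictly to the left, I slide it rightward (keeping the other three outer sides and all four inner sides fixed) until it first meets a red point. During the slide the swept vertical strip moves from inside the annulus to outside ${\cal R}_{out}$, so every blue point it contains leaves the covered set and no red point can be dropped (the sweep halts exactly when one would). Hence the blue count is non-increasing, feasibility is preserved, and the new annulus is still $minimum$-${\cal A}$. Repeating the same argument on $right({\cal R}_{out})$, $top({\cal R}_{out})$, and $bot({\cal R}_{out})$ produces four red points that jointly pin ${\cal R}_{out}$.

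Next I would treat ${\cal R}_{in}$ symmetrically. Feasibility forces ${\cal R}_{in}$ to contain no red point in its interior. For $left({\cal R}_{in})$, I slide it rightward: the swept strip moves from the annulus into the hole, so any blue points in the strip leave the covered set, while no red point can enter the hole (else feasibility breaks, and the sweep stops just before that happens). I keep pushing until the side meets a red point. Doing the same on $right({\cal R}_{in})$, $top({\cal R}_{in})$, $bot({\cal R}_{in})$ yields four more red points. Because the widths $w_\ell,w_r,w_t,w_b$ are all free (non-uniform) and the two centers are decoupled (non-concentric), these four inner sweeps and the four outer sweeps are carried out in any order without interference, so all eight locking operations succeed simultaneously.

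The step I expect to be the most delicate is handling degeneracies. Two situations can arise: a red point may play a double role (for instance $q_\ell$ may lie on $left({\cal R}_{out})$ and also witness the stopping condition for $top({\cal R}_{in})$ or $bot({\cal R}_{in})$), and an inner side may sweep all the way to the corresponding outer side before meeting a red point, in which case the annulus has width $0$ on that side. I would dispose of the first case by observing that repeated points only reduce the witness count below eight, so the bound still holds. For the second case, the inner side can be taken to pass through the red point that defines the coinciding outer side, so the same red point is reused as a witness and the bound of eight is maintained. Combining the two phases yields at most $4+4=8$ red points whose supporting horizontal and vertical lines determine both ${\cal R}_{out}$ and ${\cal R}_{in}$, which is the claim of the lemma.
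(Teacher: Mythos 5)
Your proposal is correct and takes essentially the same route as the paper: the paper likewise pins each of the four widths with one red point on the corresponding side of ${\cal R}_{out}$ and one on the corresponding side of ${\cal R}_{in}$ (eight in total), justified by the observation that an unpinned side can be pushed to shrink the annulus without losing feasibility or increasing the blue count. One small slip to fix: to shrink the left width you must slide $left({\cal R}_{in})$ \emph{leftward} (enlarging the hole), not rightward --- the effect you describe, the swept strip passing from the annulus into the hole so that blue points become uncovered, is exactly what the leftward move produces.
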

\begin{proof}
In a non-uniform non-concentric annulus $\cal A$, the four widths $w_\ell$, $w_r$, $w_t$ and $w_b$ are different. To define $w_\ell$ (resp. $w_r$), we need two red points on the left (resp. right) side of both ${\cal R}_{out}$ and ${\cal R}_{in}$, otherwise the width can be minimized further. Similarly, two red points on the top (resp. bottom) side of both ${\cal R}_{out}$ and ${\cal R}_{in}$ are required to define $w_t$ (resp. $w_b$). Therefore, a total of eight red points are sufficient to uniquely define a  non-uniform non-concentric annulus $\cal A$ covering the minimum number of blue points. 
\end{proof}
In a non-uniform concentric annulus, the horizontal width ($w_h$) and the vertical width ($w_v$) are not equal. So we obtain the following result.
\begin{lema}
\label{lm1}
Six points are sufficient to define the boundaries of a non-uniform concentric rectangular annulus uniquely, and among these six points, four must be red so that the annulus covers the minimum number of blue points.
\end{lema}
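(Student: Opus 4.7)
The plan is to establish the two assertions of the lemma separately: the six-point sufficiency for uniqueness, and the requirement that at least four of those points be red in a minimum-covering configuration.

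For uniqueness, I would begin with a degrees-of-freedom count. A non-uniform concentric rectangular annulus is specified by six scalars: the common center $(c_x, c_y)$ of ${\cal R}_{out}$ and ${\cal R}_{in}$, the outer half-widths $W_{out}, H_{out}$, and the inner half-widths $W_{in}, H_{in}$ (with the open condition $W_{out} - W_{in} \ne H_{out} - H_{in}$ enforcing non-uniformity). Each point lying on one of the eight boundary sides of the two rectangles contributes a single linear equation in these scalars: for instance, a point on the left side of ${\cal R}_{out}$ satisfies $c_x - W_{out} = x(\cdot)$, while a point on the top side of ${\cal R}_{in}$ satisfies $c_y + H_{in} = y(\cdot)$. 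Hence any six boundary points whose associated equations are linearly independent pin down all six parameters and therefore the annulus itself.

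For the four-red requirement, I would apply a local perturbation argument to an optimal annulus ${\cal A}$ that covers all red points and the minimum number of blue points. The claim is that each of the four sides of ${\cal R}_{out}$ must pass through some red point. Suppose, for contradiction, that the left side contains no red. By tuning $(c_x, W_{out})$ with the coupling $\Delta c_x = \Delta W_{out}$, which keeps the right side $c_x + W_{out}$ fixed, I can shift only the left side of ${\cal R}_{out}$ rightward by a small $\epsilon > 0$; by concentricity, this simultaneously translates ${\cal R}_{in}$ rightward by $\epsilon/2$. Under general position, for $\epsilon$ small enough, no blue point crosses any moving boundary, so the blue coverage is unchanged. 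The shift may therefore be continued until the left side of ${\cal R}_{out}$ meets a red point, which must be the leftmost red $q_\ell$ (any red further left would be uncovered). Applying the symmetric argument to the other three sides forces $q_r, q_t, q_b$ onto the right, top, and bottom sides of ${\cal R}_{out}$, contributing four red points and thus four of the six defining equations.

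The two remaining equations pin down $W_{in}$ and $H_{in}$ via two more points on ${\cal R}_{in}$'s boundary. Generically these are red points that block further expansion of ${\cal R}_{in}$ (preventing any red from entering its interior, which would leave that red uncovered by the annulus), while in degenerate configurations they may be blue points at which the expansion stalls. In either case, six points suffice and at least four of them are red, establishing the lemma. The main obstacle will be the bookkeeping for the concentricity coupling during the perturbation: shifting a single side of ${\cal R}_{out}$ unavoidably translates ${\cal R}_{in}$, so I must argue carefully that for sufficiently small $\epsilon$ this coupled motion neither puts any red into ${\cal R}_{in}$'s interior nor swaps any blue across a boundary, so the shift remains objective-preserving until a red anchor is encountered.
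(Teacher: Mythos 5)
Your degrees-of-freedom count for the uniqueness half is fine (six parameters $c_x,c_y,W_{out},H_{out},W_{in},H_{in}$, one linear equation per boundary point), and it is arguably cleaner than the paper's side-by-side counting. The gap is in the second half: your identification of \emph{which} four points must be red is wrong for the concentric case, and the perturbation you propose does not go through. You claim that every one of the four sides of ${\cal R}_{out}$ can be pushed onto an extreme red point $q_\ell,q_r,q_t,q_b$. Because the two rectangles are concentric, the left and right widths are forced equal, so once $left({\cal R}_{out})$, $left({\cal R}_{in})$ and $right({\cal R}_{in})$ are anchored, the position of $right({\cal R}_{out})$ is already determined and will in general lie strictly to the right of $q_r$; there is no freedom left to bring it onto a red point. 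The paper's own Fact~2 and the four cases built on it guarantee only that two \emph{adjacent} sides of ${\cal R}_{out}$ pass through extreme red points, precisely for this reason. Moreover, your proposed shift of a single outer side translates ${\cal R}_{in}$ by $\epsilon/2$, and if a red point lies on (or the optimal configuration forces it onto) the inner boundary, then \emph{every} $\epsilon>0$ pushes that red point into the interior of ${\cal R}_{in}$ and uncovers it. You flag this as ``the main obstacle'' but do not resolve it, and it cannot be resolved by taking $\epsilon$ small, since the obstruction occurs at $\epsilon=0^+$.

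The paper's argument distributes the four red points differently: two red points realize the horizontal width $w_h$, one on $left({\cal R}_{out})$ and one on $left({\cal R}_{in})$ (or both on the right sides), and two red points realize the vertical width $w_v$ on a top or bottom pair of sides; otherwise the corresponding width could be shrunk. The remaining two points, of either color, then fix the positions of one more vertical side and one more horizontal side, giving six points in total with four red. If you want to keep your perturbation style, the correct move is to shrink a \emph{width} (an outer side and the corresponding inner side moving together, which preserves concentricity and does not move the opposite pair) until both sides of that strip are blocked by red points, rather than to shift a single outer side.
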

\begin{proof}
 Both the horizontal and the vertical width of the non-uniform concentric annulus ${\cal A}$ must be determined by two red points each; otherwise, its widths can be minimized further. To define the horizontal width $w_h$ of the $\cal A$, two red points must lie on the $left({\cal R}_{out})$ and $left({\cal R}_{in})$ or on the $right({\cal R}_{out})$ and $right({\cal R}_{in})$. Similarly, for the vertical width $w_v$ of the ${\cal A}$ to be defined, the two red points lie on the top or bottom boundaries of ${\cal R}_{out}$ and ${\cal R}_{in}$. If these two red points determine the left (resp. right) width, then a single point of any color lying on the right (resp. left) boundaries of ${\cal R}_{out}$ or ${\cal R}_{in}$, is needed to determine the corresponding right (resp. left) width. So, three points must lie on any three vertical sides of ${\cal R}_{out}$ and ${\cal R}_{in}$ together. Similarly, three points lie on any three of the horizontal sides of ${\cal R}_{out}$ and ${\cal R}_{in}$ together. Hence, a total of six points are sufficient to uniquely define a non-uniform concentric annulus $\cal A$, out of which four must be red. 
\end{proof}
All four widths of a {\color{blue} uniform annulus} are the same, and hence, we obtain the following result.
\begin{lema}
\label{lm2}
Five points are sufficient to define the boundaries of a uniform rectangular annulus uniquely, and among these five points, two must be red so that the annulus covers the minimum number of blue points.
\end{lema}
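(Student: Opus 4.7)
The plan is to carry out a degrees-of-freedom argument analogous to the one used for Lemma~\ref{lm1} in the non-uniform concentric case.

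A uniform rectangular annulus is fully specified by the four coordinates $x_\ell^{out}, x_r^{out}, y_b^{out}, y_t^{out}$ of ${\cal R}_{out}$ together with the common width $w$, since ${\cal R}_{in}$ is obtained by shrinking each side of ${\cal R}_{out}$ inward by $w$. This gives $5$ degrees of freedom, so five linearly independent incidence constraints -- each asserting that some point of $R\cup B$ lies on a specified side of ${\cal R}_{out}$ or ${\cal R}_{in}$ -- are enough to pin the annulus down uniquely.

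To force at least two of these incidences to involve red points, I would isolate two independent obstructions to shrinking. If no red lay on $\partial {\cal R}_{in}$, then $w$ could be strictly decreased while keeping ${\cal R}_{out}$ fixed; this would enlarge ${\cal R}_{in}$ so the band strictly shrinks, weakly reducing blue coverage without pushing any red out of the band, contradicting optimality. Hence some red $q^{*}$ sits on $\partial {\cal R}_{in}$, at $L_\infty$-distance $w>0$ from $\partial {\cal R}_{out}$. Dually, if no red lay on $\partial {\cal R}_{out}$, every red would be strictly interior to ${\cal R}_{out}$ and one could translate a side of ${\cal R}_{out}$ inward (keeping $w$ fixed) until it hits an extremal red such as $q_\ell$ without losing feasibility; choosing the $minimum$-${\cal A}$ annulus that maximises the number of incident points of $R \cup B$ then forces this configuration. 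Since $q^{*}$ and $q_\ell$ sit at $L_\infty$-distances $w>0$ and $0$ respectively from $\partial {\cal R}_{out}$, they are necessarily distinct red points.

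The remaining three incidences may come from points of either colour, so exactly two red-point incidences are required in the worst case. The main obstacle I anticipate is making the outer-boundary perturbation rigorous: in the uniform setting, sliding a side of ${\cal R}_{out}$ inward drags the parallel side of ${\cal R}_{in}$ along with it, so blue coverage changes simultaneously in an outer strip (blues excluded) and an inner strip (blues newly covered). I would handle this by treating blue coverage as a piecewise-constant function on the five-parameter space $(x_\ell^{out}, x_r^{out}, y_b^{out}, y_t^{out}, w)$ and exploiting the extremal optimum in which the number of boundary incidences is maximised, so that five active constraints are guaranteed and the preceding two arguments identify at least two of them as red.
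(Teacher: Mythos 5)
Your degrees-of-freedom count (five parameters, hence five incidences) and your first obstruction are sound: if no red point lies on $\partial {\cal R}_{in}$, decreasing $w$ with ${\cal R}_{out}$ fixed shrinks the band, so blue coverage cannot increase and no red leaves the band, which matches the first sentence of the paper's proof. The gap is in your second obstruction, and you have in fact identified it yourself without repairing it. Translating one side of ${\cal R}_{out}$ inward while keeping $w$ fixed drags the parallel side of ${\cal R}_{in}$ inward as well, so the band \emph{gains} the strip that ${\cal R}_{in}$ surrenders: blue points that were strictly inside ${\cal R}_{in}$ become covered, and the objective can strictly increase. ``Without losing feasibility'' is therefore not enough, and your proposed remedy --- passing to the optimum that maximises the number of boundary incidences --- does not close the gap, because maximising incidences only guarantees that five constraints are active; it says nothing about the colour of the point realising the incidence on $\partial {\cal R}_{out}$, which is exactly what is at stake. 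A perturbation that does work is to move a side of ${\cal R}_{out}$ inward \emph{and simultaneously decrease $w$ by the same amount}, so that the corresponding side of ${\cal R}_{in}$ stays put and ${\cal R}_{in}$ only enlarges on the other three sides; the new band is then a subset of the old one, and iterating this over sides eventually locks a red point onto a side of ${\cal R}_{out}$ whose parallel side of ${\cal R}_{in}$ already carries a red point.

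Even granting both obstructions, the conclusion you reach --- one red somewhere on $\partial {\cal R}_{in}$ and one red somewhere on $\partial {\cal R}_{out}$, possibly on unrelated sides --- is weaker than what the paper's proof asserts and what the algorithm in Section~\ref{uni-2d} needs. The paper requires the two red points to lie on the \emph{same} side of both rectangles (e.g., one on $left({\cal R}_{out})$ and one on $left({\cal R}_{in})$), because only then is the common width $w$ the difference of two red coordinates and hence enumerable over red pairs. In your configuration, say a red on $left({\cal R}_{in})$ and a red on $top({\cal R}_{out})$, the width is the distance from $left({\cal R}_{in})$ to $left({\cal R}_{out})$, and the latter line may be pinned by a blue point, so $w$ is not determined by red points alone and the candidate set of widths cannot be enumerated as the algorithm does. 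So the structural statement you prove does not support the enumeration that the lemma exists to justify.
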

\begin{proof}
 The width, say $w$, of a uniform rectangular annulus must be determined by two red points lying on the same side of both ${\cal R}_{in}$ and ${\cal R}_{out}$; otherwise, its width can be minimized further. Without loss of generality, suppose these two red points lie on the left sides of ${\cal R}_{out}$ and ${\cal R}_{in}$. Also note that at least one point of any color must lie on the top side (resp. right side and bottom side) of ${\cal R}_{out}$ or ${\cal R}_{in}$ to define the $Top({\cal A})$ (resp. $Right({\cal A})$ and $Bottom(\cal A)$). This proves the result.     
\end{proof}
\begin{obsv}
    \label{rru_an}
    If any of the four-sided regions $(Top(\cal A)$, $Bot(\cal A)$, $Left(\cal A)$ and $Right(\cal A))$ of a uniform annulus ${\cal A}$ is defined by one or two blue points, then it is possible to shift that region(s) (without changing its width(s)) by a very small distance $\epsilon>0$ to obtain another uniform annulus ${\cal A'}$ so that $\cal A'$ covers exactly the same set of points that are covered by $\cal A$, except the blue point that defines any of the four-sided regions e.g., $Top(\cal A)$ and hence $\cal A'$ has less penalty than ${\cal A}$. Note that $Top(\cal A)$ and $Bot(\cal A)$ are to be shifted upward or downward, while $Left(\cal A)$ and $Right(\cal A)$ need to be shifted leftward or rightward.
\end{obsv}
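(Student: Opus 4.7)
The plan is to prove the observation by an explicit perturbation argument: given that a side region of the uniform annulus $\cal A$ is defined only by blue points, I will exhibit a translation of that side's two parallel bounding lines by a common small $\epsilon>0$ producing a new uniform annulus ${\cal A'}$ which excludes at least one of those blue points but agrees with $\cal A$ on every other point of $R\cup B$.

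First, I would fix the notation for the case of $Top({\cal A})$; the other three cases are symmetric (vertical shifts for $Bot({\cal A})$, horizontal shifts for $Left({\cal A})$ and $Right({\cal A})$). The region $Top({\cal A})$ is the horizontal strip between the line $top({\cal R}_{out})$ and the line carrying $top({\cal R}_{in})$, so the hypothesis that it is ``defined by one or two blue points'' means that the defining points lying on these two horizontal lines are blue, and in particular no red point lies on either line. Let $p$ denote one such defining blue point.

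The construction is then the following. If $p$ lies on $top({\cal R}_{out})$, translate both $top({\cal R}_{out})$ and $top({\cal R}_{in})$ downward by a common amount $\epsilon>0$; if instead $p$ lies on $top({\cal R}_{in})$, translate both lines upward by $\epsilon$. In either case only the top edges of ${\cal R}_{out}$ and ${\cal R}_{in}$ move, and they move by the same amount, so the top width $w_t$ is preserved while $w_b$, $w_\ell$, $w_r$ are untouched; hence the resulting ${\cal A'}$ is a uniform annulus of the same width $w$ as $\cal A$. In the degenerate case that $Top({\cal A})$ is defined by two blue points, one on each of $top({\cal R}_{out})$ and $top({\cal R}_{in})$, either direction of shift drops one of the two blue points from the annulus, which is enough.

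The main thing to justify is the choice of $\epsilon$, and this is the only delicate step. Because $R\cup B$ is finite and no red point lies on the two defining lines, there is a strictly positive vertical gap between each of these lines and the nearest point of $R\cup B$ sitting strictly above or below it. Taking $\epsilon$ smaller than this gap guarantees that the two thin horizontal strips swept by the moving lines contain no point of $R\cup B$ other than $p$. Therefore every point of $(R\cup B)\setminus\{p\}$ retains its inside/outside status with respect to the annulus: all red points remain covered (so ${\cal A'}$ is still $feasible$), no previously uncovered blue point becomes covered, and $p$ leaves the annulus. Consequently ${\cal A'}$ covers strictly fewer blue points than $\cal A$ and hence has smaller penalty, which is what we needed. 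Once the bookkeeping of the two sweeping strips is in place, no further calculation is required.
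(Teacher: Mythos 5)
Your perturbation argument is correct and is exactly the justification the paper intends: the paper states this observation without an explicit proof, but the same $\epsilon$-shift reasoning appears in its justification of the one-dimensional uniform-interval observation and in the ``Remarks'' immediately following this observation, where the shifted annulus $\cal A'$ is used. Your writeup supplies the details the paper omits (the direction of the shift depending on which bounding line carries the blue point, preservation of all four widths and of concentricity, and the choice of $\epsilon$ below the gap to the nearest other point of $R\cup B$), and these details check out.
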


Now we discuss the algorithms for each different type of red-blue rectangular annulus cover problem in the following subsections. To compute the optimal result, we need to generate all $feasible$ annuli for the point-set $R\cup B$.\\

{\bf Remarks: } While generating the annuli, if any one of its boundaries passes through a blue point, then we always shift the corresponding region of $\cal A$ by a very small distance $\epsilon>0$ to discard the blue point (s) on its boundaries and obtain an annulus $\cal A'$ of the same width with less number of blue points (see  Observation~\ref{rru_an}). Once such an annulus $\cal A$ is generated, we shift it before counting the number of blue points inside it, provided its boundary passes through the blue point(s), and we always assume that this shifting operation (by $\epsilon$) is executed in our algorithm, and do not mention it to avoid saying this repetitively.
\subsubsection{Non-uniform Non-concentric Annulus (RBRACNNC-2D)}
\label{non_uniform1}
In this problem, the annulus ${\cal A}$ is non-concentric, and its width is non-uniform, i.e., all the widths are different. Since ${\cal A}$ covers all the red points, we observe the following.
\begin{obsv}
\label{out}
    The left, right, top, and bottom sides of ${\cal R}_{out}$ are defined by 
    leftmost, rightmost, topmost, and bottom-most red points i.e., $q_\ell$, $q_r$, $q_t$ and $q_b$, respectively.
\end{obsv}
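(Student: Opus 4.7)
The plan is to use a sliding/shrinking exchange argument applied independently to each of the four sides of ${\cal R}_{out}$. First I would note that, because ${\cal A}$ is $feasible$, ${\cal R}_{out}$ must contain every red point, so each side of ${\cal R}_{out}$ must already lie at least as far out as the corresponding extreme red point; e.g., the $x$-coordinate of $left({\cal R}_{out})$ is at most $x(q_\ell)$. This gives one direction of the claim trivially.

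For the other direction, suppose for contradiction that in some optimal non-uniform non-concentric annulus the side $left({\cal R}_{out})$ has $x$-coordinate strictly smaller than $x(q_\ell)$. I would slide this side rightward until it passes through $q_\ell$, keeping ${\cal R}_{in}$ and the other three sides of ${\cal R}_{out}$ fixed. Since every red point has $x$-coordinate at least $x(q_\ell)$, all red points remain covered after the shift; since ${\cal R}_{in}$ is unchanged, no red point is newly pushed out through the inner boundary. The region $Left({\cal A})$ strictly shrinks, so the set of blue points inside the annulus can only shrink (or stay the same). Either the shifted annulus strictly improves the objective, contradicting optimality, or it ties and we invoke the paper's tie-breaking convention of reporting an annulus whose boundaries pass through red points (stated in the introduction). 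The symmetric arguments pin $right({\cal R}_{out})$, $top({\cal R}_{out})$, $bot({\cal R}_{out})$ on $q_r$, $q_t$, $q_b$, respectively, and since each slide touches only one side of ${\cal R}_{out}$, the four moves are independent and may be composed.

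The main obstacle I anticipate is mild and is a well-definedness concern rather than a combinatorial one: the slide changes exactly one of the four widths $w_\ell, w_r, w_t, w_b$, and I must check that the resulting annulus remains in the non-uniform non-concentric class, i.e., that the four widths stay pairwise distinct and that ${\cal R}_{in}$ and (the new) ${\cal R}_{out}$ stay non-concentric. Generically this is automatic. In the non-generic case where sliding all the way to $q_\ell$ would make two widths coincide, one can instead stop an infinitesimal $\epsilon>0$ short of $q_\ell$; this still reduces (or preserves) the blue count and contradicts optimality unless the original side already passed through $q_\ell$. Hence the observation holds, fixing ${\cal R}_{out}$ uniquely and reducing the search in the algorithm that follows to the choice of ${\cal R}_{in}$ alone.
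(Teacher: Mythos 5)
Your argument is correct and is exactly the (unstated) justification behind the paper's observation: the paper simply asserts it from the fact that a $feasible$ annulus must contain all red points, together with the tie-breaking convention from the introduction that boundaries pass through red points, and your slide-each-side-inward exchange makes that explicit. The worry about leaving the non-uniform non-concentric class is harmless, since that class carries no width constraints that the algorithm actually relies on, but your $\epsilon$-perturbation fallback is a fine way to dispose of it.
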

As a preprocessing task, we store all the red points in the two separate sorted arrays (one with respect to $x$-coordinate and the other with respect to $y$-coordinate). Similarly, we also store all the blue points sorted order in two arrays.\\[0.1in]
\noindent{\bf Algorithm:}\\[0.1in]
We consider the set of red points $\{q_i|~q_i\in {R} \setminus \{q_\ell,q_r,q_t,q_b\}\}$ to generate ${\cal R}_{in}$ of all $feasible$ annuli (as per the Observation~\ref{out}). 
For this, we take such a red point $q_i$ to construct all $feasible$ annuli with the left side of ${\cal R}_{in}$ being defined by that $q_i$. For such a $q_i$, we consider all the red points $q_j\in { R}\setminus \{q_\ell,q_r,q_t,q_b,q_i\}$ 
so that $q_j$ defines the right side of all ${\cal R}_{in}$ which are generated as follows.
We first construct an empty rectangle ${\cal R}_{in}$ being defined by two opposite corners $q_i$ and $q_j$. Then, we process the remaining red points sequentially in the increasing order of their $x$-coordinate to update the ${\cal R}_{in}$ and take each such red point as current $q_j$. We update the ${\cal R}_{in}$ so that its left and right sides always pass through $q_i$ and current $q_j$, respectively, and ${\cal R}_{in}$ does not contain any other red point inside it. We repeat the above procedure for all such $q_i$ and report the annulus with minimum number of blue points inside it as the ${\cal A}_{\min}$. 
\begin{lema}
\label{const_time}
  It needs a constant amount of time to update the ${\cal R}_{in}$.
\end{lema}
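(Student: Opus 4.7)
The plan is to argue that advancing the current $q_j$ to the next red point $q_{j+1}$ in increasing $x$-order changes the geometric picture in only one essential way, and that this change propagates to $\mathcal{R}_{in}$ through a constant number of comparisons.

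First I would observe that, because the red points are scanned in sorted $x$-order and the right side of $\mathcal{R}_{in}$ only moves rightward, the only red point that newly enters the open $x$-strip $(x(q_i), x(q_{j+1}))$ is the old $q_j$ itself; every other red point either was already interior or was (and remains) exterior. Consequently the only emptiness constraint that can fail after the advance is the one corresponding to the old $q_j$.

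Next, alongside $\mathcal{R}_{in}$ I would maintain only the current top and bottom sides $y_t, y_b$ together with pointers to the red points inducing them. The update then reduces to (i)~setting the right side to $x(q_{j+1})$; (ii)~verifying $y_b \le y(q_{j+1}) \le y_t$ so that $q_{j+1}$ lies on the right side as required; and (iii)~comparing $y(\text{old } q_j)$ against $y_t$ and $y_b$, and, if $y_b < y(\text{old } q_j) < y_t$, tightening exactly one of $y_t$, $y_b$ to $y(\text{old } q_j)$ so that this point is pushed onto the boundary. Each step is a fixed number of coordinate comparisons.

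The main point to justify is that step~(iii) is indeed resolvable by $O(1)$ work without re-scanning red points in the strip. This follows because the previous $\mathcal{R}_{in}$ was already empty of red points in its interior, so the pair $(y_b, y_t)$ already encodes every earlier emptiness constraint; only the newly interior point (old $q_j$) needs to be re-examined. The two sub-cases arising from the two possible corner orientations of $q_i$ and $q_{j+1}$ (upper-left/lower-right versus lower-left/upper-right) contribute only an additional constant factor. Hence each update runs in $O(1)$ time, proving the lemma.
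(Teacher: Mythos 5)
Your proposal is correct and follows essentially the same route as the paper's proof: the paper also argues that only the previously processed red point (the old $q_j$) can newly violate the emptiness of ${\cal R}_{in}$, so the update reduces to a constant number of comparisons against the current top and bottom boundary points, with the ``no feasible update'' cases occurring exactly when the required point cannot be kept on the boundary. Your version merely spells out explicitly, via the invariant on $(y_b,y_t)$, what the paper delegates to its four figure cases.
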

\begin{proof}
    While updating ${\cal R}_{in}$ with current $q_j$, we need to check only the four boundary points of ${\cal R}_{in}$ constructed with the red point $q_{j-1}$ in the previous iteration.
    If $q_j$ lies above the top side of ${\cal R}_{in}$ (see Figure~\ref{fig:const_fig}(c)) or below the bottom side of ${\cal R}_{in}$ (see Figure~\ref{fig:const_fig}(d)), then no update of ${\cal R}_{in}$ is possible since a red point appears inside ${\cal R}_{in}$.
    In other cases, we update ${\cal R}_{in}$ as shown in Figure~\ref{fig:const_fig}(a) and Figure~\ref{fig:const_fig}(b). Thus, it needs a constant amount of time to update ${\cal R}_{in}$. 
\end{proof}
\begin{figure}[ht]
 \captionsetup{belowskip=-5pt}
 \vspace{-0.1in}
    \centering
    \includegraphics[scale=0.85]{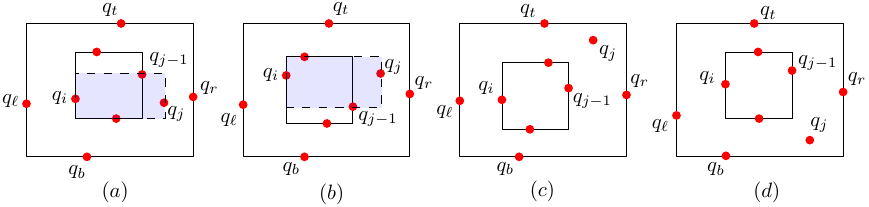}
    \caption{${\cal R}_{in}$ can be updated only in (a) and (b) as shown by the shaded rectangle.}
    \label{fig:const_fig}
\end{figure}

{\bf Proof of correctness:} The correctness proof of our algorithm is based on the following claim.
\begin{clam}
    Our algorithm generates all the $feasible$ annuli.
\end{clam}
\begin{proof}
    We generate the annulus $\cal A$ using the Observation~\ref{out} so that no red point lies outside ${\cal R}_{out}$. Also, we follow the Lemma~\ref{non_con} to compute all possible ${\cal R}_{in}$ in our algorithm so that no red points lie inside the boundary of ${\cal R}_{in}$. Hence, only the $feasible$ annuli are generated in our algorithm. For each red point $q_i$ lying inside the ${\cal R}_{out}$, we generate all possible ${\cal R}_{in}$ and thus the result is proved.  
\end{proof}
 Finally, our algorithm chooses the annulus with minimum blue points covered, and thus it is optimal.

The above algorithm computes all possible rectangles ${\cal R}_{in}$ with $q_i$ on its left side (containing no red points inside it) and counts the respective number of blue points covered by the annulus $\cal A$ in $O(m+n)$ amortized time (using the Lemma~\ref{const_time}). Since there are $O(n)$ red points, we obtain the following result.

\begin{theo}
For a set of bichromatic points lying on $\mathbb{R}^2$, the $RBRACNNC-2D$ problem can be solved in $O(n(m+n))$ time using $O(m+n)$ space.
\end{theo}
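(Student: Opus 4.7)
The plan is to verify the running-time and space bounds of the algorithm that has just been described, since its correctness has already been argued via the preceding claim and Lemma~\ref{non_con}. By Observation~\ref{out}, $\mathcal{R}_{out}$ is uniquely determined by the four extreme red points $q_\ell, q_r, q_t, q_b$ and can be constructed in $O(n)$ time after the sorting preprocessing. The remaining work is to enumerate all valid inner rectangles $\mathcal{R}_{in}$ consistent with Lemma~\ref{non_con} and, for each, determine the number of blue points in the resulting annulus.

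First, I would account for the outer loop: it iterates over the $O(n)$ choices of red point $q_i$ that fixes the left side of $\mathcal{R}_{in}$. For each such $q_i$, the inner loop sweeps the remaining red points $q_j$ in increasing $x$-coordinate order, maintaining a candidate $\mathcal{R}_{in}$ whose left and right sides pass through $q_i$ and the current $q_j$ and which contains no other red point strictly inside. By Lemma~\ref{const_time}, each transition from $q_{j-1}$ to $q_j$ updates $\mathcal{R}_{in}$ in $O(1)$ time, so maintaining all $O(n)$ candidate inner rectangles for a fixed $q_i$ costs $O(n)$ just for the geometric updates.

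The main obstacle will be bounding the total cost of counting blue points inside each candidate annulus by $O(m+n)$ amortized per outer iteration, rather than the naive $O(m)$ per inner step. The approach I would take is to keep a running count of blue points lying inside $\mathcal{R}_{in}$, and express the blue count inside the annulus as $|B \cap \mathcal{R}_{out}| - |B \cap \mathcal{R}_{in}|$, where the first term is computed once in $O(m)$ time. Using the presorted arrays of $B$ in $x$ and $y$, the boundary of $\mathcal{R}_{in}$ is advanced with pointers along those arrays, and each blue point is either added to or removed from the running count only when a boundary sweeps over it. Because the right side of $\mathcal{R}_{in}$ moves only rightward during the sweep for a fixed $q_i$, and the top and bottom sides are driven by the monotone advance of the red sweep line in a controlled way (they can only tighten as new red points enter), each blue point is charged a constant number of times, giving $O(m+n)$ total for one value of $q_i$.

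Multiplying the $O(n)$ outer iterations by the $O(m+n)$ amortized inner cost yields the claimed $O(n(m+n))$ running time. The $O(m+n)$ space bound is immediate from storing only the four presorted arrays of $R$ and $B$ together with constant auxiliary state (the current $\mathcal{R}_{in}$, the running blue counts, and the best annulus found so far). Correctness of the reported annulus is inherited from the previously proved claim that every feasible annulus consistent with Lemma~\ref{non_con} is generated during the enumeration, so that taking the minimum over them produces ${\cal A}_{\min}$.
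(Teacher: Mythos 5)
Your proposal follows essentially the same argument as the paper: an outer loop over the $O(n)$ choices of $q_i$ defining the left side of ${\cal R}_{in}$, constant-time updates of ${\cal R}_{in}$ via Lemma~\ref{const_time}, and an amortized $O(m+n)$ charge per outer iteration for maintaining the blue-point count, giving $O(n(m+n))$ time and linear space. In fact, you spell out the amortized blue-point counting (the $|B\cap {\cal R}_{out}|-|B\cap {\cal R}_{in}|$ decomposition and the monotone pointer advance) in more detail than the paper, which simply asserts the $O(m+n)$ amortized bound.
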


\subsubsection{Non-uniform Concentric Annulus (RBRACNC-2D)}
\label{non_uni_con}
In this case, the horizontal width $w_h$ and the vertical width $w_v$ of the annulus ${\cal A}$ are different. Among the six defining points of $\cal A$ (see Lemma~\ref{lm1}), either four points lie on the ${\cal R}_{in}$ and the remaining two points lie on ${\cal R}_{out}$ (see Figure~\ref{fig3}(a)) or vice versa (see Figure~\ref{fig3}(c)), or three points lie on both ${\cal R}_{in}$ and ${\cal R}_{out}$ (see Figure~\ref{fig3}(b)).
\begin{figure}
    \centering
\includegraphics[width=\textwidth]{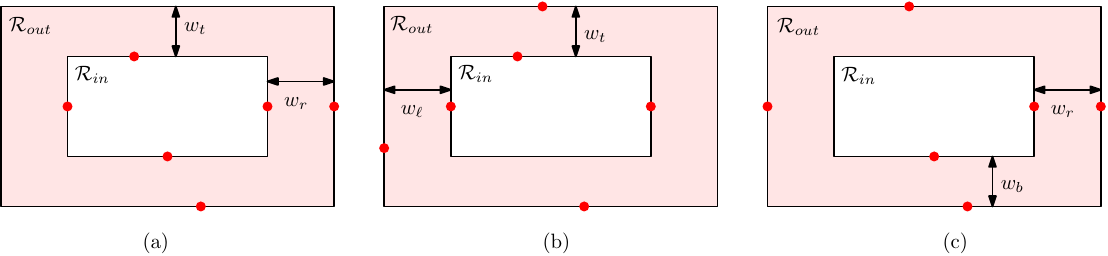}
    \caption{Instance of rectangular annulus defined by six points.}
    \label{fig3}
\end{figure}

\begin{fact}
\label{fact2}
Since the annulus $\cal A$ covers all the red points, the left (resp. right) side of the ${\cal R}_{out}$ cannot lie to the right (resp. left) of $q_\ell$ (resp. $q_r$), and the top (resp. bottom) side of the ${\cal R}_{out}$ cannot lie below (resp. above) the $q_t$ (resp.~$q_b$). 
\end{fact}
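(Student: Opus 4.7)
The statement is an immediate consequence of two facts already stated: the annulus $\cal A$ is the closed region bounded by ${\cal R}_{in}$ and ${\cal R}_{out}$, and a point is covered by $\cal A$ precisely when it lies in this region. In particular, every covered point (red or blue) must lie inside ${\cal R}_{out}$. My plan is therefore a short proof by contradiction, applied symmetrically to each of the four sides of ${\cal R}_{out}$.

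First, I would make explicit the containment $\{\text{points covered by }{\cal A}\} \subseteq {\cal R}_{out}$, so that the hypothesis ``$\cal A$ covers all red points'' forces $R \subseteq {\cal R}_{out}$. Next, for the left side, suppose for contradiction that $left({\cal R}_{out})$ lies strictly to the right of $q_\ell$, meaning the $x$-coordinate of $left({\cal R}_{out})$ exceeds $x(q_\ell)$. Since $q_\ell$ is the leftmost red point, this places $q_\ell$ strictly to the left of ${\cal R}_{out}$, hence outside ${\cal R}_{out}$, so $q_\ell$ is not covered by $\cal A$ — contradicting $R \subseteq {\cal R}_{out}$.

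The remaining three assertions are handled by the identical one-line argument, replacing $(q_\ell, left)$ by $(q_r, right)$, $(q_t, top)$, and $(q_b, bot)$ in turn, and flipping the inequality to match the side under consideration. I do not anticipate any real obstacle: the only point that needs to be stated carefully is that ``covered'' here refers to the closed region between ${\cal R}_{in}$ and ${\cal R}_{out}$, which has already been fixed in Section~\ref{prelim}, so the argument reduces to a direct observation about the extreme red points of $R$.
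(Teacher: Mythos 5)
Your argument is correct and matches the paper exactly in spirit: the paper states this as a Fact with no written proof, treating it as an immediate consequence of feasibility (all red points, in particular the extreme ones $q_\ell, q_r, q_t, q_b$, must lie inside the closed region bounded by ${\cal R}_{out}$). Your explicit contradiction argument for each of the four sides is precisely the reasoning the paper leaves implicit, so there is nothing to add or correct.
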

Fact~\ref{fact2} states that one of the following four cases must be true for the annulus~$\cal A$.

\begin{description}
\setlength\itemsep{0.5em}
    \item[{\bf Case (i)}] The $left({\cal R}_{out})$ and $bot({\cal R}_{out})$ pass through $q_\ell$ and $q_b$, respectively.
    \item[{\bf Case (ii)}] The $left({\cal R}_{out})$ and $top({\cal R}_{out})$ pass through $q_\ell$ and $q_t$, respectively. 
    \item[{\bf Case (iii)}] The $right({\cal R}_{out})$ and $bot({\cal R}_{out})$ pass through $q_r$ and $q_b$, respectively. 
    \item[{\bf Case (iv)}] The $right({\cal R}_{out})$ and $top({\cal R}_{out})$ pass through $q_r$ and $q_t$, respectively. 
   \end{description}
\noindent{\bf Algorithm:}\\[0.1in]
First, we generate all possible annuli for Case (i) and Case (ii) together as follows:
\\
We use two vertical sweep lines $L_1$ and $L_2$, where $L_2$ lies to the right of $L_1$ and these two lines generate the left and right sides of ${\cal R}_{in}$, respectively. These lines sweep in the rightward direction. The red points (resp. both red and blue points) lying to the right of $x(q_\ell)$ and above $y(q_b)$ are the event points for the sweep lines $L_1$ (resp. $L_2$). We also take two vertical lines $V_1$ and $V_2$ passing through the left and right boundaries of the ${\cal R}_{out}$, respectively.
\begin{figure}
    \centering
    \includegraphics[scale=0.7]{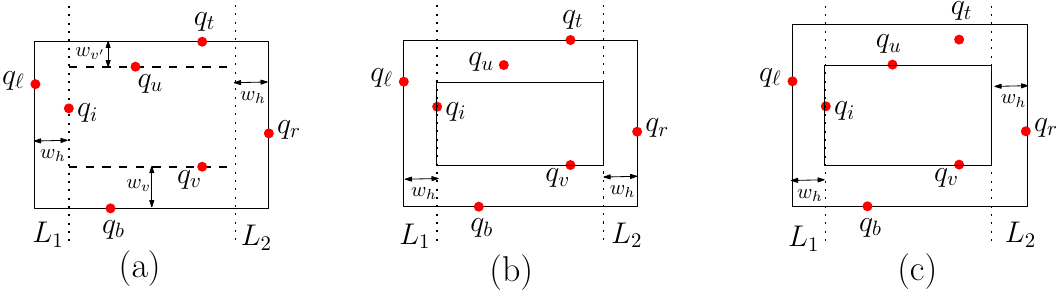}
    \caption{Instance of Case (i) in $RBRACNC-2D$ problem.}
    \label{rbacn}
\end{figure}
\begin{fact}
\label{fact3}
Among the four red points (see Lemma~\ref{lm1}), two red points lie on any two adjacent sides of the ${\cal R}_{out}$, and the same adjacent sides of ${\cal R}_{in}$ also pass through the other two red points so that both the widths $w_h$ and $w_v$ are defined.
\end{fact}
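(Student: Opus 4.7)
The plan is to derive Fact~\ref{fact3} as an immediate corollary of Lemma~\ref{lm1}, using its own proof as the main lever. Recall the key structural conclusion from Lemma~\ref{lm1}: for a non-uniform concentric annulus, the horizontal width $w_h$ must be pinned down by two red points lying on a single vertical side of the annulus --- either one on $left({\cal R}_{out})$ and one on $left({\cal R}_{in})$, or one on $right({\cal R}_{out})$ and one on $right({\cal R}_{in})$. Otherwise $w_h$ could be strictly shrunk without uncovering any red point, contradicting the optimality of the blue-coverage count. Analogously, $w_v$ must be pinned down by two red points lying on matching horizontal sides, either both on $top$ or both on $bot$. That accounts for all four red points demanded by Lemma~\ref{lm1}.

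Next I would trace these four points to the two rectangles. The $w_h$-defining pair places exactly one red point on a vertical side of ${\cal R}_{out}$ and one on the corresponding vertical side of ${\cal R}_{in}$; symmetrically, the $w_v$-defining pair places one red point on a horizontal side of ${\cal R}_{out}$ and one on the corresponding horizontal side of ${\cal R}_{in}$. So ${\cal R}_{out}$ carries exactly two red points, one on a vertical side and one on a horizontal side, and ${\cal R}_{in}$ carries exactly the other two, on the same vertical and horizontal sides. Since every vertical side of a rectangle shares a corner with every horizontal side, the two sides hosting red points on ${\cal R}_{out}$ are adjacent; and because the matching of sides (left-with-left or right-with-right for $w_h$; top-with-top or bot-with-bot for $w_v$) is forced by Lemma~\ref{lm1}, the same pair of adjacent sides of ${\cal R}_{in}$ carries the remaining two red points. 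This is precisely the statement of Fact~\ref{fact3}.

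The only point that requires any care beyond quoting Lemma~\ref{lm1} is excluding the ``opposite-sides'' configuration, e.g., the $w_h$-pair being split between $left({\cal R}_{out})$ and $right({\cal R}_{in})$. I expect this to be the main (and only) obstacle, and I would handle it exactly as in Lemma~\ref{lm1}: in such a split, one of the vertical strips $Left({\cal A})$ or $Right({\cal A})$ has a boundary not touched by any red point, so that boundary can be shifted inward to strictly shrink $w_h$ while still covering every red point, which would either drop a blue point from $B^{in}$ or at worst keep the count the same --- contradicting our convention of reporting an annulus whose boundaries pass through red points at the minimum-blue configuration. The symmetric argument handles the $w_v$-pair. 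With opposite-side configurations ruled out, the adjacency and ``same adjacent sides'' assertions of Fact~\ref{fact3} follow at once from the side-matching statements of Lemma~\ref{lm1}.
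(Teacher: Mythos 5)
Your derivation matches the paper's: Fact~3 is stated there with no separate proof, as an immediate consequence of the side-matching claims already established in the proof of Lemma~1 (the $w_h$-defining pair on left--left or right--right, the $w_v$-defining pair on top--top or bot--bot), which is exactly how you obtain the ``two adjacent sides of ${\cal R}_{out}$ and the same adjacent sides of ${\cal R}_{in}$'' statement. The one caveat is that in the concentric setting a single boundary of $Left({\cal A})$ or $Right({\cal A})$ cannot be shifted independently (moving $left({\cal R}_{in})$ inward forces $right({\cal R}_{in})$ outward), so ruling out the split configuration really requires a combined perturbation of the center and both half-widths that strictly shrinks the annulus; but this is no less precise than the paper's own ``otherwise the width can be minimized further,'' and the conclusion stands.
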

Using the Fact~\ref{fact3}, we first generate all possible annuli with the horizontal width $w_h$ defined by $q_\ell$ and $q_i$, i.e., $w_h=|x(q_i)-x(q_\ell)|$, where the red point $q_i$ lies to the right of $q_\ell$. We take $L_1$ passing through $q_i$.  Next, we take the vertical sweep line $L_2$ at a distance $w_h$ to the left of $x(q_r)$ (see Figure~\ref{rbacn}). Then, we choose two red points $q_u$ and $q_v$ of any color within the region bounded by the two lines $L_1$ and $L_2$ in such a way that $q_u$ and $q_v$ lie immediately above and below the horizontal line passing through $q_i$, respectively. Let $w_v=|y(q_v)-y(q_b)|$ and $w_{v'} = |y(q_t)-y(q_u)|$ (see Figure~\ref{rbacn}(a)). We construct rectangular annuli of vertical width $w_v$ or $w_{v'}$ depending on the following conditions. 
\begin{enumerate}
  \item[(i)] {\bf \boldmath $w_{v'}<w_v$}: In this case, we generate the annuli of vertical width $w_v$.
    First, we create an annulus ${\cal A}$ of width $w_v$ with ${\cal R}_{out}$ passing through $q_\ell$, $q_t$, $q_r$ and $q_b$. The $bot({\cal R}_{in})$ of $\cal A$ passes through $q_v$ and $top({\cal R}_{in})$ lies below $q_u$ (see Figure~\ref{rbacn}(b)). Maintaining the width of $Top(\cal A)$ as $w_v$, we shift the $top({\cal R}_{out})$ (resp. $top({\cal R}_{in})$) upwards sequentially through all the sorted blue points lying within $V_1$ and $V_2$ (resp. $L_1$ and $L_2$) to generate all possible annuli until the $top({\cal R}_{in})$ passes through $q_u$ (see Figure~\ref{rbacn}(c)). We count the number of blue points covered by all such annuli. Note that we generate all the annuli sequentially one after another.
    Once we generate an annulus, we obtain the number of blue points covered by it by updating that of the previous annulus just by adding or subtracting (depending on its color) the number of blue points that enter or leave the new annulus, 
    and hence it needs constant time. In this way, we generate all possible annuli of width $w_v$ with $left({\cal R}_{in})$ and $right({\cal R}_{in})$ being defined by  $L_1$ and $L_2$, respectively.\\
     \item[(ii)] {\bf \boldmath $w_{v'}>w_v$}:
      In this case, no $feasible$ annulus $\cal A$ of vertical width $w_v$ with its $left({\cal R}_{in})$ and $right({\cal R}_{in})$ being defined by $L_1$ and $L_2$, respectively, is possible. However, we can generate the annuli of vertical width $w_{v'}$, by extending and shifting the $Bot(\cal A)$ in an analogous manner as described above in the previous condition (where $w_{v'}<w_v$).
\end{enumerate}
Note that if there are no such red points $q_u$ and $q_v$ inside the region bounded by $L_1$ and $L_2$, then we take the top (resp. bottom) side of the ${\cal R}_{in}$ at a small distance $\epsilon>0$ below (resp. above) the top (resp. bottom) side of ${\cal R}_{out}$.

In the next iteration, the line $L_2$ sweeps rightward to its next event point, say $s$ (may be red or blue colored), and we shift the $Right({\cal A})$ rightwards keeping its width $w_h$. Then, we repeat the above procedure again to compute all the $feasible$ annuli of horizontal width $w_h$, which is realized by two red points lying on the $left({\cal R}_{in})$ and $left({\cal R}_{out})$. We execute the above algorithm by choosing all possible red points $q_i \in R \setminus \{q_\ell,q_r,q_t,q_b\}$ to generate the non-uniform concentric annuli with all possible horizontal widths (determined by two red points), i.e., all the annuli for Case (i) and Case (ii). 

In the same way, we generate the annuli of all possible horizontal width $w_h$ determined by two red points lying on the $right({\cal R}_{in})$ and $right({\cal R}_{out})$. In other words, all the $feasible$ annuli for Case (iii) and Case (iv) are generated. Finally, among all such annuli generated, we report the one with the minimum number of blue points as ${\cal A}_{\min}$.\\[0.1in]
{\bf Proof of correctness:} The correctness proof of our algorithm is based on the following claim.
\begin{clam}
    Our algorithm generates all the $feasible$ annuli and provides the optimal solution.
\end{clam}
\begin{proof}
Our algorithm chooses all possible pairs of red points ($q_i$, $q_j$), where $q_i,q_j \in R\setminus\{q_\ell,q_t,q_r,q_b\}$ as the left side and the right side of ${\cal R}_{in}$ in such a way that the rectangle ${\cal R}_{in}$ contains no other red points inside it, and hence only $feasible$ annulus are generated. We generate all possible annuli following the Fact~\ref{fact2} and Fact~\ref{fact3}.   
For the sake of contradiction, we assume that our algorithm does not generate the optimal solution. 
  The optimal solution must have four red points on its boundaries that satisfy the property mentioned in the Fact~\ref{fact3}, and these four red points of the optimal annulus must have also been chosen by our algorithm for the same purpose as stated in the Fact~\ref{fact3}. We determined all possible annuli for each of the four cases (followed by Fact~\ref{fact2}), and hence the optimal annulus must be reported by our algorithm. This contradicts the assumption, and hence the result is proved.
  
\end{proof}

\begin{theo}
For a set of bichromatic points in $\mathbb{R}^2$, the $RBRACNC-2D$ problem can be solved in $O(n(m+n)^2)$ time along with $O(m+n)$ space.
\end{theo}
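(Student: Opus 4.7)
The plan is to bound separately the three nested levels of work in the algorithm and multiply. First I would appeal to Fact~\ref{fact3}: in Cases (i)--(ii), the horizontal width $w_h$ is fixed by $q_\ell$ together with one red point $q_i \in R \setminus \{q_\ell,q_r,q_t,q_b\}$ placed on $left({\cal R}_{in})$, and Cases (iii)--(iv) are symmetric with $q_r$. Hence the outer loop over $q_i$ has $O(n)$ iterations, and by Fact~\ref{fact2} these four cases exhaust all feasible configurations. For each fixed $q_i$ (and hence fixed $w_h$), the right sweep line $L_2$ stops at every event point lying between $L_1$ and $x(q_r)-w_h$ within the horizontal strip above $y(q_b)$, which gives $O(m+n)$ distinct positions in total.

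Next I would bound the work for a single $(q_i, L_2)$ configuration. The red points $q_u$ and $q_v$ immediately above and below the horizontal line through $q_i$ inside the current vertical strip can be maintained in $O(1)$ amortized time by carrying pointers into the $y$-sorted red-point list built during preprocessing. Once $w_v$ or $w_{v'}$ is selected, the inner sub-algorithm sweeps $top({\cal R}_{out})$ upward through the sorted blue points in the strip $[V_1, V_2]$ while $top({\cal R}_{in})$ moves in lockstep through the blue points in the strip $[L_1, L_2]$, preserving the chosen vertical width. Both sweeps are monotone in $y$ and together traverse $O(m+n)$ events; each event corresponds to a single blue point crossing into or out of $Top({\cal A})$, so the running counter of covered blue points is updated in $O(1)$ per event, with the $\epsilon$-shift of Observation~\ref{rru_an} invoked whenever the current boundary happens to coincide with a blue point.

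Multiplying the three layers yields $O(n) \cdot O(m+n) \cdot O(m+n) = O(n(m+n)^2)$ for Cases (i)--(ii); the symmetric treatment of Cases (iii)--(iv) contributes only a constant factor. For space, I would observe that the algorithm persistently stores just the four sorted arrays of red and blue points (by $x$- and by $y$-coordinate) from preprocessing, plus $O(1)$ sweep pointers and the blue-count counter, which gives $O(m+n)$ space.

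The main obstacle I anticipate is justifying the $O(1)$-per-event update claim when the outer and inner top (respectively bottom) boundaries move independently through two different vertical strips, $[V_1, V_2]$ and $[L_1, L_2]$. The clean way forward is to interleave the two monotone upward sweeps into a single merged sweep over the union of their $y$-sorted event lists; then each event corresponds to exactly one blue point entering or leaving the region $Top({\cal R}_{out}) \setminus Top({\cal R}_{in})$, which is exactly the set of blues added to or removed from the annulus when the shift happens. This keeps the counter update constant-time and correct, by the same incremental argument already used in the analysis of the $RBRACNNC$-$2D$ algorithm.
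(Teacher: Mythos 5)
Your proposal is correct and follows essentially the same decomposition as the paper's own proof: an outer $O(n)$ loop over the red point $q_i$ fixing the horizontal width, $O(m+n)$ positions of the right sweep line $L_2$, and an $O(m+n)$ amortized vertical sweep for each such configuration, multiplied together. Your merged-sweep refinement for the constant-time counter update is a slightly more careful justification of a step the paper states only informally, but it is the same argument in substance.
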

\begin{proof}
 We have presented the algorithm and proved its correctness above. We determine its time complexity as follows. For a fixed position of $Left(\cal A)$ and $Right(\cal A)$ with horizontal width $w_h$, the computation of all possible annuli with different vertical widths needs $O(m+n)$ time. We shift $Right(\cal A)$ at most $(m+n)$ times rightwards. Hence the computation of all possible annuli with horizontal width determined by the two red points on $left({\cal R}_{in})$ and $left({\cal R}_{out})$ needs $O(m+n)^2$. Since there are $O(n)$ possible choices of $q_i$, it requires $O(n(m+n)^2)$ time to compute all possible annuli of distinct widths determined by $Left({\cal A})$ with two red points. Similarly, the annuli constructed in the other cases also need the same time, and hence the result is proved.
    
\end{proof}
\subsubsection{Uniform Annulus (
{$RBRACU$-$2D$})}
\label{uni-2d}
Two red points that determine the width of the uniform annulus must lie on the same side (e.g., left side) of the ${\cal R}_{out}$ and ${\cal R}_{in}$ (see Lemma~\ref{lm2}) and the other three points define the other three regions of $\cal A$ (e.g., $Top(\cal A)$, $Right(\cal A)$, and $Bot(\cal A)$).\\[0.1in]
\noindent{\bf Algorithm:}\\[0.1in]
For each red point $q_i\in R\setminus\{q_\ell,q_t,q_r,q_b\}$, we do the following tasks:

We take four lines $V_1$, $V_2$, $L_1$ and $L_2$ similar to our algorithm for the problem $RBRACNC-2D$ described in the Section~\ref{non_uni_con}, and find out the two respective red points $q_u$ and $q_v$. We compute $w_v=|y(q_v)-y(q_b)|$ and $w_{v'}=|y(q_t)-y(q_u)|$. 
We compute all possible uniform annuli of width $w=|x(q_i)-x(q_\ell)|$, which is determined by two red points $q_\ell$ and $q_i$ lying on the left boundaries of the ${\cal R}_{out}$ and ${\cal R}_{in}$, as follows.
\begin{description}
    \item[(i)] {\bf\boldmath $w<w_{v'}$ or $w<w_v$}: It is not possible to construct any uniform annulus of width $w$ determined by the two red points $q_\ell$ and $q_i$. 
    \item[(ii)] {\bf\boldmath $w_{v'}<w$ and $w_v<w$}: We take two horizontal lines, one at a $w$ distance below the $x(q_t)$, and the other at a $w$ distance above the $x(q_b)$, which represent the $top({\cal R}_{in})$ and $bot({\cal R}_{in})$ respectively, and we obtain a uniform annulus of width $w$ and count the number of blue points inside it. To minimize the number of blue points covered by the annulus, we choose the positions of $Top({\cal A})$ and $Bot(\cal A)$, which are bounded by the pairs ($V_1$, $V_2$) and ($L_1$, $L_2$), as follows:\\ 
    Keeping the width of $Top(\cal A)$ same, i.e., $w$, we shift the $Top({\cal A})$ upwards sequentially through the blue points so that either $top({\cal R}_{out})$ or $top({\cal R}_{in})$ passes through a blue point lying within the pair ($V_1$, $V_2$) or ($L_1$, $L_2$) to generate the other annuli of same width until the $top({\cal R}_{in})$ passes through $q_u$. While shifting, we update the number of blue points inside $\cal A$. Among all such annuli, only the position of $Top(\cal A)$ differs, and we choose the one so that the number of blue points inside $\cal A$ is minimized. 

    Similarly, we do the same tasks to find out the proper position of $Bot(\cal A)$ so that the number of blue points inside $\cal A$ is further minimized. However, in that case, the $Bot(\cal A)$ is to be shifted downwards sequentially through the blue points keeping its width the same (i.e., $w$) until the $bot({\cal R}_{in})$ passes through $q_v$.

    Now, maintaining the same width, we shift the $Right(\cal A)$ rightwards sequentially until the $right({\cal R}_{in})$ passes through the $q_r$, and for each different positions of $Right(\cal A)$, we repeat the above tasks and compute the annulus with the minimum number of blue points whose width is determined by $Left(\cal A)$.
\end{description}

Now, we repeat the above tasks to compute all possible uniform annuli of width determined by two red points lying on the right boundaries (resp. top boundaries,  bottom boundaries) of the ${\cal R}_{out}$ and ${\cal R}_{in}$, and report the one with the minimum number of blue points covered by the annulus.\\[0.1in]
\noindent {\bf Proof of correctness:} The correctness proof of our algorithm is based on the following claim.
\begin{clam}
    Our algorithm generates all the $feasible$ annuli and provides the optimal solution.
\end{clam}
\begin{proof}
  For the sake of contradiction, we assume that our algorithm does not generate the optimal solution. 
  The optimal solution, say ${\cal A}_{opt}$ must have two red points on the same (left/right/top/bottom) side of both ${\cal R}_{out}$ and ${\cal R}_{in}$ to determine the width of ${\cal A}_{opt}$. We choose all the red points  $q_i\in (R\setminus \{q_\ell,q_t,q_r,q_b\})$ and then generate all $feasible$ uniform annuli with width determined by $q_i$ and each one of  $\{q_\ell,q_t,q_r,q_b\}$, following the Lemma~\ref{lm2}. Hence, the two red points of the ${\cal A}_{opt}$ must have been chosen by our algorithm in an iteration, and then it computes all possible annuli, including the annulus $\cal A$ that covers the minimum number of blue points. Thus, our algorithm also reports the optimal annulus, which contradicts the assumption, and hence, it proves the result. 
\end{proof}

\begin{theo}
For a set of bichromatic points lying on $\mathbb{R}^2$, the $RBRACU-2D$ problem can be solved optimally in $O(n(m + n)^2)$ time and $O(m+n)$ space.
\end{theo}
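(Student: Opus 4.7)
The plan is to separate the argument into the already-established correctness part and the resource analysis needed to match the claimed $O(n(m+n)^2)$ time and $O(m+n)$ space bounds. Correctness is handled by the preceding claim, which shows that the algorithm enumerates a superset of all $feasible$ uniform annuli consistent with Lemma~\ref{lm2} and then picks the one covering the fewest blue points; so my proof would focus on quantifying the work performed.

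For the space bound, I would note that the preprocessing stores $R$ and $B$ in two pairs of sorted arrays (by $x$- and by $y$-coordinate), which costs $O(m+n)$, and that the body of the algorithm only needs a constant number of sweep-line pointers (for $V_1, V_2, L_1, L_2$ together with the horizontal sweep lines realising $Top(\cal A)$ and $Bot(\cal A)$), the current width $w$, the auxiliary red points $q_u, q_v$, and a single running counter of blue points currently inside $\cal A$. No auxiliary data structure is required, giving $O(m+n)$ space in total.

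For the time bound, the outer loop over $q_i \in R \setminus \{q_\ell, q_t, q_r, q_b\}$ contributes a factor of $O(n)$ and fixes the candidate width $w = |x(q_i)-x(q_\ell)|$. For each such $q_i$ I would argue that the work is $O((m+n)^2)$: the sweep line realising $Right(\cal A)$ advances through $O(m+n)$ distinct event positions (the $x$-coordinates of points of $R \cup B$ in the relevant range), and for each such position the algorithm performs an upward sweep of $Top(\cal A)$ and a downward sweep of $Bot(\cal A)$, each through $O(m+n)$ events, to minimise the number of blue points inside $\cal A$. Putting these together gives $O((m+n)^2)$ per $q_i$, hence $O(n(m+n)^2)$ overall; the three symmetric cases (width determined by two red points on the $right$, $top$, or $bot$ sides of both rectangles) only add a constant multiplicative factor.

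The step I expect to be the main obstacle is verifying that each elementary event in the nested sweeps really does cost $O(1)$ time, so that the sweep work is indeed linear and not super-linear per pass. This reduces to showing that whenever one of the four boundaries of $\cal A$ is shifted to the next event point in $R \cup B$, exactly one point enters or leaves the annulus, so that the running blue-point counter can be maintained incrementally. A short case analysis on which of $top({\cal R}_{in}), top({\cal R}_{out}), bot({\cal R}_{in}), bot({\cal R}_{out}), right({\cal R}_{in}), right({\cal R}_{out})$ is crossing the event point, together with the $\epsilon$-shift from Observation~\ref{rru_an} handling the degenerate case in which a boundary coincides with a blue point, should be enough to confirm this amortised constant-time update and thereby close the proof.
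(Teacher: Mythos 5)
Your proposal is correct and follows essentially the same route as the paper: an outer loop over the $O(n)$ red points $q_i$ fixing the width, $O(m+n)$ positions of $Right({\cal A})$, and an $O(m+n)$ sweep of $Top({\cal A})$ and $Bot({\cal A})$ per position, with symmetric cases contributing only a constant factor. The only addition is that you make explicit the amortised constant-time maintenance of the blue-point counter at each sweep event, a detail the paper's proof leaves implicit (it simply asserts $O(m+n)$ time per fixed $Left({\cal A})$/$Right({\cal A})$ position), and your justification of it is sound.
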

\begin{proof}
We described the algorithm along with its correctness proof above. We determine its time complexity as follows. For a fixed position of $Left(\cal A)$ and $Right(\cal A)$ with width $w$, the computation of all possible annuli with the same width needs $O(m+n)$ time. We shift $Right(\cal A)$ at most $(m+n)$ times rightwards. Hence the computation of all possible annuli with the width $w$ determined by the two red points on $left({\cal R}_{in})$ and $left({\cal R}_{out})$, needs $O(m+n)^2$. Since there are $O(n)$ possible choices of $q_i$, it requires $O(n(m+n)^2)$ time to compute all possible annuli of distinct widths determined by $Left({\cal A})$ with the two red points lying on the boundaries of $Left(\cal A)$. Similarly, the annuli constructed in the other cases also need the same time, and hence the result is proved. 
\end{proof}
\vspace{-0.2in}
\subsection{Generalized Red-Blue Annulus Cover Problem in  \texorpdfstring{$2$-$D$}~}
\label{joco4}
 The bichromatic points in ${\mathbb R}^2$ are associated with different penalties. We compute the annulus $\cal A$ of non-uniform width with the minimum penalty, and there are two versions (non-concentric and concentric) of this problem depending on the positions of ${\cal R}_{out}$ and ${\cal R}_{in}$. Similar to various types of $RBRAC$ problems in $2D$ studied in Sections~\ref{joco3}, we study the corresponding generalized, (i.e., penalty) version problems in the following three subsections.

\vspace{-0.1in}
\subsubsection{Non-uniform Non-Concentric Annulus (\texorpdfstring{$GRBRACNNC-2D$}))}
\label{grbacn-2d-nc}
All the widths of the annulus are different.
In {\bf Remarks} (see Section~\ref{remarks}), we have already stated how this problem can be solved in $O(n^4(m+n)^2)$ time by computing a rectangle with maximum weight~\cite{ipl/BarbayCNP14}. Thus, we obtain the following result.

\begin{theo}
The problem $GRBRACNNC-2D$ can be solved optimally in $O(n^4(m+n)^2)$ time using $O(m+n)$ space.
\end{theo}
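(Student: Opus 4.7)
The plan is to combine the reduction sketched in Section~\ref{remarks} with the existing Maximum-Weight Box algorithm of Barbay et al.~\cite{ipl/BarbayCNP14}. First I would invoke the reduction: assigning weight $w(q)={\cal P}(q)$ to each red point $q\in R$ and $w(p)=-{\cal P}(p)$ to each blue point $p\in B$, the penalty of an annulus ${\cal A}$ becomes
$$\lambda = {\cal P}(R) - \left(\sum_{q\in R^{in}} w(q) + \sum_{p\in B^{in}} w(p)\right).$$
Since ${\cal P}(R)$ is a fixed constant, minimizing $\lambda$ is equivalent to finding an annulus whose interior (i.e., the region between ${\cal R}_{out}$ and ${\cal R}_{in}$) maximizes the total weight of points it contains.

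Next I would enumerate outer rectangles. By Lemma~\ref{non_con}, in an optimal non-uniform non-concentric annulus each of the four sides of ${\cal R}_{out}$ must be defined by a red point (otherwise the corresponding width could be reduced strictly, lowering the penalty). Hence there are at most $O(n^4)$ candidates for ${\cal R}_{out}$, and I would iterate over all of them. For each fixed ${\cal R}_{out}$, the set of points that contribute to $\lambda$ are exactly those lying inside ${\cal R}_{out}$, and the task becomes: choose an axis-parallel rectangle ${\cal R}_{in}\subseteq {\cal R}_{out}$ such that the total weight of the points it \emph{excludes} from the annulus (equivalently, the points it covers) is maximized. This is precisely the Maximum-Weight Box problem on the sub-instance of weighted points inside ${\cal R}_{out}$, which is solved in $O((m+n)^2)$ time and $O(m+n)$ space by the algorithm of~\cite{ipl/BarbayCNP14}.

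Combining the two steps yields total running time $O(n^4\cdot (m+n)^2)$ and space $O(m+n)$, since the Maximum-Weight Box subroutine and the stored sorted arrays of points can be reused across iterations of the outer loop.

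The main technical point to verify is that the non-concentric freedom is fully captured by this decomposition: because ${\cal R}_{in}$ can slide independently inside ${\cal R}_{out}$, every non-uniform non-concentric annulus is generated exactly once (up to the usual $\epsilon$-shift of blue-bounded sides, handled as in the Remark preceding Section~\ref{non_uniform1}). One should also check the degenerate cases where an inner rectangle collapses to a thin sliver or where fewer than four red points define ${\cal R}_{out}$ (e.g., when $q_\ell = q_b$); these are handled by allowing ${\cal R}_{in}$ to be empty and by the standard convention that two opposite corners can define a rectangle. With these observations in place, the correctness follows from the correctness of the reduction and of the Maximum-Weight Box algorithm, and the complexity bound is immediate.
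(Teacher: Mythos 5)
Your proposal is essentially the paper's own argument: the paper proves this theorem by pointing back to its Remarks section, which does exactly what you describe --- fix ${\cal R}_{out}$ on four red points ($O(n^4)$ choices), reduce the choice of ${\cal R}_{in}$ to a Maximum-Weight Box computation on the points inside ${\cal R}_{out}$ in $O((m+n)^2)$ time, and take the best over all iterations. One sign slip to fix: under your weight assignment $w(q)=+{\cal P}(q)$ for red and $w(p)=-{\cal P}(p)$ for blue, the inner rectangle should \emph{minimize} the total $w$-weight of the points it covers (equivalently, run the Maximum-Weight Box algorithm of~\cite{ipl/BarbayCNP14} on the \emph{negated} weights), since every point captured by ${\cal R}_{in}$ is removed from the annulus; maximizing as written would select the worst inner box for each ${\cal R}_{out}$.
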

\subsubsection{Non-uniform Concentric Annulus (\texorpdfstring{
$GRBRACNC-2D$}))}
\label{last}

Lemma~\ref{lm1} holds good for this problem.
We choose any four red points in $R$, and among these, two red points determine the horizontal width ($w_h$)  and the other two define the vertical width ($w_v$). Now, we can construct a non-uniform concentric annulus $\cal A$ by taking any other two points of any color that lie on the boundary of $\cal A$ (see Lemma~\ref{lm1}). We follow almost the same procedure as described in Section~\ref{non_uni_con} to obtain the following result.
\begin{theo}
The problem $GRBRACNC-2D$ can be solved in $O(n^4(m+n)^2)$ time and $O(m+n)$ space.
\end{theo}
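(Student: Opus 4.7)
The plan is to adapt the algorithm from Section~\ref{non_uni_con} (for the unit-weight problem $RBRACNC-2D$) to the weighted setting. The essential difference is that Section~\ref{non_uni_con} exploits the requirement that every red point be covered, which pins the four sides of $\mathcal{R}_{out}$ to the extreme red points $q_\ell,q_r,q_t,q_b$ and leaves only one further red point on $\mathcal{R}_{in}$ to enumerate. In the generalized version an optimal annulus may legitimately exclude some red points if the gain in blue penalty outweighs the red penalty forfeited, so this shortcut disappears and all four width-defining red points must be enumerated explicitly. This is precisely what turns the $O(n)$ outer loop of Section~\ref{non_uni_con} into the $O(n^4)$ outer loop claimed in the theorem.

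By Lemma~\ref{lm1}, an optimal non-uniform concentric annulus is defined by six boundary points, four of which are red: a pair sharing a vertical side of $\mathcal{R}_{out}$ and $\mathcal{R}_{in}$ (fixing $w_h$ and the $x$-coordinate of that side) and a pair sharing a horizontal side (fixing $w_v$ and the $y$-coordinate of that side). There are four symmetric sub-cases depending on whether the vertical pair sits on the left or right and whether the horizontal pair sits on the top or bottom; I describe only the ``left--bottom'' case. For this sub-case I would enumerate every ordered pair $(q_i,q_j)\in R\times R$ with $x(q_i)<x(q_j)$ as the two red points on $left(\mathcal{R}_{out})$ and $left(\mathcal{R}_{in})$, and similarly every ordered pair $(q_k,q_\ell)$ with $y(q_k)<y(q_\ell)$ as the two red points on $bot(\mathcal{R}_{out})$ and $bot(\mathcal{R}_{in})$. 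This yields $O(n^4)$ quadruples, each pinning $Left(\mathcal{A})$ and $Bot(\mathcal{A})$ together with the widths $w_h=x(q_j)-x(q_i)$ and $w_v=y(q_\ell)-y(q_k)$.

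For each quadruple, the two remaining free parameters are the position of $right(\mathcal{R}_{out})$, which carries $right(\mathcal{R}_{in})$ rigidly with it under the concentric constraint, and the position of $top(\mathcal{R}_{out})$, which carries $top(\mathcal{R}_{in})$ rigidly. Following the sweep strategy of Section~\ref{non_uni_con}, I would sweep $Right(\mathcal{A})$ rightwards through the $O(m+n)$ $x$-events of $R\cup B$, and for each fixed $Right(\mathcal{A})$, sweep $Top(\mathcal{A})$ upwards through the $O(m+n)$ $y$-events of $R\cup B$. The running penalty $\lambda$ is maintained incrementally in amortized $O(1)$ time per event by adding or subtracting $\mathcal{P}(\cdot)$ as points enter or leave $\mathcal{A}$, with the sign dictated by the point's color; whenever a boundary would land on a blue point, Observation~\ref{rru_an} is applied to perturb by $\epsilon$ and exclude it. This inner sweep costs $O((m+n)^2)$ per quadruple, so $O(n^4(m+n)^2)$ overall, with only $O(m+n)$ space for the sorted coordinate arrays and the running penalty counter.

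The main technical obstacle I anticipate is justifying the amortized $O(1)$ update when both the outer and inner right (respectively top) boundaries translate rigidly together: each point of $R\cup B$ must be charged to at most a constant number of events per vertical sweep. This holds because as $Top(\mathcal{A})$ moves monotonically upward, a given point can cross the outer-top boundary (potentially entering $\mathcal{R}_{out}$) at most once and the inner-top boundary (potentially entering $\mathcal{R}_{in}$) at most once, so its membership in the annulus toggles $O(1)$ times, giving $O(1)$ penalty updates per point per sweep. Correctness then follows directly from Lemma~\ref{lm1}: any annulus realising the minimum penalty has its four width-defining red points among the quadruples enumerated, and the two further boundary positions are examined exhaustively, so the minimum over all generated annuli is optimal.
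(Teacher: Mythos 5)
Your proposal is correct and follows essentially the same route as the paper: enumerate the $O(n^4)$ quadruples of red points that fix $Left(\mathcal{A})$ and $Bot(\mathcal{A})$ (hence $w_h$ and $w_v$), then for each quadruple sweep $Right(\mathcal{A})$ and $Top(\mathcal{A})$ through the $O(m+n)$ coordinate events with constant-amortized penalty updates, for $O((m+n)^2)$ per quadruple. Your explicit justification of the $O(1)$ amortized update and of why the extreme-point shortcut of Section~\ref{non_uni_con} no longer applies is a welcome elaboration, but it does not change the argument.
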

\begin{proof}
    We choose the four red points in $n\choose 4$ (i.e., $O(n^4)$ ) ways to determine a pair of widths (say $Left(\cal A)$ and $Bot(\cal A)$) of all possible annuli. Now for each such pair, we select two points of any color in $R\cup B$ that defines two of the remaining sides of the rectangles in ${\cal A}$, say $TS({\cal R}_{out})$ and $RS({\cal R}_{out})$ and construct a non-uniform concentric annulus $\cal A$. We can shift $Top(\cal A)$ upwards, keeping its width same, and compute its penalty for each different $Top(\cal A)$ in $O(m+n)$ times. Then we shift $Right(\cal A)$ at most ($m+n-4$) times and repeat the above steps. For each of the fixed positions of $Left(\cal A)$ and $Bot(\cal A)$ (that are determined by the two red points on each of these regions), computations of all possible annuli along with their penalties need $O(m+n)^2$ time. This proves our result. 
\end{proof}

\subsubsection{Uniform Annulus (\texorpdfstring{$GRBRACU-2D$}))}

Lemma~\ref{lm2} holds good for this problem.
We choose any two red points in $R$ that determine the width $w$ of the annulus. Now, we can construct a uniform annulus $\cal A$ by taking three points of any color that lie on the boundaries of $\cal A$ (see Lemma~\ref{lm2}). We follow almost the same procedure as in Section~\ref{uni-2d} to obtain the following result.

\begin{theo}
The problem $GRBRACU-2D$ can be  solved optimally in $O(n^2(m+n)^3)$ time and $O(m+n)$ space.
\end{theo}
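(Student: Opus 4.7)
The plan is to adapt the algorithm from Section~\ref{uni-2d} (for the problem $RBRACU-2D$), replacing the counting of blue points by the computation of the penalty $\lambda$. By Lemma~\ref{lm2}, every candidate optimal uniform annulus has its width $w$ determined by two red points lying on the same side of both ${\cal R}_{out}$ and ${\cal R}_{in}$, while the other three sides are each defined by a single point of arbitrary color. So the outer loop will enumerate all $O(n^2)$ pairs $(q_a, q_b) \in R \times R$; each pair defines a candidate width $w = |x(q_b)-x(q_a)|$ (or analogously a vertical width) and thereby fixes one of the four regions $Left({\cal A})$, $Right({\cal A})$, $Top({\cal A})$, $Bot({\cal A})$.

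For each such pair (say, fixing $Left({\cal A})$ with width $w$), I would then enumerate all placements of the remaining three regions. Using the sweep strategy from Section~\ref{uni-2d}, $Right({\cal A})$ may be placed at $O(m+n)$ event positions corresponding to points in $R \cup B$; independently, $Top({\cal A})$ and $Bot({\cal A})$ can each be placed at $O(m+n)$ event positions. This gives $O((m+n)^3)$ candidate annuli per choice of the width-determining red pair, and, since any annulus whose boundary passes through a blue point may be shifted by $\epsilon$ to lower its penalty (cf. Observation~\ref{rru_an}), this enumeration covers the optimum.

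The crucial step is to ensure that each candidate annulus is processed in amortized $O(1)$ time, rather than re-computing $\lambda$ from scratch. I would sort $R \cup B$ along both axes once as preprocessing (sharing the global $O((m+n)\log(m+n))$ preprocessing cost), and then structure the three nested sweeps so that each boundary advances monotonically within its loop; when a point crosses a boundary, its signed contribution to $\lambda$ (penalty of a blue point entering, or penalty of a red point leaving) is added or subtracted in constant time. Combined with the reduction formula $\lambda = {\cal P}(R)-\sum_{q \in R^{in}} {\cal P}(q) + \sum_{p \in B^{in}} {\cal P}(p)$, this makes each of the $O((m+n)^3)$ inner updates $O(1)$ amortized.

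Correctness follows from Lemma~\ref{lm2}: the optimal annulus must have some pair of red points determining its width on one of the four sides, and that exact configuration is necessarily enumerated in the outer loop; the inner sweeps then produce the minimum-penalty placement of the remaining three sides. Multiplying the $O(n^2)$ outer choices by the $O((m+n)^3)$ inner work yields the claimed $O(n^2(m+n)^3)$ time bound, and the space remains $O(m+n)$ since only the sorted arrays and the running annulus state are kept. I expect the main obstacle to be verifying that the incremental update truly stays $O(1)$ amortized once nested inside three shift loops; the cleanest argument is that every point contributes at most a constant number of events per fixed outer-loop iteration, because each sweep direction is monotone and every region shift is driven by a distinct event point.
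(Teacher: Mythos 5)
Your proposal matches the paper's proof: both enumerate the $O(n^2)$ pairs of red points that fix the width and one region of $\cal A$, then sweep the remaining three regions over $O(m+n)$ event positions each, maintaining the penalty incrementally to get $O((m+n)^3)$ work per pair and $O(n^2(m+n)^3)$ overall with linear space. The paper's own argument is in fact terser than yours and does not spell out the amortized-update detail you flag, but the approach is the same.
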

\begin{proof}
    We can choose any two red points in $R$ to determine the width $w$ of the uniform annulus in $n \choose 2$, i.e., $O(n^2)$ ways. These two points define one of the regions of $\cal A$, say $Left(\cal A)$. We can define the other three regions of $\cal A$ by taking three points, each one of any color. We shift one of these regions keeping its width same, and generate all different annuli along with their penalties. For a fixed position of $Left(\cal A)$, the above steps need $O(m+n)^3$ time. We need linear space to store the points. This proves the result.
\end{proof}
\subsection{Restricted version of the 
Generalized Red-blue rectangular annulus cover problem}
\label{joco5}

In this section, we consider the restricted version of $GRBRAC$ problems in two dimensions, where 
both the rectangles ${\cal R}_{out}$ and ${\cal R}_{in}$ of the axis-parallel rectangular annulus $\cal A$, are centered on a given horizontal line $L$. The annulus with this constraint is said to be a restricted annulus. Throughout this section, we use restricted annulus to imply a restricted rectangular annulus unless otherwise stated.

\begin{obsv}
    \label{same_width}
    In a restricted annulus ${\cal A}$, the distance of its $TS({\cal R}_{out})$ (resp. $TS({\cal R}_{in}))$ and $BS({\cal R}_{out})$ (resp. $BS({\cal R}_{in}))$ from the line $L$ are same since the center of the rectangle ${\cal R}_{out}$ (resp. ${\cal R}_{in}$) lies on $L$. The top and the bottom width of the restricted annulus ${\cal A}$ must be same. The restricted annulus $\cal A$ is symmetric about the horizontal line $L$.
\end{obsv}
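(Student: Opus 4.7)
The plan is to exploit the fact that an axis-parallel rectangle is symmetric about any horizontal line passing through its center. Since by the restricted-annulus definition both rectangles ${\cal R}_{out}$ and ${\cal R}_{in}$ have their centers on the line $L$, the line $L$ is a horizontal axis of symmetry for each of them individually, and hence for the annulus ${\cal A}= {\cal R}_{out}\setminus \mathrm{int}({\cal R}_{in})$ as a whole.

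I would first argue the claim rectangle-by-rectangle. For an axis-parallel rectangle ${\cal R}$ with center $c$ and height $2h$, the top side lies at $y$-coordinate $y(c)+h$ and the bottom side at $y(c)-h$, so their vertical distances from the horizontal line through $c$ are both equal to $h$. Applying this to ${\cal R}_{out}$ (with some half-height $h_o$) and to ${\cal R}_{in}$ (with some half-height $h_i$), both of which are centered on $L$ by hypothesis, gives immediately that $TS({\cal R}_{out})$ and $BS({\cal R}_{out})$ are at the same distance $h_o$ from $L$, while $TS({\cal R}_{in})$ and $BS({\cal R}_{in})$ are at the same distance $h_i$ from $L$.

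From these two equalities, the top width $w_t$ of ${\cal A}$ equals $h_o-h_i$, and the bottom width $w_b$ of ${\cal A}$ also equals $h_o-h_i$; hence $w_t=w_b$. Finally, because reflection across $L$ maps ${\cal R}_{out}$ to ${\cal R}_{out}$ and ${\cal R}_{in}$ to ${\cal R}_{in}$ (as both are symmetric about $L$), it maps ${\cal A}$ to itself, proving that ${\cal A}$ is symmetric about $L$. There is no real obstacle here; the statement is a direct consequence of the symmetry of an axis-parallel rectangle about the horizontal line through its center, and the restriction that both centers lie on the single line $L$.
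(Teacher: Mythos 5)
Your proposal is correct and follows essentially the same reasoning the paper gives: the observation is justified inline by noting that a rectangle centered on $L$ has its top and bottom sides equidistant from $L$, which immediately yields $w_t = w_b$ and the symmetry of ${\cal A}$ about $L$. Your rectangle-by-rectangle half-height computation is just a slightly more explicit write-up of that same argument.
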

Observation~\ref{same_width} says that the following three types of the {\color{blue}restricted annulus} $\cal A$ is possible based on its widths.
\begin{itemize}
\setlength\itemsep{0.5em}
    \item[(i)] {\color{blue} The uniform annulus}: All the four widths of $\cal A$ are the same (say, $w$), as shown in Figure~\ref{linerec} (a).
    \item[(ii)] {\color{blue}  The non-uniform non-concentric annulus}: The top and bottom widths are equal (say, $w_1$), whereas the left width $w_2$ and right width $w_3$ are not equal, i.e., $w_2 \neq w_3$ (see Figure~\ref{linerec}   (b)).
    \item[(iii)] {\color{blue}  The non-uniform concentric annulus}: The top and bottom widths are the same (say, $w_1$), and also, the left and right widths are the same (say, $w_2$); however, $w_1 \neq w_2$ (see Figure~\ref{linerec} (c)).
\end{itemize}

\noindent Based on the aforesaid three types, i.e., uniform, non-uniform concentric, and non-uniform non-concentric of the restricted annulus, we study the corresponding restricted versions of $GRBRAC$ problem which are referred to as $Restricted-GRBRACU$, $Restricted-GRBRACNC$ and $Restricted-GRBRACNNC$, respectively.
We present the algorithms for all these problems to determine their optimal solutions in the following three subsections.
\subsubsection{Uniform Annulus (\boldmath \texorpdfstring{$Restricted-GRBRACU$}))}
\label{restricted1}
In this section, we compute the uniform rectangular annulus with a minimum penalty whose center lies on a given horizontal line $L$.
\begin{lema}
    \label{rru_2}
    Four points are sufficient to define the {\color{blue} restricted version of a uniform rectangular annulus}, and among these points, two must be red
    so that the annulus has a minimum penalty.
\end{lema}
\begin{proof}
    The width $w$ of a uniform annulus ${\cal A}$ is determined by two red points lying on the same side of ${\cal R}_{out}$ and ${\cal R}_{in}$; otherwise, it can be reduced further to minimize the penalty of $\cal A$.
    If the width $w$ of $\cal A$ is determined by  $Left(\cal A)$ (resp. $Right(\cal A)$), then a single point of any color (red or blue) is sufficient to define both $Top(\cal A)$ and $Bot(\cal A)$ of width $w$, since $\cal A$ is symmetric about a given horizontal line $L$ (see Observation~\ref{same_width}); and also $Right(\cal A)$ (resp. $Left(\cal A)$) is defined by a single point of any color.

     Similarly, if the width $w$ of $\cal A$ is determined by $Top(\cal A)$ (resp. $Bot(\cal A)$), then two points, each one of any color (red or blue) is sufficient to define $Left(\cal A)$ and $Right(\cal A)$  of the same width (=$w$) separately.  This proves the result. 
\end{proof}

\noindent\textbf{Algorithm}:
Without loss of generality, we assume that the width $w$ of the restricted uniform annulus is determined by the two red points, say $q_i$ and $q_j$, lying on the left sides of ${\cal R}_{out}$ and ${\cal R}_{in}$, respectively, i.e., $w=|x(q_i)-x(q_j)|$.
\begin{figure}[ht]
    \centering
    \includegraphics[width=\textwidth]{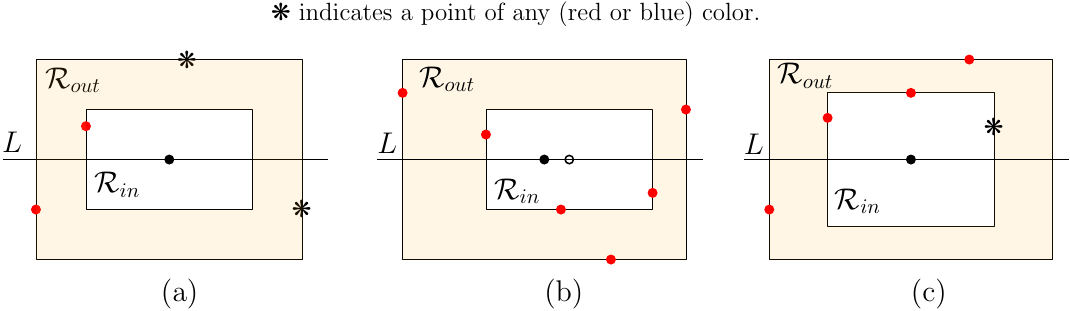}
    \caption{(a) Uniform annulus, (b) Non-uniform non-concentric annulus, and (c) Non-uniform concentric annulus. The centers of the rectangles are restricted to lie on $L$.}
    \label{linerec}
\end{figure}
We choose a point, say $s$, that lie to the right of $q_j$, where the color of $s$ may be red or blue, and define $right({\cal R}_{out})$. Let $V_1$, $V_2$ and $V_4$ be the vertical lines passing through the points $q_i$, $q_j$, and $s$, respectively. We take another vertical line $V_3$ to the left of $V_4$, so that the horizontal distance between $V_3$ and $V_4$ is $w$. Now, we compute all possible uniform annuli for which $Top(\cal A)$ and $Bot({\cal A})$ are bounded by the pair $(V_1,~V_4)$. 

First, we construct a uniform annulus $\cal A$ of width $w$ with one of the points $q_i$, $q_j$, and $s$ being at any one of the eight corners of  $\cal A$, so that the distance of both $Top(\cal A)$ and $Bot(\cal A)$ from the line $L$ are equal. We call this annulus as a basic uniform annulus (denoted by ${\cal A}_{basic}$) and compute its penalty. 
Without changing the position of $Left(\cal A)$ and $Right(\cal A)$, we construct all possible uniform annuli of width $w$ as follows.

Keeping the width of ${\cal A}_{basic}$ constant (=$w$), we shift both $Top(\cal A)$ and $Bot(\cal A)$ in upwards and downwards direction by the same distance from $L$, sequentially through the
sorted points (in $R\cup B$) lying above and below $L$, so that either $Top(\cal A)$ or $Bot(\cal A)$ is defined by one of these points, and the annuli are generated sequentially. Once we generate a new annulus, we obtain its penalty by updating the penalty of the previous annulus by adding or subtracting the penalty of the point (depending on its color) that enters or leaves the new annulus. In this way, we generate all possible annuli whose $Left(\cal A)$ is defined by the points $q_i$ and $q_j$, and $Right(\cal A)$ is defined by $s$. We keep track of the annulus with minimum penalty among all the annuli generated so far. Next, keeping the width of $\cal A$ constant (=$w$), we shift $Right(\cal A)$ rightwards sequentially through all the sorted points (with respect to their $x$-coordinates) and for each such $Right(\cal A)$, we repeat the above procedure to generate all the annuli of width $w$ defined by $q_i$ and $q_j$. We update the annulus with a minimum penalty if necessary. 
We choose all possible  $n\choose 2$ pairs of red points ($q_i,~q_j$) as the defining points for the left width $w$ of the uniform annulus and repeat the above procedure.

Similarly, we can execute the above algorithms to determine the annulus whose $Right(\cal A)$ defines the width of $\cal A$.

Now the width of a uniform annulus can also be determined by two red points lying on the boundaries of $Top(\cal A)$ or $Bot(\cal A)$. We show how to generate the annuli with $Top(\cal A)$, which is defined by two red points, say $q_i$ and $q_j$. We choose two points $s$ and $t$ (with any color (blue or red)) and construct a uniform annulus of width, say $w'=|y(q_i)-y(q_j)|$ made by $q_i$, $q_j$, $s$ and $t$, where $s$ and $t$ define $Left(\cal A)$ and $Right(\cal A)$, respectively. We shift only $Right(\cal A)$ rightwards and generate all possible annuli with their fixed position of $Left(\cal A)$. We can also shift $Left(\cal A)$ and repeat the same tasks (mentioned above) to generate all possible annuli for which $Top(\cal A)$ (resp. $Bot(\cal A)$) is at a fixed position. Finally, we choose all possible pairs of red points to define $Top(\cal A)$ (resp. $Bot(\cal A)$) and obtain the annulus ${\cal A}_{\min}$ with the minimum penalty.\\ 

\noindent{\bf Proof of correctness:} The correctness proof of our algorithm is based on the following claim.
\begin{clam}
    Our algorithm generates all possible uniform annuli.
\end{clam}
\begin{proof}
    We consider all the pairs ($q_i$, $q_j$) of red points in $R$ to define $Left(\cal A)$ (resp. $Right(\cal A)$, $Top({\cal A})$ or $Bot(\cal A)$) of the annulus $\cal A$, which determines all possible widths of the annuli to be computed. Then we generate all possible uniform annuli $\cal A$ satisfying the property in Lemma~\ref{rru_2} of a restricted uniform annulus whose center always lies on the line $L$. This proves our result. 
\end{proof}
Among all such annuli, we report the one with a minimum penalty, and hence this is an optimal algorithm.
\begin{theo}
    Our algorithm provides the optimal solution to $Restricted-GRBRACU$ problem in $O(n^2(m+n)^2)$ time and $O(m+n)$ space.
\end{theo}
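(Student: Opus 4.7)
The plan is to establish the theorem in two parts: first, to confirm optimality by appealing to the claim already proved just above the theorem statement, and second, to carefully amortize the work done across all iterations so that the announced $O(n^2(m+n)^2)$ bound falls out cleanly. Correctness is essentially already in hand: the claim shows the algorithm enumerates all uniform restricted annuli consistent with Lemma~\ref{rru_2}, and since the algorithm explicitly tracks the minimum-penalty annulus across this enumeration, returning that minimum yields the optimum. The bulk of the proof will therefore be the running-time analysis, which splits along the four choices for which side of $\cal A$ carries the two defining red points.

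For the time bound I would organize the argument as a nested loop count. The outermost level selects an unordered pair of red points $(q_i,q_j)$ fixing the width $w$; by Lemma~\ref{rru_2} the width is determined by two red points on the same side, and there are $\binom{n}{2}=O(n^2)$ such pairs. Fixing this pair, say as $Left(\cal A)$, the next level sweeps the opposite side $Right(\cal A)$ through all candidate event positions (red/blue points), giving $O(m+n)$ shifts. For each fixed $Left(\cal A)$ and $Right(\cal A)$, the innermost level slides $Top(\cal A)$ and $Bot(\cal A)$ symmetrically about $L$ through the sorted points above and below $L$; this produces $O(m+n)$ distinct symmetric positions. The key observation to highlight is that, once ${\cal A}_{basic}$ is built and its penalty computed, each subsequent symmetric shift only changes the annulus by at most a constant number of points entering or leaving, so the penalty is updated in $O(1)$ amortized time per event. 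Hence one innermost sweep costs $O(m+n)$, one middle sweep costs $O((m+n)^2)$, and over all $O(n^2)$ red-pair choices we obtain $O(n^2(m+n)^2)$. The cases where the defining pair lies on $Right$, $Top$, or $Bot$ are symmetric and contribute the same bound, so no additional factor arises.

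For the space analysis I would simply note the preprocessing already stores the sorted lists of red and blue points by $x$- and $y$-coordinate in $O(m+n)$ space, and the algorithm itself only keeps a constant amount of state per annulus (its four boundaries, its current penalty, and the running best ${\cal A}_{\min}$). No geometric structure of superlinear size (such as a Voronoi diagram) is used in this section, so $O(m+n)$ space suffices.

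The main obstacle I expect is the innermost step: rigorously justifying the $O(1)$-per-event update of the penalty during the symmetric vertical sweep. Because the restricted annulus keeps $Top$ and $Bot$ equidistant from $L$, each event corresponds simultaneously to a point crossing either the upper or lower boundary of ${\cal R}_{out}$ or ${\cal R}_{in}$; one must argue that no more than $O(1)$ points change their inside/outside status per event (true if points are in general position, and otherwise handled by the standard $\epsilon$-perturbation used throughout the paper), and that a pre-sorted event list allows these events to be processed in order. Once this amortized update is justified, the rest of the time bound is a direct multiplication of loop sizes, and the theorem follows.
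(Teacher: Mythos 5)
Your proposal is correct and follows essentially the same route as the paper's own proof: correctness via the enumeration claim plus Lemma on the defining points, and the time bound as $O(n^2)$ red-point pairs times $O(m+n)$ shifts of $Right({\cal A})$ times an $O(m+n)$ symmetric sweep of $Top({\cal A})$ and $Bot({\cal A})$ with constant-time penalty updates. Your added care about the amortized $O(1)$ update and the symmetric side cases only makes explicit what the paper leaves implicit.
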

\begin{proof}
    We have already proved that our algorithm is correct. Now we discuss its time complexity analysis.  Once we fix the position of $Left(\cal A)$ and $Right(\cal A)$ having width $w$ (which is determined as described in the proof of Lemma~\ref{rru_2}), we generate all possible corresponding $Top(\cal A)$ and $Bot(\cal A)$ of width $w$ that are at the same distance away from the line $L$, and it requires $O(m+n)$ time. Next, shift $Right(\cal A)$ rightward at most ($m+n$) times so that the right boundaries of ${\cal R}_{out}$ or ${\cal R}_{in}$ passes through the bichromatic points and repeat the same tasks. Finally, we choose the annulus with the minimum penalty. Since there are $O(n^2)$ pairs of red points, our algorithm needs $O(n^2(m+n)^2)$ time, and hence it proves the result.  
\end{proof}

\subsubsection{Non-uniform Concentric Annulus (\boldmath \texorpdfstring{$Restricted-RBRACNC$}~)}
In this section, we compute the non-uniform concentric rectangular annulus whose center lies on a given horizontal line $L$.

\begin{lema}
    \label{rru_3}
      Five points are sufficient to define the {\color{blue} restricted version of a non-uniform concentric rectangular annulus}, and among these points, four must be red so that the annulus has a minimum penalty.
\end{lema}
\begin{proof}
 Both the horizontal and the vertical width of the non-uniform concentric annulus ${\cal A}$ must be determined by two red points each; otherwise, these widths can be minimized further to obtain another annulus with less penalty. To define the horizontal width $w_h$ of $\cal A$, two red points must lie on $LS({\cal R}_{out})$ and $LS({\cal R}_{in})$ or on $RS({\cal R}_{out})$ and $RS({\cal R}_{in})$. Similarly, for the vertical width $w_v$ of ${\cal A}$ to be defined, the two red points lie on the top or bottom boundaries of ${\cal R}_{out}$ and ${\cal R}_{in}$. If these two red points determine the left (resp. right) width, then a single point of any color lying on the right (resp. left) boundaries of ${\cal R}_{out}$ or ${\cal R}_{in}$, is needed to determine the corresponding right (resp. left) width. So, three points must lie on any three vertical sides of ${\cal R}_{out}$ and ${\cal R}_{in}$. The annulus $\cal A$ is symmetric about the horizontal line $L$ (see Observation~\ref{same_width}). So, two points are sufficient to define both $Top({\cal A})$ and $Bot({\cal A})$, and they must be red as mentioned above. This proves the result. 
\end{proof}

\noindent The algorithm for generating annulus in this case is almost similar to that of the problem in Section~\ref{restricted1}. The only difference is that $Top(\cal A)$ or $Bot(\cal A)$ is defined by two red points (see Lemma~\ref{rru_3}). Hence in this case, we compute all the annuli by increasing the width of $Top(\cal A)$ and $Bot(\cal A)$ monotonically by sequentially traversing through all the red points lying above and below the line $L$.

\begin{theo}
    Our algorithm solves $Restricted-GRBRACNC$ problem optimally in $O(n^3(m+n)^2)$ time and $O(m+n)$ space.
\end{theo}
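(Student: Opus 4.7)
The plan is to follow the template of the proof for the $Restricted$-$GRBRACU$ problem in Section~\ref{restricted1}, adjusted to account for the fact that (see Lemma~\ref{rru_3}) both the horizontal width $w_h$ and the vertical width $w_v$ of a non-uniform concentric restricted annulus must be realized by a dedicated pair of red points. Correctness will be established by showing that the enumeration traverses every configuration consistent with Lemma~\ref{rru_3}: for each pair of red points on $left({\cal R}_{out})$ and $left({\cal R}_{in})$ (or symmetrically on the right), the algorithm fixes $w_h$, then sweeps $Right({\cal A})$ across all $O(m+n)$ candidate positions realizing the fifth defining point, which may be of any color. For each such $(Left({\cal A}), Right({\cal A}))$, the inner stage enumerates all pairs of red points above $L$ that define $w_v$, with the bottom structure being mirrored by the symmetry about $L$ (see Observation~\ref{same_width}). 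A contradiction-style argument, analogous to the one in Section~\ref{restricted1}, then certifies that the optimal annulus must appear in some iteration, because any optimal annulus satisfies Lemma~\ref{rru_3} and hence matches some enumerated configuration.

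For the time bound, I will partition the work into three nested stages. There are $O(n^2)$ choices of the pair of red points that fix $Left({\cal A})$ and hence $w_h$. For each such pair, $Right({\cal A})$ is shifted rightward through at most $O(m+n)$ event points (points of any color lying to the right of the fixed left boundary), with the $\epsilon$-shift of Observation~\ref{rru_an} invoked whenever the boundary coincides with a blue point. For each fixed $(Left({\cal A}), Right({\cal A}))$, the inner phase enumerates $O(n(m+n))$ vertical configurations: the outer top red point is iterated over $O(n)$ choices, and for each such choice the inner top boundary is swept monotonically downward through $O(m+n)$ intermediate event points, while the running penalty is maintained in $O(1)$ per event by adding or subtracting the weight of the point entering or leaving the annulus (together with its mirror image below $L$). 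Multiplying the three factors yields $O(n^2)\cdot O(m+n)\cdot O(n(m+n)) = O(n^3(m+n)^2)$ time. The space usage is $O(m+n)$, since only the sorted arrays of points and two candidate annuli (current and best-so-far) need to be stored.

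The main obstacle will be rigorously establishing the $O(n(m+n))$ cost of the inner phase, rather than the naive $O(n^2(m+n))$ that would result from iterating over all pairs of red points above $L$ and recomputing penalties from scratch. The key observation is that, once the outer top red point is fixed, the inner top boundary sweeps monotonically downward, and each crossing of an event point contributes $O(1)$ work through an incremental update. A subtle bookkeeping issue is that each such crossing above $L$ must be paired with a simultaneous crossing of the mirror image below $L$ (by symmetry of the restricted annulus about $L$), so the $O(1)$ update must also adjust the bottom strip's contribution to the penalty; otherwise the amortized accounting fails. Once this bookkeeping is in place, exhaustiveness of the enumeration combined with Lemma~\ref{rru_3} yields correctness, and the multiplicative time bound follows.
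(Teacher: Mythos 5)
Your proposal matches the paper's own argument: the same three-stage decomposition into $O(n^2)$ red-point pairs fixing $w_h$, $O(m+n)$ rightward shifts of $Right({\cal A})$, and an inner $O(n(m+n))$ sweep of the vertical configurations (outer top boundary over $O(n)$ red points, inner boundary swept monotonically with $O(1)$ incremental penalty updates exploiting the symmetry about $L$), multiplying to $O(n^3(m+n)^2)$ with $O(m+n)$ space. The paper states the inner-phase bound more tersely, but your accounting is the same one it relies on, so no substantive difference.
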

\begin{proof}
    Similar to the problem in Section~\ref{restricted1}, our algorithm generates all possible non-uniform concentric annuli $\cal A$ whose center always lies on the line $L$, and finally, we choose the annulus with the minimum penalty. It needs $O(n(m+n))$ time to generate as well as compute the penalty of all the annuli with its $Left(\cal A)$ and $Right(\cal A)$ being at a fixed position. Then keeping the width same, we shift $Right(\cal A)$ at most $(m+n)$ times (as in the algorithm in Section~\ref{restricted1}) and repeat the same tasks. Since there are $O(n^2)$ pairs of red points to determine the horizontal width, our algorithm needs $O(n^3(m+n)^2)$ time, and thus it proves the result. 
\end{proof}

\subsubsection{Non-uniform Non-Concentric Annulus (\boldmath \texorpdfstring{$Restricted-GRBRACNNC$}~)}
In this section, we compute the non-uniform non-concentric rectangular annulus whose centers (see Figure~\ref{linerec} (b)) lie on a given horizontal line $L$. 

\begin{lema}
    \label{rru_4}
      Six red points are sufficient to define the {\color{blue} restricted version of a non-uniform non-concentric rectangular annulus} so that the annulus has a minimum penalty. 
\end{lema}
\begin{proof}
    In a non-uniform non-concentric annulus $\cal A$, the widths $w_\ell$ and $w_r$ are different, whereas $w_t$ and $w_b$ are equal. To define $w_\ell$ (resp. $w_r$), we need two red points on the left (resp. right) sides of both ${\cal R}_{out}$ and ${\cal R}_{in}$, otherwise the width can be minimized further to reduce the penalty of ${\cal A}$. The annulus $\cal A$ is symmetric about the horizontal line $L$ (see Observation~\ref{same_width}). So, two red points are sufficient to define both $Top({\cal A})$ and $Bot({\cal A})$. Therefore, a total of six red points are sufficient to uniquely define such a restricted annulus $\cal A$ with minimum penalty. 
\end{proof}
In this case, the algorithm for generating annulus is almost similar to that of the problem in Section~\ref{restricted1}. The only difference is that both $Left(\cal A)$ and $Right(\cal A)$ are defined by two red points each (see Lemma~\ref{rru_4}). We compute all the annuli by increasing the width of $Top(\cal A)$ and $Bot(\cal A)$ monotonically by traversing sequentially through all the red points lying above and below the line $L$, respectively.
\begin{theo}
    Our algorithm generates the correct result for the problem $Restricted-GRBRACNNC$ in $O(n^5(m+n))$ time and $O(m+n)$ space.
\end{theo}
\begin{proof}
    Similar to the problem in Section~\ref{restricted1}, our algorithm generates all $feasible$ non-uniform non-concentric annuli $\cal A$ centered on the line $L$, and finally, we select the annulus with the minimum penalty. The time required to generate all the annuli with its $Left(\cal A)$ and $Right(\cal A)$ being at a fixed position needs $O(n(m+n))$ time. We can choose four different red points to define $Left(\cal A)$ and $Right(\cal A)$ in $n\choose 4$ ways. In other words, there are $O(n^4)$ choices of different left and right widths. For the fixed positions of $Left(\cal A)$ and $Right(\cal A)$, we can shift $Top(\cal A)$ and $Bot(\cal A)$ to compute all possible annuli $\cal A$ in $O(n(m+n))$ time. Hence, our algorithm needs $O(n^5(m+n))$ time, and thus it proves the result. 
\end{proof}

\section{Circular annulus}
\label{joco6}
In this section, we compute a circular annulus in $RBAC$ and $GRBAC$ problems, i.e., we consider $RBCAC$ (Red-blue circular annulus cover) and $GRBCAC$ (Generalized Red-blue circular annulus cover) problems.
Throughout this section, we use the term annulus to imply a circular annulus unless otherwise stated. In this section, we use the notations ${\cal C}_{out}({\cal A})$, ${\cal C}_{in}(\cal A)$, $VD(R)$, $FVD(R)$, $VD_i$, $VD_{i,j}$, $FVD_i$, and $FVD_{i,j}$ with their proper meaning as mentioned in Section~\ref{prelim}.
The points lying on the boundary of a circle $\cal C$ are said to be the {\color{blue} defining points} for $\cal C$.
 A circular annulus has the following property~\cite{BergCKO08}.
\begin{prop}\cite{BergCKO08}
\label{default}
    Four points are sufficient to define a circular annulus $\cal A$, where either (i) three points lie on the boundary of its outer circle  ${\cal C}_{out}(\cal A)$ and one point lies on its inner circle ${\cal C}_{in}(\cal A)$, (ii) three points lie on ${\cal C}_{in}(\cal A)$ and one point lies on ${\cal C}_{out}(\cal A)$, or (iii)~two points lie on ${\cal C}_{in}(\cal A)$ and two points lie on ${\cal C}_{out}(\cal A)$. 
\end{prop}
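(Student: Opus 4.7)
The plan is to establish Property~\ref{default} by a parameter-counting argument combined with a case analysis on the distribution of the four defining points between the two concentric circles of the annulus.

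First, I would note that a circular annulus ${\cal A}$ is uniquely specified by four real parameters: the two coordinates $(c_x, c_y)$ of its common center, the outer radius $R$ (the radius of ${\cal C}_{out}({\cal A})$), and the inner radius $r$ (the radius of ${\cal C}_{in}({\cal A})$), subject to the ordering $R>r$. Four independent algebraic conditions are therefore needed to pin ${\cal A}$ down. A point $p=(x_p,y_p)$ required to lie on ${\cal C}_{out}({\cal A})$ contributes the equation $(x_p-c_x)^2+(y_p-c_y)^2=R^2$, and a point required to lie on ${\cal C}_{in}({\cal A})$ contributes the analogous equation with $r$ in place of $R$. Hence any four points yield a system of four equations in the four unknowns $(c_x,c_y,R,r)$, and the proposition amounts to showing that, apart from degenerate configurations, this system admits a solution uniquely matching the annulus.

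Next, I would enumerate the possible distributions $(k,4-k)$, where $k$ points are incident to ${\cal C}_{out}({\cal A})$ and $4-k$ to ${\cal C}_{in}({\cal A})$. The extreme distributions $(4,0)$ and $(0,4)$ are degenerate: the unconstrained circle retains a free radius, so the annulus is not uniquely determined. The remaining distributions $(3,1)$, $(1,3)$, and $(2,2)$ are precisely the three cases listed in the proposition. For $(3,1)$ and $(1,3)$ the argument is short: three non-collinear points already determine a unique circle, which fixes both the center and one of the two radii, and the fourth incidence then fixes the remaining radius; the classical formulas for the circumcenter make the system explicitly solvable.

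The main obstacle is case $(2,2)$, in which neither circle receives three incidences. Here I would argue that the common center of ${\cal A}$ must be simultaneously equidistant from each pair of same-side points, so it lies on both perpendicular bisectors, namely the bisector of the chord joining the two points on ${\cal C}_{out}({\cal A})$ and the bisector of the chord joining the two points on ${\cal C}_{in}({\cal A})$. Whenever these two bisectors are not parallel they meet in a unique point, which is the forced center, and the two radii are then determined by the distances from this point to either pair. The only configurations that fail to pin down a unique annulus are those in which the two bisectors are parallel but distinct (no solution) or coincident (a one-parameter family); such degenerate cases can be handled by a standard general-position perturbation, after which the annulus is uniquely determined. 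Combining the three non-degenerate cases yields the proposition.
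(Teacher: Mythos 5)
The paper offers no proof of Property~\ref{default} at all: it is imported verbatim from the computational geometry literature with a citation to the textbook of de Berg et al., so there is nothing in the paper to compare your argument against line by line. Taken on its own terms, your degrees-of-freedom argument is a correct and self-contained derivation of the stated fact: an annulus has four parameters $(c_x,c_y,R,r)$, the distributions $(4,0)$ and $(0,4)$ leave one radius free, the distributions $(3,1)$ and $(1,3)$ reduce to the circumcircle of three points plus one distance, and in the $(2,2)$ case the center is forced to the intersection of the two chord bisectors; the degenerate configurations (collinear triples, parallel or coincident bisectors) are correctly set aside as non-generic. Two small caveats are worth recording. First, in every case you also need the feasibility condition $R>r$ (and, for the paper's use, that the point assigned to ${\cal C}_{in}({\cal A})$ actually ends up on the \emph{inner} circle); this is a nonemptiness issue rather than a uniqueness issue, but it should be stated. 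Second, your argument establishes only the \emph{sufficiency} claimed in the statement, whereas the algorithm in Section~\ref{joco7} implicitly uses the converse --- that an optimal annulus may be assumed to carry four boundary points in one of these three configurations, which is proved by the usual shrink-the-outer/grow-the-inner perturbation --- so your proof covers the proposition as literally stated but not the stronger fact the enumeration ultimately rests on.
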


\noindent We also state the following properties of $FVD(R)$ and $VD(R)$~\cite{BergCKO08,books/sp/PreparataS85}.
\begin{prop}\cite{BergCKO08}
\label{chap4prop1}
The centers of the circle passing through the three points in $R$ and containing no (resp. all) other points in $R$ define
the vertices of the Voronoi diagram $VD(R)~($resp. farthest-point Voronoi diagram $FVD(R))$.
\end{prop}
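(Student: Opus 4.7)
The plan is to prove the characterization in both directions directly from the definitions of the nearest- and farthest-point Voronoi diagrams. Recall that the Voronoi cell $V(q)$ of a site $q \in R$ in $VD(R)$ consists of all points at least as close to $q$ as to any other site, and a vertex of $VD(R)$ is a point that lies simultaneously on the boundary of three or more such cells; the farthest-point cell $FV(q)$ is defined with the inequality reversed (points at least as far from $q$ as from any other site), and its vertices are characterized analogously.

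First I would handle $VD(R)$. Let $C$ be a circle of radius $r$ centered at $c$ that passes through three points $q_1, q_2, q_3 \in R$ and contains no other point of $R$ in its interior. Then $|cq_1| = |cq_2| = |cq_3| = r$ while $|cq| \ge r$ for every other $q \in R$, which forces $c \in V(q_1) \cap V(q_2) \cap V(q_3)$; hence $c$ is a vertex of $VD(R)$. Conversely, if $c$ is a vertex where $V(q_1)$, $V(q_2)$, $V(q_3)$ meet, then $c$ is equidistant from $q_1, q_2, q_3$ at some common distance $r$ and, by the inequalities defining the three cells, every other red point is at distance at least $r$ from $c$; thus the circle of radius $r$ centered at $c$ passes through $q_1, q_2, q_3$ and is empty of other points of $R$.

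For $FVD(R)$ the argument is obtained simply by reversing the inequalities: if the circle of radius $r$ centered at $c$ passes through $q_1, q_2, q_3 \in R$ and contains every other point of $R$ in its interior, then $|cq_1| = |cq_2| = |cq_3| = r \ge |cq|$ for every other $q \in R$, placing $c$ on the common boundary of $FV(q_1), FV(q_2), FV(q_3)$, which is by definition a vertex of $FVD(R)$; the converse is symmetric to the $VD(R)$ case.

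The only subtlety I anticipate is the degenerate situation in which four or more points of $R$ lie on a common circle, producing vertices of $VD(R)$ or $FVD(R)$ of degree greater than three. Under the standard general-position assumption these degeneracies do not arise; otherwise a high-degree vertex can be viewed as being defined by any three co-circular points satisfying the emptiness (resp.\ coveredness) condition, and the argument above applies verbatim. Since the characterization is a classical result appearing in standard references such as de Berg et al., I expect no technical obstacle beyond this clean bookkeeping of degeneracies.
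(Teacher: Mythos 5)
Your proof is correct. The paper does not actually prove this property---it is stated as a known fact and attributed to de Berg et al.~\cite{BergCKO08}---so there is no internal argument to compare against; your direct derivation from the definitions of the nearest- and farthest-point Voronoi cells (equality of three distances plus the one-sided inequality for all remaining sites, in each direction) is exactly the standard textbook justification, and your remark on co-circular degeneracies is the right caveat. One tiny point of hygiene: for $FVD(R)$ only sites on the convex hull of $R$ have nonempty cells, so the three points realized at a vertex are necessarily hull vertices; this does not affect the validity of your characterization but is worth keeping in mind when the property is applied algorithmically.
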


\begin{prop}\cite{BergCKO08}
\label{chap4prop2}
The locus of the center of the largest empty circles passing through only a pair of points ($q_i$, $q_j$), where $q_i,~q_j\in R$,
defines an edge of $VD(R)$, whereas the edge in a $FVD(R)$ is defined similarly with a difference that the circles cover all the red points in $R$.
\end{prop}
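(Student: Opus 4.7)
The plan is to prove both halves by exploiting that an edge of $VD(R)$ (resp.\ $FVD(R)$) is, by definition, a maximal connected piece of the perpendicular bisector of a pair of sites $(q_i,q_j)$ along which those two sites are simultaneously the nearest (resp.\ farthest) members of $R$. I would therefore argue the stated correspondence in two directions for each diagram.

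For the $VD(R)$ part, I would first take a point $c$ on the edge $e_{ij}$ of $VD(R)$ shared by the Voronoi cells of $q_i$ and $q_j$. By definition of $VD(R)$, $c$ is equidistant from $q_i$ and $q_j$, and no other red point in $R$ is strictly closer to $c$ than $q_i$ (or $q_j$). Setting $r=|cq_i|=|cq_j|$, the circle of radius $r$ centered at $c$ passes through exactly $q_i$ and $q_j$ (in the generic case, interior to $e_{ij}$) and contains no red point of $R\setminus\{q_i,q_j\}$ in its interior; moreover, as $c$ varies along $e_{ij}$, $r$ grows without any other red point entering the disk, so this circle is as large as possible among empty circles through the pair, establishing one inclusion. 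Conversely, if $\mathcal{C}$ is a largest empty circle through exactly the pair $(q_i,q_j)$ with center $c^\star$, then $|c^\star q_i|=|c^\star q_j|$ so $c^\star$ lies on the perpendicular bisector of $q_iq_j$; since no other point of $R$ is strictly inside $\mathcal{C}$, every other $q_k\in R$ satisfies $|c^\star q_k|\ge|c^\star q_i|$, which is exactly the defining condition for $c^\star$ to lie in the closed cells of both $q_i$ and $q_j$, i.e.\ on $e_{ij}$.

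For the $FVD(R)$ part, I would run a symmetric argument with ``nearest'' replaced by ``farthest'': a point $c$ on the edge of $FVD(R)$ bordering the farthest regions of $q_i$ and $q_j$ is equidistant from $q_i$ and $q_j$, and every other $q_k\in R$ is no farther from $c$ than $q_i$; hence the circle of radius $|cq_i|=|cq_j|$ centered at $c$ passes through $q_i,q_j$ and \emph{covers} every other red point of $R$. As $c$ sweeps along this edge, the radius varies but the enclosing property is preserved, yielding the largest circles through only the pair $(q_i,q_j)$ that cover all red points. The converse direction again follows by noting that the bisector condition forces $c^\star$ onto the bisector of $q_iq_j$, while the covering condition forces $|c^\star q_k|\le|c^\star q_i|$ for all $k$, i.e.\ $c^\star$ lies in the closed farthest-point cells of both $q_i$ and $q_j$.

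The main technical nuance, rather than any deep obstacle, is handling the endpoints and the ``passing through only a pair'' qualifier: at the Voronoi vertices bounding $e_{ij}$ the circle picks up a third defining red point (by Property~\ref{chap4prop1}), and at an unbounded end the edge goes to infinity; thus the locus is precisely the relative interior of $e_{ij}$, and I would mention this boundary case explicitly. A minor degeneracy to dismiss is the case of four or more cocircular red points, where an edge can degenerate to a point; the statement still holds under the usual general-position convention. With these endpoint and degeneracy caveats addressed, the bidirectional equivalences above complete the proof.
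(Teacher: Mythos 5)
Your argument is correct: the bidirectional reasoning (bisector membership plus the nearest/farthest conditions characterizing the closed cells of $q_i$ and $q_j$) is the standard derivation of this fact, and your remark about the edge endpoints picking up a third defining point via Property~\ref{chap4prop1} is the right caveat. Note that the paper itself offers no proof here --- it states this as a known property cited from the de~Berg et al.\ textbook --- so your write-up simply supplies the standard textbook justification that the citation defers to.
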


\begin{prop}\cite{BergCKO08,books/sp/PreparataS85}
\label{update}
    Given a Voronoi diagram $VD(R)$ with $|R|=n$. We can insert a new point $a$ (resp. two new points $a$ and $b$) and update $VD(R)$ to obtain $VD(R \cup \{a\})$ (resp. $VD(R \cup \{a,b\})$ ) in $O(n)$ time.
Similarly, we can update a given farthest-point Voronoi diagram $FVD(R)$ to obtain $FVD(R \cup \{a\})$ and $FVD(R \cup \{a,b\})$. 
\end{prop}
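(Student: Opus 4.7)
The plan is to prove the claim by invoking the classical incremental insertion procedure for Voronoi diagrams and showing that a single site insertion into an existing $n$-site diagram costs only linear work. Once this is established for $VD(R)$, the two-point case follows by performing two sequential insertions, each at cost $O(n)$, and the farthest-point variant follows by a symmetric argument on $FVD(R)$. Throughout, I would rely on the fact that $VD(R)$ and $FVD(R)$ each have total combinatorial complexity $O(n)$, so linear work on the diagram is enough to read every feature a bounded number of times.

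For the single-point insertion of $a$ into $VD(R)$, I would execute four steps in order. First, locate the cell $V(s)\subseteq VD(R)$ that contains $a$; a straight-line walk through the cells from any starting site suffices and runs in $O(n)$ time. Second, compute the perpendicular bisector $\ell_{a,s}$ and its two intersections with $\partial V(s)$, giving the first chord of the new cell $V(a)$. Third, trace the boundary of $V(a)$ by a walk: whenever the current chord exits $V(s)$ across an edge shared with a neighbor $V(s')$, replace the active bisector by $\ell_{a,s'}$ and continue, stopping when the trace closes. Fourth, delete the portions of $VD(R)$ that lie strictly inside $V(a)$ and splice the new chords into the DCEL.

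The main obstacle is the running-time analysis. The danger is that the boundary trace could in principle revisit edges many times or that point location could require $\omega(n)$ work. The standard argument is that each old edge of $VD(R)$ is examined a constant number of times during the walk (once when the trace enters the cell on one side, and once, if at all, when it exits), that the set of old cells crossed by $V(a)$ forms a simply connected region whose boundary complexity is bounded by the size of $V(a)$ in the new diagram, and that the total size of the deleted and newly created features is at most $O(n)$ because $VD(R\cup\{a\})$ itself has complexity $O(n+1)$. Combining these bounds gives $O(n)$ time for location, trace, and splicing.

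The extensions are immediate. To insert both $a$ and $b$, apply the above procedure to $a$ in $O(n)$ time, then to $b$ on $VD(R\cup\{a\})$ in $O(n+1)=O(n)$ time. For the farthest-point variant, only the convex-hull vertices of the site set contribute cells, but the incremental topology is analogous: locate the unbounded farthest region whose closure contains the direction in which $a$ would dominate (or detect that $a$ is interior to $\mathrm{conv}(R)$ and contributes no cell), trace the boundary of the new farthest region using bisectors with neighboring hull sites, and splice. The same $O(n)$ bound on diagram complexity yields the same linear update time for $FVD(R\cup\{a\})$ and, by a second sequential insertion, for $FVD(R\cup\{a,b\})$.
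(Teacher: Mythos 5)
This statement is a cited property (\cite{BergCKO08,books/sp/PreparataS85}); the paper supplies no proof of its own, so there is nothing to compare against beyond the references. Your sketch is the standard Green--Sibson-style incremental insertion argument from those sources --- locate the containing cell in $O(n)$, trace the convex new cell so that each old cell contributes one chord and each old edge is touched $O(1)$ times, splice in $O(n)$, and iterate for the second point and the farthest-point variant --- and it is correct; the only point worth stating slightly more carefully is the $FVD$ case, where inserting a point outside $\mathrm{conv}(R)$ may delete entire cells of sites that cease to be hull vertices, but these are exactly the cells swallowed by the new farthest region, so the same boundary trace handles them in $O(n)$ time.
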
 

\noindent Property~\ref{update} says that we can update $VD(R)$ (resp. $FVD(R)$) to obtain $VD_i$ and $VD_{i,j}$ (resp. $FVD_i$ and $FVD_{i,j}$) in linear time.
We first discuss $RBCAC$ problem, followed by its generalized version $GRBCAC$ problem (where the penalties are assigned to each point).

\subsection{Red-Blue Circular annulus cover  
problem}
\label{joco7}
A circular annulus ${\cal A}$ is said to be $feasible$ if it covers all the red points in $R$.
The objective of Red-Blue Circular annulus cover problem (referred to as $RBCAC$ problem) is to compute a $feasible$ circular annulus covering a minimum number of blue points in $B$. Note that both $RBCAC$ and $RBRAC$ problems have the same result in $1$-dimension. So, we mainly discuss $RBCAC$ problem in $2$-dimension. We observe the following.
\begin{obsv}
\label{single_red}
    The colors of all the points lying on ${\cal C}_{out}(\cal A)$ $($resp. ${\cal C}_{in}(\cal A))$ cannot be blue in an optimal solution.
   \end{obsv}
\noindent {\bf Justification of Observation~\ref{single_red}:}  If the colors of all the points lying on ${\cal C}_{out}(\cal A)$ $($resp. ${\cal C}_{in}(\cal A))$ are blue, then keeping the center of ${\cal C}_{out}(\cal A)$ $($resp. ${\cal C}_{in}(\cal A))$ fixed at the same position, we can decrease (resp. increase) the radius of it to obtain another annulus containing less number of blue points. Hence, at least one of the point lying on  ${\cal C}_{out}(\cal A)$ $($resp. ${\cal C}_{in}(\cal A))$ must be red. \hfill $\blacksquare$

\noindent The distance of a point $p$ from a circle ${\cal C}$ is given by $|(|pc|-r)|$, where $c$ and $r$ are the center and radius of $\cal C$, respectively.
\begin{obsv}
 \label{shift}
    If the centers of two intersecting circles ${\cal C}_1$ and ${\cal C}_2$ are separated by a distance $\epsilon$, then the distance of a point on ${\cal C}_2$ from ${\cal C}_1$ is at most $2\epsilon$.
\end{obsv}
\noindent {\bf Justification of Observation~\ref{shift}:} Let $c_1$ and $r$ (resp. $c_2$ and ${\mathscr R}$) be the center and radius of the circle ${\cal C}_1$ (resp. ${\cal C}_2$), respectively (see Figure~\ref{epsilon}). Assume that $e$ is a point of intersection of the circles ${\cal C}_1$ and ${\cal C}_2$. In $\triangle c_1c_2e$, ${\mathscr R} \leq r+\epsilon$ (by triangle inequality). Hence, $|gh|={\mathscr R}-|c_2g|={\mathscr R}-(r-\epsilon) \leq (r+\epsilon)- (r-\epsilon)=2\epsilon$. \hfill $\blacksquare$
    \begin{figure}[ht]
 \centering
 \includegraphics[scale=0.85]{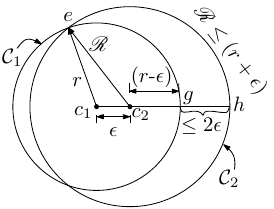}
  \caption{ An instance of Observation~\ref{shift}.}
  \vspace{-0.1in}
    \label{epsilon}
   \end{figure} 

\begin{defn}[$\delta({\cal A})$]
\label{delta}
 Compute the distances of all the points $a \in (R \cup B)$ from ${\cal C}_{out} ({\cal A})$ as well as from ${\cal C}_{in}({\cal A})$, except the point ${a}$ that lies on ${\cal C}_{out}({\cal A})$ or ${\cal C}_{in}({\cal A}))$. Among all such distances computed, the smallest one is defined as {$\delta({\cal A})$}. Thus $\delta({\cal A})$ is determined by both ${\cal C}_{out}({\cal A})$ and ${\cal C}_{in}({\cal A})$.
\end{defn}
\begin{defn}[$ohp(a, VD(\{a,b\}))$]
\label{hp}
    For any two points $a$ and $b$, we define {$ohp(a,$} {$VD(\{a,b\}))$} as the {\em open half-plane} of the partitioning line of $a$ and $b$ in $VD(\{a,b\})$ that contains the point $a$.
\end{defn}
\begin{lema}
\label{one_blue}
    Let ${\cal C}_{out}(\cal A)~($resp. ${\cal C}_{in}(\cal A))$ be defined by three points. If one (resp. two) of these points, say $p_i$ (resp. $p_i$ and $p_j$), is (resp. are) blue, then we can shift the center of ${\cal A}$ and change its width to obtain an annulus, say ${\cal A}'$, which covers the same set of points that are covered by $\cal A$ except the blue point(s) $p_i$ (resp. $p_i$ and $p_j$).   
\end{lema}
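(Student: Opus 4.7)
My plan is to prove the lemma by explicit construction of the perturbed annulus $\cal A'$, handling the cases symmetrically. By Observation~\ref{single_red}, when ${\cal C}_{out}(\cal A)$ is the circle defined by three points, the complementary circle ${\cal C}_{in}(\cal A)$ is defined by a single red point $q_\ell$; an entirely analogous statement holds with the roles of ${\cal C}_{out}$ and ${\cal C}_{in}$ swapped. So, without loss of generality, I would carry out the construction for ${\cal C}_{out}(\cal A)$ and simply note that the inner-circle case is the mirror argument with the center being shifted so that the new inner radius increases (rather than the outer decreases) to swallow the blue defining point(s), while the outer radius varies by at most $2\epsilon$ and thus does not alter the status of any other point.

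For the first sub-case (one blue defining point $p_i$ and two red defining points $q_j, q_k$ on ${\cal C}_{out}(\cal A)$), the key idea is to move the center $c$ of $\cal A$ along the perpendicular bisector of $\overline{q_jq_k}$ toward the midpoint of $\overline{q_jq_k}$, by a distance $\epsilon < \delta(\cal A)/2$, to obtain a new center $c'$. I would then define ${\cal C}_{out}(\cal A')$ as the circle centered at $c'$ with radius $|c'q_j|=|c'q_k|$, and ${\cal C}_{in}(\cal A')$ as the circle centered at $c'$ with radius $|c'q_\ell|$. The geometry guarantees that $p_i$ now lies strictly outside ${\cal C}_{out}(\cal A')$, because the center moved away from the halfplane containing $p_i$ while moving toward the chord $\overline{q_jq_k}$; so $p_i$ is excluded from $\cal A'$.

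For the second sub-case (two blue defining points $p_i, p_j$ and one red defining point $q_k$ on ${\cal C}_{out}(\cal A)$), I would shift $c$ along the segment $\overline{cq_k}$ toward $q_k$ by $\epsilon < \delta(\cal A)/2$ to obtain $c'$, and let ${\cal C}_{out}(\cal A')$ be the circle centered at $c'$ of radius $|c'q_k|$, and ${\cal C}_{in}(\cal A')$ the circle centered at $c'$ of radius $|c'q_\ell|$. The new outer radius strictly shrinks relative to the distances $|c'p_i|$ and $|c'p_j|$, so both $p_i$ and $p_j$ fall outside ${\cal C}_{out}(\cal A')$.

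The main technical obstacle is verifying that no other point in $R\cup B$ changes its inclusion status in the annulus after the perturbation. For this I would invoke Observation~\ref{shift}: since $|cc'|=\epsilon$, every point on ${\cal C}_{out}(\cal A')$ (resp. ${\cal C}_{in}(\cal A')$) lies within distance $2\epsilon<\delta(\cal A)$ of ${\cal C}_{out}(\cal A)$ (resp. ${\cal C}_{in}(\cal A)$). By the definition of $\delta(\cal A)$, any point in $R\cup B$ that is not a defining point of $\cal A$ lies at distance at least $\delta(\cal A)$ from both circles of $\cal A$, hence at distance strictly greater than $0$ from both circles of $\cal A'$ and on the same side as before. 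Therefore the set of covered points of $\cal A'$ agrees with that of $\cal A$ except precisely for the blue defining points we intended to eject, which completes the proof.
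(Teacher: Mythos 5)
Your proposal follows essentially the same route as the paper's proof: the same split into the one-blue and two-blue sub-cases, the same shift of the center toward the midpoint of $\overline{q_jq_k}$ (resp.\ toward the red point $q_k$) by a distance $\epsilon<\delta({\cal A})/2$, and the same appeal to Observation~\ref{shift} together with the definition of $\delta({\cal A})$ to conclude that no point other than the ejected blue defining point(s) changes its covered status. Your mirror treatment of the case where ${\cal C}_{in}({\cal A})$ carries the three defining points likewise matches the paper's "similarly" remark, so the argument is the one the authors give.
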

\begin{proof}
The following two cases need to be considered (see Observation~\ref{single_red}). 
\begin{description}
    \item[(i)] One of the three defining points of a circle is blue:\\
    Without loss of generality, we assume that ${\cal C}_{out}(\cal A)$ is defined by three points, one of which, say $p_i$, is blue, and the other two, say $q_j$ and $q_k$, are red (see Figure~\ref{2r1b}~(a)). In this case, ${\cal C}_{in}(\cal A)$ is defined by a single point (see Property~\ref{default}) of red color (see Observation~\ref{single_red}), and we take this point, say $q_\ell\in R$. 
We compute $\delta(\cal A)$ for the annulus $\cal A$ (see Definition~\ref{delta}).
We shift the center of $\cal A$ towards the midpoint of $\overline{q_jq_k}$ by a small distance $\epsilon<\frac{{\delta}(\cal A)}{2}$, and create a new annulus ${\cal A}'$ centered at, say $c'$, with $|c'q_j|$ and $|c'q_\ell|$ as the radii of ${\cal C}_{out}({\cal A}')$ and ${\cal C}_{in}({\cal A}')$, respectively (see Figure~\ref{2r1b} (b)). Since $\epsilon<\frac{{\delta}(\cal A)}{2}$, the annulus $\cal A'$ covers exactly the same set of points that are covered by $\cal A$, except the blue point $p_i$ (see Observation~\ref{shift}). Hence, this proves our result if one of the three defining points of ${\cal C}_{out}(\cal A)$ is a blue point.

Similarly, we can prove the result if ${\cal C}_{in}(\cal A)$ is defined by three points: one blue point and two red points.

    \item[(ii)] Two of the three defining points of a circle are blue:\\
   Without loss of generality, we assume that ${\cal C}_{out}(\cal A)$ is defined by three points, one of which, say $q_k\in R$ is red, and the other two, say $p_i,~p_j\in B$, are blue (see Figure~\ref{2b1r}~(a)). Now, ${\cal C}_{in}(\cal A)$ must be defined by a single red point, say $q_\ell$ (see Property~\ref{default} and Observation~\ref{single_red}). We compute $\delta(\cal A)$ for the annulus $\cal A$ (see Definition~\ref{delta}). We shift the center of $\cal A$ towards the red point $q_k$ along $\overline{cq_k}$ by a distance $\epsilon<\frac{{\delta}(\cal A)}{2}$ to create  a new annulus ${\cal A}'$ centered at, say $c'$, with $|c'q_k|$ and $|c'q_\ell|$ as the radii of ${\cal C}_{out}({\cal A}')$ and ${\cal C}_{in}({\cal A}')$, respectively (see Figure~\ref{2b1r}~(b)). Since $\epsilon<\frac{{\delta}(\cal A)}{2}$, the annulus ${\cal A'}$ covers all the points inside ${\cal A}$ except the blue points $p_i$ and $p_j$ (see Observation~\ref{shift}). This proves the result.

Similarly, we can also show that the result is correct if ${\cal C}_{in}(\cal A)$ is defined by three points: two blue points and one red point. 
\end{description} 
\end{proof}

   \begin{figure}[ht]
 \centering
 \includegraphics[scale=0.85]{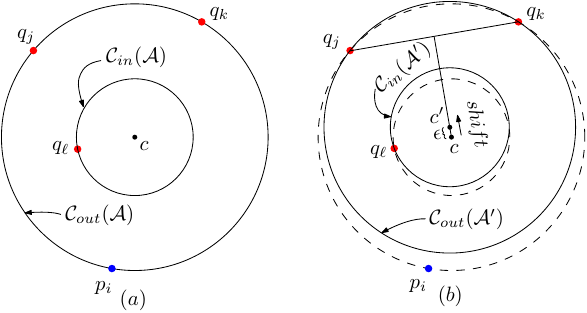}
  \caption{Two red points ($q_j$, $q_k$) and one blue point ($p_i$) define ${\cal C}_{out}(\cal A)$.}
  \vspace{-0.1in}
    \label{2r1b}
   \end{figure}

\begin{figure}[bt]
 \centering
 \includegraphics[scale=0.85]{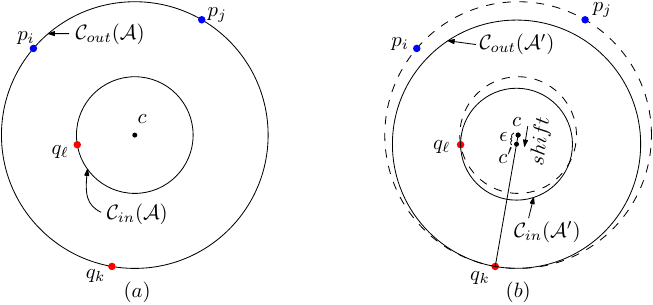}
  \caption{Two blue points ($p_i$, $p_j$) and one red point ($q_k$) define ${\cal C}_{out}(\cal A)$.}
  \vspace{-0.1in}
    \label{2b1r}
   \end{figure}

   \begin{lema}
\label{two_blue_1}
    If both ${\cal C}_{out}(\cal A)$ and ${\cal C}_{in}(\cal A)$ are defined by two points each, and one of these four defining points, say $p_i$ is blue, then we can shift the center of $\cal A$ and change its width to obtain another annulus, say $\cal A'$, which covers the same set of points that are covered by $\cal A$ except the blue point $p_i$.   
\end{lema}
\begin{proof}
    It is similar to Case (ii) in the proof of Lemma~\ref{one_blue}. 
\end{proof}
   
\begin{lema}
\label{two_blue}
    If both ${\cal C}_{out}(\cal A)$ and ${\cal C}_{in}(\cal A)$ are defined by two points each, and one of the defining points of each circle, say $p_i$ (resp. $p_j$) on ${\cal C}_{out}(\cal A)~($resp. ${\cal C}_{in}(\cal A))$ is blue, then we can shift the center of $\cal A$ and change its width to obtain another annulus, say $\cal A'$, that contains the same set of points that are covered by ${\cal A}$ except the blue point(s) $p_i$ and/or $p_j$. In other words, we can generate an annulus $\cal A'$ containing at least one blue point less than that of $\cal A$.  
\end{lema}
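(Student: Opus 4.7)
The plan is to reuse the shifting-and-perpendicular-bisector technique established in Lemmas~\ref{one_blue} and~\ref{two_blue_1}, now tailored to the situation in which two different blue points (one per circle) must be eliminated simultaneously. Denote by $q_i$ the red defining point on ${\cal C}_{out}({\cal A})$ and by $q_j$ the red defining point on ${\cal C}_{in}({\cal A})$. The current center $c$ lies at the intersection of the perpendicular bisector of $\overline{p_iq_i}$ and that of $\overline{p_jq_j}$. My strategy is to compute $\delta({\cal A})$ (Definition~\ref{delta}), fix a shift magnitude $\epsilon<\delta({\cal A})/2$, and produce a new center $c'$ with $|cc'|<\epsilon$ such that the new annulus ${\cal A'}$ with radii $|c'q_i|$ and $|c'q_j|$ keeps $q_i$ on ${\cal C}_{out}({\cal A'})$ and $q_j$ on ${\cal C}_{in}({\cal A'})$ while pushing $p_i$ strictly outside ${\cal C}_{out}({\cal A'})$ and/or $p_j$ strictly inside ${\cal C}_{in}({\cal A'})$. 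Observation~\ref{shift} will then guarantee that no other point in $R\cup B$ can change its inclusion status relative to ${\cal A}$, because every such point is at distance at least $\delta({\cal A})>2\epsilon$ from either boundary circle.

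For the choice of $c'$ I would split on the geometric configuration of $q_i$ and $q_j$ relative to the perpendicular bisector of $\overline{p_jq_j}$. \textbf{Case (i):} if $q_i$ and $q_j$ lie on opposite sides of that bisector, then shifting $c$ by $\epsilon$ toward $q_i$ lies in both $ohp(q_i,VD(\{q_i,p_i\}))$ and $ohp(p_j,VD(\{p_j,q_j\}))$ (Definition~\ref{hp}), which simultaneously evicts $p_i$ from the outer disk and traps $p_j$ inside the inner disk. \textbf{Case (ii):} if $q_i$ and $q_j$ lie on the same side of the bisector of $\overline{p_jq_j}$, split further on whether the two bisectors coincide. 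In subcase (ii.a) the bisectors are distinct lines, so the two open half-planes $ohp(q_i,VD(\{q_i,p_i\}))$ and $ohp(p_j,VD(\{p_j,q_j\}))$ meet in an open wedge whose apex is $c$; any $c'$ chosen in this wedge within distance $\epsilon$ of $c$ works, and again both blue points are eliminated. In subcase (ii.b) the two bisectors coincide, so only a one-dimensional slide along that common line is available; sliding toward $q_i$ by $\epsilon$ removes $p_i$ (and leaves $p_j$ on the inner circle in the limit, but the shift still removes at least one blue point), matching the weaker ``at least one blue point less'' conclusion of the lemma.

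In each case the verification reduces to two short arguments: (a) $|c'q_i|$ and $|c'q_j|$ are well-defined positive radii that still satisfy $|c'q_i|>|c'q_j|$, so ${\cal A'}$ is a legitimate annulus; and (b) by Observation~\ref{shift} together with $\epsilon<\delta({\cal A})/2$, the symmetric difference ${\cal A}\triangle {\cal A'}$ is contained in the set of defining blue points we intentionally moved across the boundary. The main obstacle is precisely the degenerate subcase (ii.b): there the rigidity of the common bisector forbids any direction of shift that eliminates both $p_i$ and $p_j$ at once, which is exactly why the lemma statement weakens the conclusion to ``$p_i$ and/or $p_j$'' rather than ``both.'' A secondary delicate point is showing that the wedge in (ii.a) is actually nonempty, which follows because $c$ lies on the boundary of both open half-planes and the two bisectors are not parallel, so their four complementary sectors each contain an arc of points arbitrarily close to $c$.
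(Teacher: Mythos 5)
Your proposal is correct and follows essentially the same route as the paper's proof: the same three-way case split on the position of $q_i$ and $q_j$ relative to the perpendicular bisector of $\overline{p_jq_j}$ (opposite sides; same side with distinct bisectors; same side with coincident bisectors), the same choice of $c'$ within $\epsilon<\delta({\cal A})/2$ of $c$ using the $ohp$ half-planes, and the same appeal to Observation~\ref{shift} to show no other point changes status. Your added remarks, that the wedge in case (ii.a) is nonempty and that the coincident-bisector case is exactly what forces the weaker ``and/or'' conclusion, are correct and make explicit what the paper leaves implicit.
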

\begin{proof}
    Let the red point that lies on ${\cal C}_{out}(\cal A)$ (resp. ${\cal C}_{in}(\cal A)$) be $q_i$ (resp. $q_j$). The perpendicular bisector of $\overline{p_iq_i}$ and $\overline{p_jq_j}$ intersect with each other at the center $c$ of the annulus $\cal A$. 
    We determine $\delta(\cal A)$ for the annulus ${\cal A}$ (see Definition~\ref{delta}).
    There are three possible cases depending on the positions of $q_i$ and $q_j$ with respect to the perpendicular bisector of $\overline{p_jq_j}$, where the two points $p_j$ and $q_j$ lies on ${\cal C}_{in}(\cal A)$.
    \begin{itemize}
    \setlength\itemsep{0.5em}
        \item[(i)] The two red points $q_i$ and $q_j$ lie on the two opposite sides of a bisector $\overline{p_jq_j}$. We shift the center $c$ towards the red point $q_i$ (see Figure~\ref{fig:same_doublept}) by a distance $\epsilon<\frac{\delta(\cal A)}{2}$ and create another new annulus ${\cal A'}$, centered at, say $c'$ with $|c'q_i|$ and $|c'q_j|$ as the radii of ${\cal C}_{out}(\cal A')$ and ${\cal C}_{in}(\cal A')$, respectively. In this case, all the points in $\cal A$, except the blue points $p_i$ and $p_j$, remain covered by $\cal A'$. 
        \item[(ii)] The two red points $q_i$ and $q_j$ lie on the same side of the bisector $\overline{p_jq_j}$.\\
        Depending on whether the same line or two different lines are the bisectors of both the pairs ($p_i$, $q_i$) and ($p_j$, $q_j$), we have the following two cases:        
        \begin{itemize}
        \setlength\itemsep{0.5em}
        \item[(a)]  The perpendicular bisectors of the pairs ($p_i$, $q_i$) and ($p_j$, $q_j$) are different lines:\\
        In this case (see Figure~\ref{fig:diff_doublept}),
        we consider an arbitrary point, say $c'$ at a distance $\epsilon<\frac{\delta(\cal A)}{2}$ (see Definition~\ref{hp}) from $c$ inside the region $ohp(p_j,VD(\{p_j,q_j\}))\cap ohp(q_i,VD($ $\{q_i,p_i\}))$ (see the shaded region in Figure~\ref{fig:diff_doublept}(a)), and create a new annulus ${\cal A}'$ centered at $c'$ with $|c'q_i|$ and $|c'q_j|$ as the radii of ${\cal C}_{out}(\cal A')$ and ${\cal C}_{in}(\cal A')$, respectively. The annulus ${\cal A}'$ covers exactly the same set of points that are covered by ${\cal A}$, except the blue points $p_i$ and $p_j$.
        \item[(b)] The perpendicular bisectors of the pairs ($p_i$, $q_i$) and ($p_j$, $q_j$) are the same line:\\ In this case (see Figure~\ref{fig:same_bisector}),
        we compute $\delta(\cal A)$ for ${\cal A}$ (see Definition~\ref{delta}). Then we choose a point $c'$ on $\overline{cq_i}$ at a distance $\epsilon \leq \frac{\delta(\cal A)}{2}$ from $c$ and construct an annulus ${\cal A}'$ centered at $c'$ with $|c'q_i|$ and $|c'q_j|$ as the radii of ${\cal C}_{out}(\cal A')$ and ${\cal C}_{in}(\cal A')$, respectively  (see Figure~\ref{fig:same_bisector}). In this case, all the blue points inside ${\cal A}$, except the blue point $p_i$, are covered by ${\cal A}'$. 
        \vspace{-0.5cm}
       \end{itemize}
      \end{itemize}
\end{proof}
\begin{figure}[ht]
 \centering
 \includegraphics[scale=1]{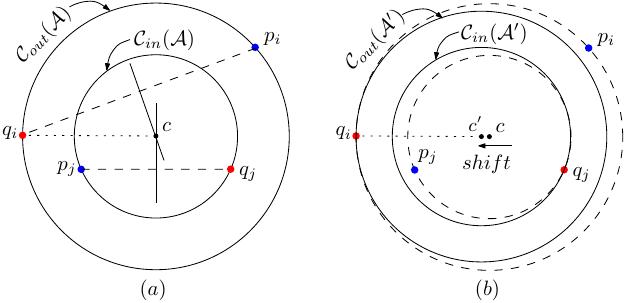}
  \caption{(a) Two red points $q_i$ and $q_j$ lie at the different sides of the perpendicular bisector, (b) After shifting of the center, both the blue points are removed from $\cal A'$.}
  \vspace{-0.1in}
    \label{fig:same_doublept}
   \end{figure} 
   \begin{figure}[ht]
 \centering
 \includegraphics[scale=1]{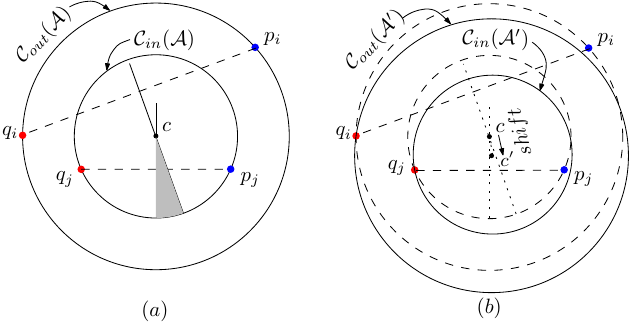}
  \caption{(a) Two red points $q_i$ and $q_j$ lie at the same side of the perpendicular bisector, (b) After shifting of the center, both the blue points are removed from $\cal A'$.}
  \vspace{-0.1in}
    \label{fig:diff_doublept}
   \end{figure} 
     \begin{figure}[ht]
 \centering
 \includegraphics[scale=1]{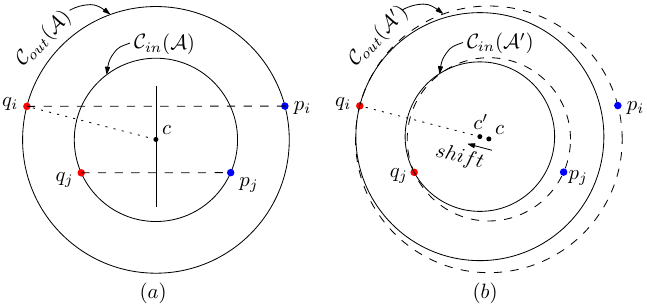}
  \caption{(a) Perpendicular bisectors of $\overline{q_ip_i}$ and $\overline{q_jp_j}$ are the same, and the red points $q_i$ and $q_j$ lie at the same side of the perpendicular bisector, (b) After shifting of the center, one blue point is removed from $\cal A'$.}
  \vspace{-0.1in}
    \label{fig:same_bisector}
   \end{figure} 
We describe our algorithm for $RBCAC$ problem as follows.\\[0.1in]
{\bf Algorithm:}
As a preprocessing task, we compute the Voronoi diagram $VD(R)$ and the farthest-point Voronoi diagram $FVD(R)$ of the given set $R$ of red points. We use two doubly connected edge lists ($DCEL$) to store $VD(R)$ and $FVD(R)$ separately, which need a linear space~\cite{BergCKO08}.
Throughout our algorithm, we always maintain an annulus ${\cal A}_{\min}$, which covers the minimum number of blue points among the annuli generated so far. 

We generate all the $feasible$ annuli in the following three cases, each one satisfying one of the different conditions for an annulus to be defined (see Property~\ref{default}).
\vspace{2cm}
\begin{description}
    \item[\bf Case (A):] {\bf \boldmath ${\cal C}_{out}(\cal A)$ and ${\cal C}_{in}(\cal A)$ are defined by three points and one point lying on its boundary, respectively.}\\
   The following three sub-cases are possible depending on the color of the defining points of ${\cal C}_{out}(\cal A)$.
    \begin{description}
    \item[\bf Case (A.1):] {\bf \boldmath All the defining points lying on the boundary of ${\cal C}_{out}(\cal A)$ are red.}\\
    Repeat the following tasks at each vertex, say $c$, of $FVD(R)$ to compute all the $feasible$ annuli $\cal A$ with ${\cal C}_{out}(\cal A)$ and ${\cal C}_{in}(\cal A)$ passing through three red points and a single red point, respectively.\\
We choose $c$ as the center of ${\cal C}_{out}(\cal A)$, which passes through the three red points and covers all the red points in $R$ (see Property~\ref{chap4prop1}). Then we search $VD(R)$ to find out the Voronoi cell that contains the center $c$, and we compute the red point, say $q_i$, closest from $c$ which determines the radius of ${\cal C}_{in}(\cal A)$.  Since $\cal A$ covers all the red points in $R$, the annulus $\cal A$ is a $feasible$ one. Next, we compute the number of blue points lying inside $\cal A$.  If the number of blue points inside ${\cal A}$ is less than that of ${\cal A}_{\min}$, then we update ${\cal A}_{\min}$ by taking $\cal A$ as ${\cal A}_{\min}$.\\ 
    \item[ \bf Case (A.2):] {\bf \boldmath ${\cal C}_{out}(\cal A)$ has one blue point and two red points on its boundary.}\\
    For each blue point $p_i\in B$, we execute the following tasks.\\
We update $FVD(R)$  to obtain $FVD_{i}$, i.e., $FVD(R\cup\{p_i\})$ in linear time.
We use  $FR(i)$ to denote the region in $FVD_i$, which is farthest from the blue point $p_i$. We choose each of the vertices of this farthest region $FR(i)$ as a center $c$ of ${\cal C}_{out}(\cal A)$ which passes through $p_i$ along with two other red points and covers all the red points in $R$. Next,  we compute ${\cal C}_{in}(\cal A)$ and the number of blue points inside ${\cal A}$, in the same way as described in Case~(A.1).
Then we compute another annulus ${\cal A'}$ whose center is at a small distance $\epsilon>0$, away from that of $\cal A$, so that $\cal A'$ covers exactly the same set of points as covered by $\cal A$, except the blue point $p_i$ (see Lemma~\ref{one_blue}).
So, for the annulus $\cal A$, we generate a corresponding annulus $\cal A'$ by eliminating the blue point lying on ${\cal C}_{out}(\cal A)$. We update ${\cal A}_{\min}$, if $\cal A'$ covers less number of blue points than ${\cal A}_{\min}$. 
    \item[\bf Case (A.3):] {\bf \boldmath ${\cal C}_{out}(\cal A)$ has two blue points and one red point on its boundary.}\\
    In this case, for each pair of blue points ($p_i$, $p_j$) $\in B\times B$, we execute the following tasks.\\
 We update $FVD(R)$ to obtain $FVD_{i,j}$, i.e., $FVD(R\cup\{p_i,p_j\})$. We consider only those vertices of $FVD_{i,j}$ as the center $c$ of ${\cal C}_{out}(\cal A)$ which passes through two blue points $p_i$, $p_j$ and one red point, say $q_k\in R$. Then, similar to Case~(A.1), we compute ${\cal C}_{in}(\cal A)$ and the number of blue points lying inside $\cal A$. Next, we follow Lemma~\ref{one_blue} to construct another annulus $\cal A'$ as described in Case~(A.2), and update ${\cal A}_{\min}$, if $\cal A'$ covers less number of blue points than ${\cal A}_{\min}$. 
  \end{description}
  \item[\bf Case (B):] {\bf \boldmath ${\cal C}_{out}(\cal A)$ has one point on its boundary and ${\cal C}_{in}(\cal A)$ has three points on its boundary.}\\
  Similar to Case~(A), it has the following three sub-cases depending on the color of the defining points of ${\cal C}_{in}(\cal A)$. 
  \begin{description}
      \item[ \bf Case (B.1):] {\bf \boldmath ${\cal C}_{in}(\cal A)$ has three red points on its boundary.}\\
      Repeat the following tasks at each vertex, say $c$, of $VD(R)$ to compute all the $feasible$ annuli $\cal A$ with ${\cal C}_{in}(\cal A)$ and ${\cal C}_{out}(\cal A)$ passing through three red points and a single red point, respectively.\\
      We choose $c$ as the center of ${\cal C}_{in}(\cal A)$, which passes through three red points and does not contain any other points in $R$ (see Property~\ref{chap4prop1}). Then we search in $FVD(R)$  to find the Voronoi site (a red point), say $q_k\in R$ so that the farthest region of $q_k$ in $FVD(R)$ contains the point $c$. The red point farthest from $c$ is $q_k$, which determines the radius of ${\cal C}_{out}(\cal A)$.  Since the annulus $\cal A$ covers all the red points in $R$, $\cal A$ is a $feasible$ annulus. We compute the number of blue points lying inside $\cal A$, and if necessary, we update ${\cal A}_{\min}$.     
      \item[ \bf Case (B.2):] {\bf \boldmath ${\cal C}_{in}(\cal A)$ has two red points and one blue point on its boundary.}\\ 
       In this case, for each blue point $p_i\in B$, we execute the following tasks.\\
       We update $VD(R)$ to obtain $VD_{i}$, i.e., $VD(R\cup\{p_i\})$, and let $VC(i)$ be the Voronoi cell of the site (blue point) $p_i$ in $VD_i$. We choose each vertex of the boundary of $VC(i)$ as a center $c$ of ${\cal C}_{in}(\cal A)$ which passes through $p_i$ and two other red points in $R$, and does not contain any other red points in $R$. Next,  we compute ${\cal C}_{out}(\cal A)$ and the number of blue points inside $\cal A$ in the same way as described in Case~(B.1).  Once we generate the annulus $\cal A$, we follow Lemma~\ref{one_blue} to construct another annulus ${\cal A}'$, similar to Case~(A.2) and then update ${\cal A}_{\min}$, if necessary.
      \item[ \bf Case (B.3):] {\bf \boldmath ${\cal C}_{in}(\cal A)$ has one red point and two blue points on its boundary.}\\ 
      In this case, we do the following tasks for each pair of blue points ($p_i$, $p_j$) $\in B\times B$.\\
We update $VD(R)$ to obtain $VD_{i,j}$, i.e., $VD(R\cup\{p_i,p_j\})$.
We consider only those vertices of $VD_{i,j}$ as the center $c$ of ${\cal C}_{in}(\cal A)$ which passes through two blue points $p_i$, $p_j$ and a red point, say $q_k\in R$. Then, we use $FVD(R)$ to find out the farthest red point from $c$, and thus we obtain ${\cal C}_{out}(\cal A)$. Then similar to Case~(A.2), we follow Lemma~\ref{one_blue} to construct another annulus ${\cal A}'$ and update ${\cal A}_{\min}$, if necessary. 
  \end{description}
  \item[\bf Case (C):] {\bf \boldmath ${\cal C}_{out}(\cal A)$ and ${\cal C}_{in}(\cal A)$ have two points on their boundaries.}\\
    Depending on the color of the defining points of ${\cal C}_{in}(\cal A)$ and ${\cal C}_{out}(\cal A)$, we consider the following exhaustive four sub-cases.
  \begin{description}
      \item[ \bf Case (C.1):] {\bf \boldmath The two points on the boundaries of both ${\cal C}_{in}(\cal A)$ and ${\cal C}_{out}(\cal A)$ are red.}\\ 
      The center of the annulus $\cal A$ must lie on the edges of $VD(R)$ and $FVD(R)$ (see Property~\ref{chap4prop2}) so that ${\cal C}_{out}(\cal A)$ and ${\cal C}_{in}(\cal A)$ pass through two points. This implies that the center of all such annuli is the points of intersection of the edges of $VD(R)$ with that of $FVD(R)$. At each such point of intersection, we compute ${\cal C}_{in}(\cal A)$ and ${\cal C}_{out}(\cal A)$ which satisfy Property~\ref{chap4prop2} and thus $\cal A$ is a $feasible$ annulus. We count the number of blue points inside $\cal A$ and update ${\cal A}_{\min}$, if necessary.
      \item[ \bf Case (C.2):] {\bf \boldmath ${\cal C}_{out}(\cal A)$ has one red point and one blue point on its boundary, whereas both the points on ${\cal C}_{in}(\cal A)$ are red.}\\ 
      In this case, for each blue point $p_i\in B$, we execute the following.\\
  We update $FVD(R)$ to obtain $FVD_{i}$, i.e., $FVD(R\cup\{p_i\})$. We choose the farthest region (denoted by $FR(i)$) of the blue point $p_i$ in $FVD_i$ and find out the points of intersection of the boundary of $FR(i)$ with the edges of $VD(R)$. We choose each such point of intersection as the center $c$ of  ${\cal C}_{out}(\cal A)$, which passes through $p_i$ and a red point in $R$. We use $VD(R)$ to obtain an empty circle $\cal C$ centered at $c$ passing through two red points in $R$, and we take ${\cal C}$ as ${\cal C}_{in}(\cal A)$. Then we construct an annulus ${\cal A}'$ whose center is $\epsilon$ distance away from that of $\cal A$ (see Lemma~\ref{two_blue_1}) so that $\cal A'$ contains the same set of points that are covered by $\cal A$, except the blue point $p_i$. Then we count the number of blue points inside $\cal A'$, and if it is less than ${\cal A}_{\min}$, we update ${\cal A}_{\min}$.
      \item[ \bf Case (C.3):] {\bf \boldmath ${\cal C}_{in}(\cal A)$ has one red point and one blue point on its boundary, whereas both the points on ${\cal C}_{out}(\cal A)$ are red.}\\ 
      In this case, we execute the following tasks for each blue point $p_i\in B$.\\
We update $VD(R)$ to obtain $VD_{i}$, i.e., $VD(R\cup\{p_i\})$. In $VD_i$, we choose the Voronoi cell $VC(i)$ of the blue point $p_i$, whose boundary consists of at most $(n-1)$ edges. Next, we find out the point of intersection of the boundary of $VC(i)$ with the edges of $FVD(R)$. We choose each such point of intersections as the center $c$ of ${\cal C}_{in}(\cal A)$ that passes through $p_i$ and a red point, say $q_j\in R$. 
We use $FVD(R)$ to obtain a circle ${\cal C}_{out}(\cal A)$ centered at $c$, which covers all the red points in $R$ and passes through two red points in $R$. Then similar to Case~(C.2), we follow Lemma~\ref{two_blue_1} to construct another annulus ${\cal A}'$ and update ${\cal A}_{\min}$, if necessary.
      \item[ \bf Case (C.4):] {\bf \boldmath Both ${\cal C}_{out}(\cal A)$ and ${\cal C}_{in}(\cal A)$ have one red point and one blue point on their boundaries.}\\  
      In this case, for each pair of blue points ($p_i$, $p_j$)$\in B\times B$, we do the following.\\
Compute $VD_{i}$, i.e., $VD(R\cup\{p_i\})$ and $FVD_{j}$, i.e., $FVD(R\cup\{p_j\})$. Choose the Voronoi cell $VC(i)$ of the blue point $p_i$ in $VD_i$ that has at most $(n-1)$ edges along the boundary of $VC(i)$. Also consider $FR(j)$, the farthest region of the blue point $p_j$ in $FVD_j$. Then we find the points of intersections (at most $2(n-1)$ in number) between the boundaries of the cells $VC(i)$ and $FR(j)$. We choose these points of intersections as the center of the annulus. The circle ${\cal C}_{out}(\cal A)$ (resp. ${\cal C}_{in}(\cal A)$) passes through $p_j$ (resp. $p_i$) and a red point, say $q_j\in R$ (resp. $q_i\in R$). Then similar to Case~(C.2), we follow Lemma~\ref{two_blue} to construct another annulus ${\cal A}'$ and update ${\cal A}_{\min}$, if necessary.
  \end{description}
  Finally we report the annulus ${\cal A}_{\min}$.
\end{description}

\noindent {\bf Remarks:} Note that as a special case, one of the circles (outer or inner) may be defined by two diametrically opposite points, and the other one is defined by a single point. 
There are two possibilities as follows:
\begin{description}
   \item[(i)] ${\cal C}_{out}(\cal A)$ is defined by two diametric points:\\
To generate an annulus $\cal A$ with ${\cal C}_{out}(\cal A)$, we first compute the diameter (i.e., the farthest pair) of the point set $R$ from $FVD(R)$ and check whether all the points in $R$ can be covered by a circle centered at the midpoint of such a farthest pair. If so, then we take that circle as ${\cal C}_{out}(\cal A)$ and compute the corresponding ${\cal C}_{in}(\cal A)$ in a similar way as described above in the algorithm.
Similarly, we can also construct the annulus $\cal A$ with its ${\cal C}_{in}(\cal A)$ being defined by a single red point and a single blue point $p_i$ using $FVD_i$, as described above in our algorithm.
\item[(ii)]  ${\cal C}_{in}(\cal A)$ is defined by two diametric points:\\
We consider the two sites (red points), say $q_i,q_j\in R$ of the two adjacent Voronoi cells in $VD(R)$, and choose the point $c$ on the edge which is equidistant from this pair. We take $c$ as a center of $\cal A$ and then find out the red point farthest from $c$ using $FVD(R)$, and follow the same procedure in our algorithm to generate $\cal A$. Similarly, we handle the case, if one of the defining points of ${\cal C}_{in}(\cal A)$ is red.
\end{description}

\noindent{\bf Proof of correctness:} The correctness of the algorithm is based on the following claim.
\begin{clam}
    Our algorithm generates all the $feasible$ annuli.
\end{clam}
\begin{proof}
Our algorithm always generates all possible annuli $\cal A$ following Property~\ref{default} so that it covers all the red points in $R$. When one or two of the defining points of $\cal A$ are blue, we generate its corresponding annulus $\cal A'$ in such a way that $\cal A'$  discards such blue-colored defining points without eliminating any red ones. Hence, our algorithm generates all possible annuli which are $feasible$. 
\end{proof}
Since our algorithm chooses the $feasible$ annulus ${\cal A}_{\min}$ that covers the minimum number of blue points, it provides the optimal result.

\begin{theo}
  The algorithm for $RBCAC$ problem computes the optimal result in $O(m^2n(m+n))$ time and linear space.  
\end{theo}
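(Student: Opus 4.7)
The plan is to bound the running time of each case in the algorithm by counting the number of candidate annuli generated and multiplying by the cost of processing each one, then verifying that the space usage never exceeds the size of a single Voronoi diagram. First I would isolate the basic subroutines and their costs: the preprocessing that builds $VD(R)$ and $FVD(R)$ in $O(n\log n)$ time; updating a (farthest-point) Voronoi diagram on insertion of one or two points in $O(n)$ time by Property~\ref{update}; point location in $VD(R)$ or $FVD(R)$ in $O(\log n)$ time via a standard DCEL-based location structure; computing $\delta(\mathcal{A})$ and the perturbed annulus $\mathcal{A}'$ furnished by Lemmas~\ref{one_blue}, \ref{two_blue_1} and \ref{two_blue} in $O(m+n)$ time; and counting the number of blue points lying inside a candidate annulus in $O(m)$ time.

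Next I would analyse each case by its candidate count times its per-candidate cost. Cases~(A.1) and (B.1) scan the $O(n)$ vertices of $FVD(R)$ or $VD(R)$ and pay $O(m+\log n)$ per vertex, giving $O(n(m+\log n))$. Cases~(A.2) and (B.2) iterate over each blue point $p_i$, rebuilding $FVD_i$ or $VD_i$ in $O(n)$ time, then processing the $O(n)$ vertices of the farthest region $FR(i)$ or the Voronoi cell $VC(i)$ at cost $O(m+n)$ per vertex, totalling $O(mn(m+n))$. Cases~(A.3) and (B.3) iterate over the $O(m^2)$ blue pairs $(p_i,p_j)$, again examining $O(n)$ vertices at $O(m+n)$ each, for a total of $O(m^2 n(m+n))$. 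Case~(C.1) examines the $O(n^2)$ intersections between the edges of $VD(R)$ and $FVD(R)$, each at cost $O(m+n)$, giving $O(n^2(m+n))$. Cases~(C.2) and (C.3) intersect the boundary of $FR(i)$ (resp.\ $VC(i)$) with the edges of $VD(R)$ (resp.\ $FVD(R)$), yielding $O(n)$ centres per blue point and $O(mn(m+n))$ in total. Finally, Case~(C.4) intersects $VC(i)$ with $FR(j)$ for the $O(m^2)$ blue pairs, producing at most $2(n-1)$ candidate centres per pair and thus $O(m^2 n(m+n))$ overall.

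Summing all contributions, the dominating terms arise from Cases~(A.3), (B.3) and (C.4), giving an aggregate running time of $O(m^2 n(m+n))$, which absorbs the preprocessing and all smaller cases. For the space bound I would note that the two DCELs storing $VD(R)$ and $FVD(R)$ use $O(n)$ space, while each auxiliary diagram $VD_i$, $FVD_i$, $VD_{i,j}$, $FVD_{i,j}$ is constructed from the base diagram in place (or in a second $O(n)$-sized scratch DCEL that is reused across iterations); together with the $O(m+n)$ input points, the sorted orderings, and the constant-size record for the current best annulus $\mathcal{A}_{\min}$, the working storage stays $O(m+n)$.

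The main obstacle is justifying the per-candidate bound: I must argue that in each of Cases~(A.2)--(C.4) the enumeration visits only those vertices or edge intersections relevant to the colour configuration of that case, so the $O(n)$-per-iteration count is tight and no hidden amortised blow-up occurs; and that the shifted annulus $\mathcal{A}'$ produced via Lemmas~\ref{one_blue}, \ref{two_blue_1} and~\ref{two_blue} can indeed be realised in $O(m+n)$ time (the computation of $\delta(\mathcal{A})$ and the subsequent re-count of blue points inside $\mathcal{A}'$ are the binding subroutines). Once these two points are settled, the remaining additions and the space accounting are routine.
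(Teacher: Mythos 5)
Your proposal matches the paper's proof essentially step for step: the same inventory of subroutine costs (preprocessing, $O(n)$ Voronoi updates via Property~\ref{update}, $O(\log n)$ point location, $O(m+n)$ for $\delta(\mathcal{A})$ and the perturbation, $O(m)$ for counting), the same case-by-case candidate counts with Cases~(A.3), (B.3) and (C.4) dominating at $O(m^2n(m+n))$, and the same reuse-of-scratch-diagram argument for linear space. The only divergence is your $O(mn(m+n))$ bound for Cases~(C.2)--(C.3) versus the paper's $O(n^2m)$, and both are absorbed by the final bound, so the proof is correct and essentially identical to the paper's.
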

\begin{proof}
 We have presented the algorithm along with its correctness proof above. Now we analyze the time complexity of the algorithm, which is mainly determined by the following tasks:
 \begin{itemize}
\setlength\itemsep{0.5em}
     \item[(i)] The preprocessing task, i.e., the computation of $VD(R)$ and $FVD(R)$ needs $O(n\log n)$ time.
     \item[(ii)] We update $VD(R)$ (resp. $FVD(R)$) to obtain the  Voronoi (resp. farthest-point Voronoi) diagram for the point sets $R\cup\{p_i\}$, $R\cup\{p_i,p_j\}$ in $O(n)$ time.
     \item[(iii)] In $VD(R)$ (resp. $FVD(R)$), we search for the Voronoi cell (resp. farthest region) in which the center $c$ of ${\cal C}_{out}(\cal A)$ (resp. ${\cal C}_{in}(\cal A)$) lies on, and it needs $O(\log n)$ time. Then we compute the corresponding ${\cal C}_{in}(\cal A)$ (resp. ${\cal C}_{out}(\cal A)$) in constant time. 
   \item[(iv)] The computation of $\delta(\cal A)$ needs to determine the distances of all the red and blue points from ${\cal C}_{in}{(\cal A)}$ and ${\cal C}_{out}{(\cal A)}$. Hence, it requires $O(m+n)$ time. In other words, the computation of the small distance $\epsilon$ to shift the center of ${\cal A}$ needs $O(m+n)$ time.
   \item[(v)] Counting the number of blue points lying inside $\cal A$ or $\cal A'$ needs $O(m)$ time.
 \end{itemize}
The algorithm needs $O(n\log n)$ preprocessing time. The number of vertices in $FVD(R)$ as well as $VD(R)$ are $O(n)$. Case~(A.1), needs $O(n(m+\log n))$ time to choose ${\cal A}_{\min}$. 
In Case~(A.2), $FVD(R)$ is updated $m$ times to compute $FVD_i$ for each $m$ different blue points $p_i\in B$. This case needs $O(mn(m+n+\log n))$ time, i.e., $O(mn(m+n))$.
In Case~(A.3), we  compute $m\choose 2$ number of $FVD_{i,j}$ by updating $FVD(R)$. Hence, Case~(A.3) requires $O(m^2n(m+n))$ time.

Similarly, Case~(B.1) needs $O(n(m+\log n))$ time.
Case~(B.2) and Case~(B.3) need the same time as that of Case~(A.2) and Case~(A.3), respectively. 

Case (C.1) needs $O(n^2(m+n))$ time to compute ${\cal A}_{\min}$ since there are $O(n^2)$ numbers of possible points of intersections of the edges of $VD(R)$ with that of $FVD(R)$.
Each of Case~(C.2) and Case~(C.3) needs $O(n^2m)$ time to compute ${\cal A}_{\min}$.
 In Case~(C.4), the number of centers of all possible annuli $\cal A$  whose outer (resp. inner) circle is defined by $p_i$ (resp. $p_j$), is at most $2n$ which is determined by the points of intersection of boundaries of $VC(i)$ and $FR(j)$. Hence, it needs $O(nm)$ time to count the number of blue points inside all possible annuli whose ${\cal C}_{out}(\cal A)$ and ${\cal C}_{in}(\cal A)$ are defined by the blue points $p_i$ and $p_j$, respectively. There are $m\choose 2$ pairs of blue points; hence, it needs $O(m^2n(m+n))$ time to compute the optimal annulus. 
 Hence, the overall time complexity of the algorithm is $O(m^2n(m+n))$.\\ 
We can reuse the same storage space to store $FVD_i$ (resp. $VD_i$) for different values of $i$. It is also true for $FVD_{i,j}$ (resp. $VD_{i,j}$). Hence, it needs linear space. 
\end{proof}

\subsection{Generalized Red-Blue Circular Annulus Cover problem}
\label{joco8}
 In the generalized version, all the given bichromatic points in ${\mathbb R}^2$ are associated with different penalties. We compute an annulus $\cal A$ that has Property~\ref{default}, and we generate them with all possible color combinations of their defining points as stated in each of the different cases in the algorithm for $RBCAC$ problem (discussed in Section~\ref{joco7}). However, note that to have a minimum penalty, all the red points need not lie inside $\cal A$. For this reason, we need to generate all possible annuli exhaustively, except the case where all the defining points of ${\cal C}_{out}(\cal A)$ or ${\cal C}_{in}(\cal A)$ are blue (see Observation~\ref{single_red}). 

\begin{lema}
\label{circle_penalty_1}
    If one of the two circles ${\cal C}_{out}(\cal A)$ and ${\cal C}_{in}(\cal A)$, is defined by three points, then we can compute the annulus with a minimum penalty in $O(n(m+n)^3\log (m+n))$ time and $O(m+n)$ space. 
\end{lema}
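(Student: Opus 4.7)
The plan is to enumerate all candidate three-point circles and, for each, find the optimal concentric single-point defining circle via a distance sweep. Consider first the case where ${\cal C}_{out}({\cal A})$ is defined by three points. By Observation~\ref{single_red}, at least one of those defining points must be red, so the number of candidate outer circles is bounded by $n \cdot \binom{m+n-1}{2} = O(n(m+n)^2)$. For each such outer circle with center $c$ and radius $r$, I would compute the optimal radius $r' \in [0, r]$ of the concentric inner circle that minimizes the penalty $\lambda$.

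The key observation is that once $c$ and $r$ are fixed, $\lambda$ viewed as a function of $r'$ is piecewise constant, changing value only when $r'$ crosses the distance from $c$ to some input point. Red points lying outside ${\cal C}_{out}$ contribute a constant term to $\lambda$ which I would compute once per outer circle. For a point at distance $d \leq r$ from $c$, increasing $r'$ past $d$ moves a red point from the annulus (covered, contributing $0$) into ${\cal C}_{in}$ (uncovered, contributing ${\cal P}(q)$), so $\lambda$ jumps up by ${\cal P}(q)$; crossing a blue point analogously decreases $\lambda$ by ${\cal P}(p)$. I would therefore sort all $m+n$ points by distance from $c$ in $O((m+n)\log(m+n))$ time, initialize $\lambda$ for $r'=0^+$ (so every point inside ${\cal C}_{out}$ is in the annulus), sweep through the sorted distances applying $O(1)$ updates per step, and track the running minimum. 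This yields the optimal ${\cal C}_{in}$ for a given outer circle in $O((m+n)\log(m+n))$ time; combined with the $O(n(m+n)^2)$ outer circles it gives the claimed $O(n(m+n)^3\log(m+n))$ bound.

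The symmetric case, where ${\cal C}_{in}({\cal A})$ is defined by three points, is handled analogously by enumerating $O(n(m+n)^2)$ inner circles and sweeping to find the optimal outer radius $r \geq r'$ per inner circle. Throughout, space stays at $O(m+n)$, since each candidate circle reuses the same sorted-distance array and penalty counter and no global data is accumulated across iterations beyond the best annulus seen so far.

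The main obstacle I anticipate is verifying that evaluating $\lambda$ only at the breakpoints of the sweep suffices to locate the true optimum. This follows from $\lambda$ being piecewise constant between consecutive point-distances, so the minimum over all $r' \in [0,r]$ is attained (equally) on one of the $O(m+n)$ intervals the sweep examines. A small boundary subtlety is that if an optimal configuration nominally places a blue point on ${\cal C}_{in}$ or ${\cal C}_{out}$, Observation~\ref{single_red} together with the $\epsilon$-shift argument behind Lemma~\ref{one_blue} lets us perturb the radius infinitesimally to strictly exclude that blue point from the relevant boundary without altering any other point's covered/uncovered status; the sweep captures this by evaluating $\lambda$ just after (respectively just before) the corresponding blue breakpoint.
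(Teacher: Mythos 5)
Your proposal is correct and follows essentially the same route as the paper: enumerate the $O(n(m+n)^2)$ candidate three-point circles (at least one defining point red by Observation~\ref{single_red}), sort all points by distance from each center in $O((m+n)\log(m+n))$ time, and find the optimal concentric partner circle by a linear scan over the sorted distances. Your explicit description of the piecewise-constant penalty sweep and the $\epsilon$-perturbation at blue breakpoints fills in details the paper leaves implicit, but the decomposition and the resulting $O(n(m+n)^3\log(m+n))$ bound are the same.
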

\begin{proof}
    If ${\cal C}_{out}(\cal A)$ is defined by three points, then we can compute all possible such outer circles in $O(n(m+n)^2)$ ways since at least one of the defining points of ${\cal C}_{out}(\cal A)$ must be red. For each center $c$ of ${\cal C}_{out}(\cal A)$, we sort all the points in $R\cup B$ with respect to their distances from $c$ in $O((m+n)\log (m+n))$ time, and then compute ${\cal C}_{in}(\cal A)$ so that the annulus $\cal A$ has the minimum penalty. Computation of such ${\cal C}_{in}(\cal A)$ needs $O(m+n)$ time. Hence, we need  $O(n(m+n)^3\log (m+n))$ time. 

Similarly, if ${\cal C}_{in}(\cal A)$ is defined by three points, we need the same amount of time to generate the annulus with the minimum penalty. We need linear space to store all the points. Hence the result follows. 
\end{proof}

\begin{lema}
\label{circle_penalty_2}
    If both the circles ${\cal C}_{out}(\cal A)$ and ${\cal C}_{in}(\cal A)$, are defined by two points, then we can compute the annulus with a minimum penalty in $O(n^2(m+n)^3)$ time and $O(m+n)$ space. 
\end{lema}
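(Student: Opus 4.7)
The plan is to enumerate all geometrically distinct annuli in which both defining circles are pinned down by exactly two boundary points, compute the penalty of each in linear time, and return the minimum. The combinatorial structure will be kept small by invoking Observation~\ref{single_red}, which forces at least one of the two defining points of $\mathcal{C}_{out}(\mathcal{A})$ and at least one of the two defining points of $\mathcal{C}_{in}(\mathcal{A})$ to be red in an optimal annulus; all other defining slots may be filled by either a red or a blue point.

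First I would count the number of candidate defining pairs for a single circle. A pair with at least one red point is either a red--red pair, contributing $\binom{n}{2}=O(n^2)$ choices, or a red--blue pair, contributing $nm$ choices, for a total of $O(n(m+n))$ candidate pairs per circle. I would then iterate over all ordered combinations of one such pair $\{a,b\}$ for $\mathcal{C}_{out}(\mathcal{A})$ and one such pair $\{c,d\}$ for $\mathcal{C}_{in}(\mathcal{A})$; this gives $O(n^2(m+n)^2)$ combinations in total.

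For each fixed combination, the center of the annulus must be equidistant from $a$ and $b$ and simultaneously equidistant from $c$ and $d$, so it lies at the intersection of the perpendicular bisectors of $\overline{ab}$ and $\overline{cd}$. I would compute this intersection in $O(1)$ time, discarding the combination if the bisectors are parallel (no valid annulus with this pair assignment) or if the resulting ``inner'' radius is not strictly smaller than the ``outer'' radius. With the center and the two radii in hand, I would evaluate the penalty
\[
\lambda = \sum_{q\in R^{out}}\mathcal{P}(q)+\sum_{p\in B^{in}}\mathcal{P}(p)
\]
by scanning all $m+n$ input points once, checking for each point whether it lies inside, outside, or within the annular region. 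This evaluation step is the dominant per-combination cost and takes $O(m+n)$ time; throughout the enumeration I would maintain the best annulus seen so far. Summing over all combinations yields $O(n^2(m+n)^2)\cdot O(m+n)=O(n^2(m+n)^3)$ time, while the only persistent storage is the input point set together with a constant amount of bookkeeping, hence $O(m+n)$ space.

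Correctness follows because Property~\ref{default}(iii) says every annulus of the type considered is uniquely identified by its two defining pairs, and Observation~\ref{single_red} guarantees that an optimal annulus falls into one of the enumerated (red-heavy) combinations; our algorithm examines every such combination and returns the minimum penalty. The only subtlety I anticipate is the boundary handling: a candidate point may coincide with a defining point and thus lie exactly on $\mathcal{C}_{out}(\mathcal{A})$ or $\mathcal{C}_{in}(\mathcal{A})$. This is the main obstacle but is easily resolved by fixing a consistent tie-breaking rule (counting a defining point as covered) so that every $feasible$ annulus is accounted for exactly once and the reduction of the problem to enumerating the $O(n^2(m+n)^2)$ combinations remains valid.
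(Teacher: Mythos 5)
Your proposal is correct and follows essentially the same route as the paper: both arguments use the fact that each of the two defining pairs must contain a red point (Observation~\ref{single_red}) to bound the number of candidate annuli by $O(n(m+n))\cdot O(n(m+n))=O(n^2(m+n)^2)$, and then evaluate the penalty of each candidate by a linear scan in $O(m+n)$ time. Your write-up merely makes explicit some details the paper leaves implicit (the center as the intersection of the two perpendicular bisectors, the degenerate parallel-bisector case, and boundary tie-breaking).
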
   
\begin{proof}
 Each of the two circles ${\cal C}_{out}(\cal A)$ and ${\cal C}_{in}(\cal A)$ have at least one red point that defines it, so we have $O(n^2(m+n)^2)$ possible positions of the center of the annulus ${\cal A}$. For each such annulus $\cal A$, we compute the points ($\in$ $R \cup B$) lying inside $\cal A$ and its penalty in $O(m+n)$ time. Hence, our result is proved. 
\end{proof}

\noindent We obtain the following result by combining Lemma~\ref{circle_penalty_1} and Lemma~\ref{circle_penalty_2}.
\begin{theo}
    The optimal solution to the Generalized Red-Blue circular annulus problem ($GRBCAC$) can be computed in $O(n^2(m+n)^3)$ time and linear space.
\end{theo}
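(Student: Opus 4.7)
The plan is to enumerate all candidate optimal annuli according to the structural characterization given by Property~\ref{default}, handle each structural case with the matching lemma, and take the best across cases. By Observation~\ref{single_red}, the defining points on each of ${\cal C}_{out}({\cal A})$ and ${\cal C}_{in}({\cal A})$ cannot all be blue in any optimal solution, so at least one defining point of every bounding circle is red. This will keep the number of candidate configurations we must enumerate polynomial in $n$ and $m+n$.

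First, for configurations (i) and (ii) of Property~\ref{default}, i.e., where either ${\cal C}_{out}({\cal A})$ or ${\cal C}_{in}({\cal A})$ is defined by three points, I would invoke Lemma~\ref{circle_penalty_1}. That lemma enumerates all possible three-point circles (constrained to contain at least one red defining point), and for each such fixed circle of the pair it computes the companion circle that yields minimum penalty by sorting $R \cup B$ by distance from the common center and sweeping. The running time for these two sub-cases is $O(n(m+n)^3 \log(m+n))$.

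Next, for configuration (iii), i.e., where both ${\cal C}_{out}({\cal A})$ and ${\cal C}_{in}({\cal A})$ are defined by two points each, I would invoke Lemma~\ref{circle_penalty_2}. Since each circle must contain at least one red point among its two defining points, there are $O(n^2(m+n)^2)$ candidate centers (obtained as intersections of perpendicular bisectors for the various colour patterns). For each candidate center, one evaluates the set of points in $R\cup B$ lying inside the induced annulus and its penalty in $O(m+n)$ time, giving $O(n^2(m+n)^3)$ time overall.

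Finally, I would take the minimum-penalty annulus over the outputs of the two subroutines. The total running time is bounded by the larger of $O(n(m+n)^3 \log(m+n))$ and $O(n^2(m+n)^3)$, which is $O(n^2(m+n)^3)$. The space bound is inherited from the two lemmas: storage for the point set, a fixed Voronoi / farthest-point Voronoi structure, and the current best annulus, all of which can be reused across iterations, yielding $O(m+n)$ space. The main obstacle is not algorithmic but book-keeping: ensuring that the case split from Property~\ref{default} together with Observation~\ref{single_red} truly exhausts all colour patterns of defining points without double-counting, and that the enumeration within each lemma covers every optimal candidate center even in the degenerate case where a bounding circle is determined by a diametric pair, which is handled exactly as in the Remarks following the algorithm for the $RBCAC$ problem.
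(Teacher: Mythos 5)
Your proposal matches the paper's own argument: the paper proves this theorem exactly by combining Lemma~\ref{circle_penalty_1} (for the configurations where one bounding circle has three defining points) with Lemma~\ref{circle_penalty_2} (for the two-points-each configuration) and taking the better annulus, which is precisely your decomposition via Property~\ref{default} and Observation~\ref{single_red}. The approach and the resulting time and space bounds are the same, so there is nothing further to reconcile.
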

\section{Concluding Remarks}
We have studied various rectangular and circular annulus cover problems for a given bichromatic point-set in one and two dimensions. Each red point in the bichromatic point-set has a non-covering penalty, whereas the blue point has a covering penalty. We design polynomial-time algorithms for each variation. 
Improving the running time of the generalized annulus cover problems in the future is interesting. 
Also, designing all these algorithms in higher dimensions remains challenging as a future work.
\bibliographystyle{spmpsci}
\bibliography{biblio}
\end{document}